\def\withcolors{0}
\def\withnotes{0}
	\theoremstyle{plain} 
	\newtheorem{theorem}{Theorem}[section]
	\newaliascnt{coro}{theorem}
	\newtheorem{corollary}[coro]{Corollary}
	\newaliascnt{lem}{theorem}
	\newtheorem{lemma}[lem]{Lemma}
	\newaliascnt{clm}{theorem}
	\newaliascnt{fact}{theorem}
	\newtheorem{fact}[theorem]{Fact}
	\newaliascnt{prop}{theorem}
	\newaliascnt{obs}{theorem}
	\newtheorem{observation}[obs]{Observation}
	\newaliascnt{conj}{theorem}
	\theoremstyle{remark} 
	\newtheorem{remark}[theorem]{Remark}
	\theoremstyle{definition} 
	\newaliascnt{defn}{theorem}
	\newtheorem{definition}[defn]{Definition}
\providecommand{\email}[1]{\href{mailto:#1}{\nolinkurl{#1}\xspace}}
\newcommand{\ignore}[1]{\leavevmode\unskip} 
\newcommand\restr[2]{{
		\left.\kern-\nulldelimiterspace 
		#1 
		\vphantom{\big|} 
		\right|_{#2} 
}}
\newcommand\ackname{Acknowledgements}
\newenvironment{acknowledgements}{%
	\titlepage
	\null\vfil
	\@beginparpenalty\@lowpenalty
	\begin{center}%
		\bfseries \ackname
		\@endparpenalty\@M
\end{center}}%
{\par\vfil\null\endtitlepage}
\begin{document}

\title{Earthmover Resilience and Testing in Ordered Structures}
\author{Omri Ben-Eliezer\thanks{Blavatnik School of
		Computer Science, Tel Aviv University, Tel Aviv 69978, Israel.
		{\tt omrib@mail.tau.ac.il}.} ,
	Eldar Fischer\thanks{Faculty of Computer Science, Israel Institute of Technology (Technion), Haifa, Israel. {\tt eldar@cs.technion.ac.il}.}}

\begin{titlepage}
\date{ }
\maketitle
\thispagestyle{empty}
\begin{abstract}
One of the main challenges in property testing is to characterize those properties that are testable with a constant number of queries. For unordered structures such as graphs and hypergraphs this task has been mostly settled. However, for ordered structures such as strings, images, and ordered graphs, the characterization problem seems very difficult in general.

In this paper, we identify a wide class of properties of ordered structures -- the \emph{earthmover resilient} (ER) properties -- and show that the ``good behavior'' of such properties allows us to obtain general testability results that are similar to (and more general than) those of unordered graphs. A property $\mathcal{P}$ is ER if, roughly speaking, slight changes in the order of the elements in an object satisfying $\mathcal{P}$ cannot make this object far from $\mathcal{P}$.
The class of ER properties includes, e.g., all unordered graph properties, many natural visual properties of images, such as convexity, and all hereditary properties of ordered graphs and images. 

A special case of our results implies, building on a recent result of Alon and the authors, that the distance of a given image or ordered graph from \emph{any} hereditary property can be estimated (with good probability) up to a constant additive error, using a constant number of queries. 

\end{abstract}
\end{titlepage}

\section{Introduction} 
\label{sec:introduction}
\emph{Property testing} is mainly concerned with understanding the amount of information one needs to extract from an unknown input function $f$ to approximately determine whether the function satisfies a property $\mathcal{P}$ or is far from satisfying it. In this paper, the types of functions we consider are \emph{strings} $f \colon [n] \to \Sigma$; \emph{images} or \emph{matrices} $f \colon [m] \times [n] \to \Sigma$; and \emph{edge-colored graphs} $f \colon \binom{[n]}{2} \to \Sigma$, where the set of possible colors for each edge is $\Sigma$. In all cases $\Sigma$ is a \emph{finite} alphabet. Note that the usual notion of a graph corresponds to the special case where $|\Sigma| = 2$.

The systematic study of property testing was initiated by Rubinfeld and Sudan \cite{RubinfeldSudan1996}, and Goldreich, Goldwasser and Ron \cite{GoldreichGoldwasserRon1998} were the first to study property testing of combinatorial structures. An $\epsilon$-\emph{test} for a property $\mathcal{P}$ of functions $f:X \to \Sigma$ is an algorithm that, given query access to an unknown input function $f$, distinguishes with good probability (say, with probability $2/3$) between the case that $f$ satisfies $\mathcal{P}$ and the case that $f$ is \emph{$\epsilon$-far} from $\mathcal{P}$; the latter meaning that one needs to change the values of at least an $\epsilon$-fraction of the entries of $f$ to make it satisfy $\mathcal{P}$. 
In an $n$-vertex graph, for example, changing an $\epsilon$-fraction of the representation means adding or removing $\epsilon \binom{n}{2}$ edges. (The representation model we consider here for graphs is the adjacency matrix. This is known as the \emph{dense model}.)

In many cases, such as that of visual properties of images (where the input is often noisy to some extent), it is more natural to consider a robust variant of tests, that is \emph{tolerant} to noise in the input. Such tests were first considered by Parnas, Ron and Rubinfeld \cite{ParnasRonRubinfeld2006}. 
A test is \emph{$(\epsilon, \delta)$-tolerant} for some $0 \leq \delta(\epsilon) < \epsilon$ if it distinguishes, with good probability, between inputs that are \emph{$\epsilon$-far} from satisfying $\mathcal{P}$ and those that are \emph{$\delta(\epsilon)$-close} to (i.e., not $\delta(\epsilon)$-far from) satisfying $\mathcal{P}$.

One of the main goals in property testing is to characterize properties in terms of the number of queries required by an optimal test for them. 
If a property $\mathcal{P}$ has, for any $\epsilon > 0$, an $\epsilon$-test that makes a constant number of queries, depending only on $\epsilon$ and not on the size of the input, then $\mathcal{P}$ is said to be \emph{testable}. $\mathcal{P}$ is \emph{tolerantly testable} if for any $\epsilon > 0$ it has a constant-query $(\epsilon, \delta)$-test for some $0 < \delta(\epsilon) < \epsilon$. Finally, $\mathcal{P}$ is \emph{estimable} if it has a constant query $(\epsilon, \delta)$-test for \emph{any} choice of $0 < \delta(\epsilon) < \epsilon$. In other words, $\mathcal{P}$ is estimable if the distance of an input to satisfying $\mathcal{P}$ can be estimated up to a constant error, with good probability, using a constant number of queries.   

The meta-question that we consider in this paper is the following.
\begin{center}
\emph{What makes a certain property $\mathcal{P}$ testable, tolerantly testable, or estimable?}
\end{center}

\subsection{Previous works: Characterizations of graphs and hypergraphs}
\label{subsec:characterization_unordered}

For graphs, it was shown by Fischer and Newman \cite{FischerNewman2005} that the above three notions are equivalent, i.e., any testable graph property is estimable (and thus trivially also tolerantly testable).
A combinatorial characterization of the testable graph properties was 
obtained by Alon, Fischer, Newman and Shapira \cite{AlonFischerNewmanShapira2009} and analytic characterizations were obtained independently by Borgs, Chayes, Lov\'asz, S\'os, Szegedy and Vesztergombi  \cite{BorgsCLSSV2006} and Lov\'asz and Szegedy \cite{LovaszSzegedy2010} through the study of graph limits. 
The combinatorial characterization relates testability with \emph{regular reducibility}, meaning, roughly speaking, that a graph property $\mathcal{P}$ is testable (or estimable) if and only if satisfying $\mathcal{P}$ is equivalent to approximately having one of finitely many prescribed types of Szemer\'edi regular partitions \cite{Szemeredi1976}. A formal definition of regular reducibility is given in Section \ref{sec:prelims}.

Very recently, a similar characterization for hypergraphs was obtained by Joos, Kim, K\"uhn and Osthus \cite{JoosKimKuhnOsthus2017}, who proved that as in the graph case, 
testability, estimability and regular reducibility are equivalent for any hypergraph property.

A (partial) characterization of the graph properties $\mathcal{P}$ that have a constant-query test whose error is one-sided (i.e., tests that always accept inputs satisfying $\mathcal{P}$) was obtained by Alon and Shapira \cite{AlonShapira2008}. They showed that the only properties testable using an important and natural type of one-sided tests, that are \emph{oblivious} to the input size, are essentially the \emph{hereditary} properties. 

The above characterizations for graphs rely on a conversion of tests into \emph{canonical tests}, due to Goldreich and Trevisan \cite{GoldreichTrevisan2003}. A canonical test $T$ always behaves as follows: First it picks a set $U$ of vertices non-adaptively and uniformly at random in the input graph $G$, and queries all pairs of these vertices, to get the induced subgraph $G[U]$. Then $T$ decides whether to accept or reject the input deterministically, based only on the identity of $G[U]$ and the size of $G$. The number of queries needed by the canonical test is only polynomial in the number of queries required by the original test, implying that any testable property is also canonically testable.

To summarize, all of the following conditions are equivalent for graphs: Testability, tolerant testability, canonical testability, estimability, and regular reducibility.

\subsection{From unordered to ordered structures}
\label{subsec:unordered_to_ordered}
Common to all of the above characterization results is the fact that they apply to unlabeled graphs and hypergraphs, which are \emph{unordered} structures: Graph (and hypergraph) properties are symmetric in the sense that they are invariant under any relabeling (or equivalently, reordering) of the vertices. That is, if a labeled graph $G$ satisfies an unordered graph property $\mathcal{P}$, then any graph resulting from $G$ by changing the labels of the vertices is isomorphic to $G$ (as an unordered graph), and so it satisfies $\mathcal{P}$ as well.

A natural question that one may ask is whether similar characterizations hold for the more general setting of \emph{ordered} structures over a finite alphabet, such as \emph{images} and \emph{vertex-ordered graphs} in the two-dimensional case, and \emph{strings} in the one-dimensional case. While an unordered property is defined as a family of (satisfying) instances that is closed under relabeling, in the ordered setting, \emph{any} family of instances is considered a valid property.
The ordered setting is indeed much more general than the unordered one, as best exemplified by string properties: On one hand, unordered string properties are essentially properties of distributions over the alphabet $\Sigma$. On the other hand, \emph{any} property of \emph{any} finite discrete structure can be encoded as an ordered string property!

In general, the answer to the above question is \emph{negative}. It is easy to construct simple string properties that are testable and even estimable, but are neither canonically testable nor regular reducible.\footnote{To this end, canonical tests in ordered structures are similar to their unordered counterparts, but they act in an order-preserving manner. For example, a $q$-query test for a property $\mathcal{P}$ of strings $f \colon [n] \to \Sigma$ is \emph{canonical} if, given an unknown string $f \colon [n] \to \Sigma$, the test picks $q$ entries $x_1 < \ldots < x_q \in [n]$, queries them to get the values $y_1 = f(x_1), \ldots, y_q = f(x_q)$, and decides whether to accept or reject the input only based on the tuple $(y_1, \ldots, y_q)$. Canonical tests in ordered graphs or images are defined similarly, but instead of querying a random substring, we query a random induced ordered subgraph or a random submatrix, respectively.}
As an example, consider the binary string property $\mathcal{P}_{111}$ of ``not containing three consecutive ones''. The following is an $\epsilon$-test for $\mathcal{P}_{111}$ (estimation is done similarly): Pick a random consecutive substring $S$ of the input, of length $O(1/\epsilon)$, and accept if and only if $S$ satisfies $\mathcal{P}_{111}$. On the other hand, global notions like canonical testability and regular reducibility cannot capture the local nature of $\mathcal{P}$.
Moreover, it was shown by Fischer and Fortnow \cite{FischerFortnow2006}, building on ideas from probabilistically checkable proofs of proximity (PCPP), that there exist testable properties that are not tolerantly testable, as opposed to the situation in unordered graphs \cite{FischerNewman2005}. 

However, it may still be possible that a \emph{positive} answer holds for the above question if we restrict our view to a class of ``well behaved'' properties. 

\begin{center}
	\emph{Does there exist a class of properties that is {\bf wide} enough to capture many interesting properties, yet {\bf well behaved} enough to allow simple characterizations for testability?}
\end{center}

So far, we have seen that in general, properties in which the exact location of entries is important to some extent, like $\mathcal{P}_{111}$ and the property from \cite{FischerFortnow2006}, do not admit characterizations of testability that are similar to those of unordered graphs. But what about properties that are ultimately global? Can one find, say, an ordered graph property that is canonically testable but not estimable, for example? Stated differently,
\begin{center}
	\emph{Do the characterizations of testability in unordered graphs have analogues for canonical testability in ordered graphs and images?}
\end{center}

\subsection{Our contributions}
In this paper, we provide a partial positive answer to the first question, and a more complete positive answer to the second question. For the second question, we show that canonical testability in ordered graphs and images implies estimability and is equivalent to (an ordered version of) regular reducibility, similarly to the case in unordered graphs.
Addressing the first question,
 we identify a wide class of well-behaved properties of ordered structures, called the \emph{earthmover resilient} (ER) properties, providing characterizations of tolerant testability and estimability for these properties.
\paragraph{Earthmover resilient properties}
Roughly speaking, a property $\mathcal{P}$ of a certain type of functions is \emph{earthmover resilient} if slight changes in the \emph{order} of the ``base elements''\footnote{The base elements in an ordered graph are the vertices, and in images these are the rows and the columns; in strings the base elements are the entries themselves.} of a function $f$ satisfying $\mathcal{P}$ cannot turn $f$ into a function that is \emph{far} from satisfying $\mathcal{P}$. 
The class of ER properties captures several types of interesting properties: 
\begin{enumerate}
	\item Trivially, \emph{all} properties of \emph{unordered} graphs and hypergraphs.
	
	\item Global \emph{visual} properties of images. In particular, this includes any property $\mathcal{P}$ of black-white images satisfying the following: Any image $I$ satisfying $\mathcal{P}$ has a sparse black-white boundary. This includes, as special cases, properties like convexity and being a half plane, which were previously investigated in \cite{BermanMR2016, BermanMurzabulatovR2016, CanonneGGKW2016, CFSS2017, Raskhodnikova2003}. 
	See Subsection \ref{subsec:earthmover_resilience} for the precise definitions and statement and Appendix \ref{app:sparse_boundary} for the proof.
	
	\item All \emph{hereditary} properties of ordered graphs and images, as implied by a recent result of Alon and the authors \cite{AlonBenEliezerFischer2017}. While all hereditary unordered graph properties obviously fit under this category, it also includes interesting order-based properties, such as the widely investigated property of \emph{monotonicity} (see \cite{DodisGLRRS1999, ErgunKannanKRV2000} for results on strings and images over a finite alphabet), $k$-monotonicity \cite{CanonneGGKW2016}, forbidden poset type problems \cite{FischerNewman2007}, and more generally forbidden submatrix type problems \cite{AlonBenEliezer2017, AlonBenEliezerFischer2017, AlonFischerNewman2007, FischerRozenberg2007}.

\end{enumerate}
\paragraph{The new results}
ER properties behave well enough to allow us to fully characterize the \emph{tolerantly testable} properties among them in images and ordered graphs. In strings, it turns out that earthmover resilience is \emph{equivalent} to canonical testability.

Our first result relates between earthmover resilience, tolerant testability and canonical testability in images and edge colored ordered graphs. 

\begin{theorem}[See also Theorem \ref{thm:earthmover_tolerant_to_canonical}]
\label{thm:earthmover_tolerant_informal}
The following conditions are equivalent for any property $\mathcal{P}$ of edge colored ordered graphs or images. 

\begin{enumerate}
	\item $\mathcal{P}$ is earthmover resilient and tolerantly testable.
	\item $\mathcal{P}$ is canonically testable.
\end{enumerate}
\end{theorem}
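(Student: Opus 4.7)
The proof proceeds by establishing the two implications separately.

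\textbf{Direction $(2) \Rightarrow (1)$: canonical testability implies ER plus tolerance.} The key observation is that the acceptance probability of a canonical $q$-query test depends only on the distribution of the induced ordered $q$-substructure on a random ordered sample of $q$ base elements (vertices, or rows and columns). This yields both implications. First, for tolerant testability: if two inputs $f, f'$ differ in at most a $\delta$-fraction of their entries, then by a union bound a random $q$-sample (which involves $\Theta(q^2)$ queried entries in the graph/image case) queries an entry on which they disagree with probability $O(q^2 \delta)$. Hence the TV distance between the distributions of induced $q$-substructures is $O(q^2 \delta)$, and the test is $(\epsilon, \delta)$-tolerant for a suitable $\delta = \Theta(\epsilon / q^2)$. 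Second, for earthmover resilience: a perturbation rearranging a $\delta$-fraction of base elements leaves a random $q$-sample untouched with probability $1 - O(q \delta)$, giving TV distance $O(q\delta)$ between the induced substructure distributions. Hence a small perturbation of an input satisfying $\mathcal{P}$ is still accepted by the canonical test with high probability, and by correctness of the canonical test the perturbation lies close to $\mathcal{P}$.

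\textbf{Direction $(1) \Rightarrow (2)$: ER plus tolerance implies canonical testability.} This is the more involved direction and the main obstacle. Given a tolerant $(\epsilon,\delta)$-test $T$ of constant query complexity $q$, the plan is to construct a canonical test that samples a large ordered subset $S$ of $Q = Q(\epsilon,\delta,q)$ base elements uniformly at random, reads the induced substructure $f|_S$, and then forms a ``blow-up'' function $f_S$ on the full base set by replicating each value in $f|_S$ into a corresponding equal-sized interval. The canonical test computes $\dist{f_S}{\mathcal{P}}$ by brute force over the constant-sized $f|_S$, and accepts iff this distance is below a threshold determined by $\epsilon$ and $\delta$.

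The correctness argument relies on showing that, with high probability over $S$, $\dist{f_S}{\mathcal{P}}$ approximates $\dist{f}{\mathcal{P}}$ up to an additive error that is absorbed by the tolerance gap between $\epsilon$ and $\delta$. The strategy is to combine (i) a concentration argument showing that, for $Q$ sufficiently large, the induced ordered $q$-substructure distribution of $f_S$ matches that of $f$ to within $O(q/\sqrt{Q})$ in TV; and (ii) the ER hypothesis, which translates matching local substructure statistics into proximity in the edit-distance-to-$\mathcal{P}$ sense, by identifying $f_S$ with a small earthmover rearrangement of $f$ followed by a small fraction of entry modifications. Running $T$ in simulation on $f_S$, the two guarantees together ensure that the test's decision on the blow-up agrees with its decision on $f$ with high probability.

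The main obstacle is making this quantitative. In the unordered graph setting, the analogous Fischer--Newman canonization exploits full permutation invariance of graph properties; in the ordered setting, only the approximate invariance afforded by ER is available, so one must carefully quantify the earthmover distance between $f$ (suitably rearranged according to the sample) and $f_S$, and argue that the sum of this earthmover distance and the fraction of entry modifications introduced by the blow-up falls below the tolerance threshold $\delta$ of $T$. This calibration determines the constant $Q$ and pins down the precise dependence of the tolerance gap on the ER parameters of $\mathcal{P}$.
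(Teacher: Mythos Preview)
Your direction $(2)\Rightarrow(1)$ matches the paper's argument (Lemmas~\ref{lem:smallcaninf} and~\ref{lem:canonical_to_earthmover}), with one caveat: earthmover distance $\delta$ does not mean that a $\delta$-fraction of base elements are moved, but that there is an unordered isomorphism with $\delta\binom{n}{2}$ inversions (the paper's \emph{mixingness}, Lemma~\ref{lem:mixdists}). The correct union bound is therefore over pairs, not single elements: a random ordered $q$-tuple contains an inverted pair with probability at most $\delta\binom{q}{2}$, so the TV gap between the induced-substructure distributions is $O(q^2\delta)$, not $O(q\delta)$.

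Your direction $(1)\Rightarrow(2)$ has a genuine gap in step (ii). The assertion that the blow-up $f_S$ equals a small earthmover rearrangement of $f$ followed by a small Hamming modification is false in general. Take $f$ to be a uniformly random $2$-edge-coloring of $\binom{[n]}{2}$: the blow-up $f_S$ is constant on each pair of blocks, while \emph{every} permutation of $f$ still has each color on roughly half the edges between any two large vertex sets, so $d_H(\sigma(f),f_S)$ is bounded away from $0$ for every $\sigma$. More broadly, earthmover resilience says nothing about inputs that merely share $q$-statistics; it only controls inputs that are earthmover-close, and $f$ and $f_S$ are not. Hence there is no reason for $d_H(f_S,\mathcal{P})$ to track $d_H(f,\mathcal{P})$, and the thresholding rule you describe is unsound.

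The paper proceeds quite differently and never tries to estimate $d_H(f,\mathcal{P})$ from a sample at this stage. It uses the given tolerant test $T$ directly. First (Lemma~\ref{lem:topiecewise}) it partitions $[n]$ into $k$ intervals, with $k$ large enough that any permutation fixing each interval setwise has earthmover cost below the ER threshold, and simulates $T$ but replaces each vertex $T$ would have queried by a uniformly random vertex from the same interval. This is the same as first applying a random intra-interval shuffle to $f$ --- a small earthmover move, so by ER the distance to $\mathcal{P}$ moves by less than the tolerance gap --- and then running $T$; thus the simulation inherits $T$'s correctness. The resulting \emph{piecewise-canonical} test is then converted to a fully canonical one (Lemmas~\ref{lem:simupiece} and~\ref{lem:tocanonical}) by an order-statistics argument showing that sampling $tk$ vertices uniformly and splitting them into $k$ consecutive blocks of size $t$ is close in variation distance to sampling the prescribed number of vertices uniformly from each interval. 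The tolerant test is indispensable in the first step; it is precisely what transfers a decision about the shuffled input back to the original, which your blow-up construction does not provide.
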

Theorem \ref{thm:earthmover_tolerant_to_canonical}, which is the more detailed version of Theorem \ref{thm:earthmover_tolerant_informal}, also states that \emph{efficient} tolerant  $(\epsilon, \delta)$-tests -- in which the query complexity is polynomial in $\delta(\epsilon)$ -- can be converted, under certain conditions, into efficient canonical tests, and vice versa. 

Let us note that Theorem \ref{thm:earthmover_tolerant_informal} can be extended to high-dimensional ordered structures, such as tensors (e.g. 3D images) or edge colored ordered hypergraphs. As our focus in this paper is on one- and two-dimensional structures, the full proof of the extended statement is not given here, but it is a straightforward generalization of the $2D$ proof.

In (one-dimensional) strings, it turns out that the tolerant testability condition of Theorem \ref{thm:earthmover_tolerant_informal} is not needed. That is, ER and canonical testability are equivalent for string properties.
\begin{theorem}
	\label{thm:string_characterization}
	A string property $\mathcal{P}$ is canonically testable if and only if it is earthmover resilient.
\end{theorem}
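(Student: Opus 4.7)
For the direction canonical testability $\Rightarrow$ ER, I will prove a Lipschitz-type sampling lemma: if $f, f' \in \Sigma^n$ satisfy $d_{\mathrm{EM}}(f,f') \leq \gamma$, then the distribution over ordered value-tuples $(f(x_1), \ldots, f(x_q))$ induced by sampling a uniformly random $q$-subset $x_1 < \cdots < x_q$ of $[n]$ from $f$ is within total variation $\mathrm{poly}(q) \cdot \gamma$ of the corresponding distribution from $f'$. The proof is by coupling: an optimal earthmover matching $\pi$ from $f$ to $f'$ has average positional displacement at most $\gamma n$, so by Markov and a union bound over the $q$ sampled positions, with probability $1 - O(q^2 \gamma)$ the displacements are small enough that $\pi$ preserves both the order of the sampled positions and their labels, giving coupled equality of the two tuples. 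Any canonical $\epsilon$-test $T$ with $q = q(\epsilon)$ queries then behaves almost identically on $f$ and $f'$: since $T$ accepts every $f \in \mathcal{P}$ with probability at least $2/3$, choosing $\gamma$ small enough (as a function of $\epsilon$, via $q(\epsilon)$) forces $T$ to accept $f'$ with probability strictly greater than $1/3$, so by soundness $f'$ cannot be $\epsilon$-far from $\mathcal{P}$. This is exactly the ER condition.

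For the converse direction, ER $\Rightarrow$ canonical testability, I will construct a canonical $\epsilon$-test via block frequency vectors. Given $\epsilon > 0$, choose $k = k(\epsilon)$ so that the ER modulus satisfies $\delta(1/k) \leq \epsilon/3$. Partition $[n]$ into $k$ equal consecutive intervals $I_1, \ldots, I_k$, and for any $h \in \Sigma^n$ write $v(h) = (p_1^{(h)}, \ldots, p_k^{(h)}) \in \Delta(\Sigma)^k$ for the vector of per-interval value distributions. The canonical test samples $q$ in-order positions uniformly, splits the sampled tuple into $k$ contiguous blocks, computes an empirical estimate $\hat{v}(f)$, and accepts iff $\hat{v}(f)$ is within $\epsilon/3$ of $V(\mathcal{P}_n) := \{v(g) : g \in \mathcal{P}_n\}$ under the normalized metric $d_V(v, v') = \frac{1}{2k} \sum_j \|p_j - p'_j\|_1$. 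Completeness is immediate by Hoeffding concentration once $q$ is polynomial in $k$, $|\Sigma|$ and $1/\epsilon$. For soundness I will prove the contrapositive: if $d_V(v(f), v(g)) \leq \epsilon/2$ for some $g \in \mathcal{P}_n$, then $f$ is $\epsilon$-close to $\mathcal{P}_n$. Permute $g$ within each interval so as to maximize Hamming agreement with $f$; a direct multiset computation shows that the resulting string $g^*$ satisfies $d_{\mathrm{Hamming}}(g^*, f)/n = d_V(v(g), v(f)) \leq \epsilon/2$ while $d_{\mathrm{EM}}(g, g^*) \leq 1/k$. Applying ER to $g \in \mathcal{P}_n$ yields $\tilde{g} \in \mathcal{P}_n$ with $d_{\mathrm{Hamming}}(g^*, \tilde{g})/n \leq \delta(1/k) \leq \epsilon/3$, and the triangle inequality gives $d_{\mathrm{Hamming}}(f, \tilde{g})/n \leq 5\epsilon/6 < \epsilon$.

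The main obstacle lies precisely in this soundness step: one must convert per-block frequency closeness into Hamming closeness, and the natural in-interval permutation of $g$ lands only on a string $g^*$ with matching per-interval multisets, not on $f$ itself. ER is exactly the hypothesis that keeps $g^*$ Hamming-close to $\mathcal{P}$, and without it the conversion collapses. Two secondary technicalities also need care: first, aligning the precise quantitative definition of $d_{\mathrm{EM}}$ from Section~\ref{sec:prelims} with both the coupling bound in the first direction and the $\leq 1/k$ cost claim for in-interval permutations in the second; and second, handling the dependence of $V(\mathcal{P}_n)$ on $n$, either by allowing the canonical decision rule to depend on $n$ (as in the graph definition) or by passing to a Hausdorff limit $V^*(\mathcal{P}) \subseteq \Delta(\Sigma)^k$ and verifying that the rate of convergence of $V(\mathcal{P}_n)$ to $V^*(\mathcal{P})$ is compatible with the test's error budget.
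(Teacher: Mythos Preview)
Your proposal is correct and follows essentially the same route as the paper. For the forward direction, the paper's Lemma~\ref{lem:smallcaninf} gives the same Lipschitz-type bound you describe, but via the identity $d_e = d_m$ (Lemma~\ref{lem:mixdists}): once earthmover distance equals the fraction of inverted pairs, a direct union bound over the $\binom{q}{2}$ sampled pairs yields the $\binom{q}{2}\gamma$ variation bound without needing to pass through average displacement and Markov.

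For the converse, the paper splits your argument into two stages: first it builds a \emph{piecewise-canonical} test that samples a fixed number of points from each of $k$ predetermined intervals and compares the empirical per-interval histograms to $\mathcal{P}$ (Subsection~\ref{subsec:strings_proof1}, exactly your $v(\cdot)$ and $d_V$ with the same ER-based soundness via in-interval permutation), and then it converts any piecewise-canonical test to a canonical one (Section~\ref{sec:piecewise_to_canonical}) by an order-statistics argument showing that the $j$-th contiguous block of a uniform $kt$-sample is, with high probability, essentially a uniform sample from the $j$-th interval. Your one-shot construction collapses these two stages; the only point to tighten is that the step you call ``Hoeffding concentration'' for $\hat v(f)\approx v(f)$ is not pure Hoeffding --- it additionally requires exactly this order-statistics alignment between sample blocks and string intervals (the content of Lemma~\ref{lem:simupiece}). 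Once that is made explicit, your argument and the paper's coincide.
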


In the unordered graph case, it was shown that testability is equivalent to estimability \cite{FischerNewman2005} and to regular reducibility \cite{AlonFischerNewmanShapira2009}. Here, we establish analogous results for canonical tests in ordered structures. The notion of (ordered) regular reducibility that we use here is similar in spirit to the unordered variant, but is slightly more involved. The formal definition is given in Subsection \ref{subsec:regular_reducible}.  

\begin{theorem}
	\label{thm:canonical_to_estimable}
	Any canonically testable property of edge colored ordered graphs and images is (canonically) estimable.
\end{theorem}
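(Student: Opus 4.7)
The plan is to adapt Fischer and Newman's proof that testability implies estimability in the dense unordered graph model \cite{FischerNewman2005} to the order-preserving setting. The central ingredient is the ordered Szemerédi-type regular partition mentioned just before the theorem (and defined formally in Subsection~\ref{subsec:regular_reducible}): instead of an arbitrary vertex partition, one uses a partition of the vertex ordering into consecutive intervals of almost equal size (a grid of row- and column-intervals for images), and records the color densities between pairs of parts together with the relative sizes of the parts. The resulting bounded-size ``reduced structure'' will play the role that the Szemerédi reduced graph plays in the unordered setting.

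First I would prove an oblivious simulation lemma: for every canonical $(\epsilon,\delta)$-test $T$ for $\mathcal{P}$ and every $\eta>0$, there exist constants $k_0=k_0(T,\eta)$ and $n_0=n_0(T,\eta)$ such that whenever two inputs $G_1,G_2$ of size at least $n_0$ admit ordered regular partitions on $k_0$ parts whose reduced structures are $\eta$-close in $\ell_1$, the acceptance probabilities of $T$ on $G_1$ and $G_2$ differ by less than $1/6$. The proof is an embedding/counting argument: the order-preserving $q$-sample drawn by $T$ falls into any given tuple of intervals with probability determined, up to small error, by the interval lengths; conditioned on that interval tuple, the distribution of the induced labeled substructure is essentially the product distribution given by the edge (or pixel) densities, with the error controlled by the regularity parameter and the sample size $q$.

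Combined with the testability hypothesis, this lemma implies that the distance to $\mathcal{P}$ is, up to additive $\epsilon-\delta$, a Lipschitz function of the reduced structure on $k_0$ parts: otherwise one could find two inputs with close reduced structures, one $\epsilon$-far from $\mathcal{P}$ and the other $\delta$-close, contradicting that $T$ distinguishes them with good probability. Hence, given a target additive error $\gamma>0$, I would instantiate a canonical $(\epsilon,\delta)$-test with $\epsilon-\delta<\gamma/2$, obtain the resulting $k_0$ and $\eta$, and construct a canonical estimator as follows: draw a uniform order-preserving sample of size polynomial in $k_0/\eta$; estimate, via Chernoff and a union bound over the $O(k_0^2|\Sigma|)$ density entries, the reduced structure of the input to within $\eta/2$; and output the minimum $\ell_1$-distance from the estimated structure to an $\eta/2$-net of the reduced structures realizable by inputs in $\mathcal{P}$, a finite set once $k_0$ and $\eta$ are fixed.

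The main obstacle is the oblivious simulation step. Unlike the unordered case, the canonical tester sees the order of its sample, so one must ensure that the partition into consecutive intervals is fine enough, and its interval lengths are estimated accurately enough, that the tester cannot distinguish inputs with close reduced structures. This requires a suitably strong ordered variant of Szemerédi's regularity lemma (with near-equal interval lengths) and careful bookkeeping of how errors in the interval lengths propagate through the distribution of the sample's interval-tuple. Once this is in place, the remainder of the argument proceeds in close analogy to the unordered case, and the same strategy applies verbatim to images and, as remarked after Theorem~\ref{thm:earthmover_tolerant_informal}, to higher-dimensional ordered structures.
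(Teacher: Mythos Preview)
Your proposal has a genuine gap at its core. The step ``the distance to $\mathcal{P}$ is, up to additive $\epsilon-\delta$, a Lipschitz function of the reduced structure'' does not follow from the simulation lemma, and the proposed estimator does not output (an approximation of) the Hamming distance.

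First, there is a circularity: you invoke ``a canonical $(\epsilon,\delta)$-test with $\epsilon-\delta<\gamma/2$'' for an arbitrary target $\gamma$, but the existence of such tolerant tests with arbitrarily small gap is precisely estimability, which is what you are trying to prove. From canonical testability alone you only get the weak tolerance of Lemma~\ref{lem:canonical_to_earthmover}, where $\delta(\epsilon)$ is determined by $q(\epsilon)$ and is not at your disposal.

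Second, and more fundamentally, the quantity your estimator outputs, namely the $\ell_1$-distance from the sampled reduced structure to the set of reduced structures realized by $\mathcal{P}$, is not comparable to the Hamming distance $d_H(G,\mathcal{P})$. One direction is fine: the $\ell_1$-distance of reduced structures is at most a constant times the Hamming distance. For the other direction, your simulation lemma only yields that if some $G'\in\mathcal{P}$ has a reduced structure $\eta(\epsilon)$-close to that of $G$, then $G$ is $\epsilon$-close to $\mathcal{P}$. So the $\ell_1$-distance lies between $\eta(d_H(G,\mathcal{P}))$ and $2\,d_H(G,\mathcal{P})$, and since $\eta(\cdot)$ passes through regularity parameters it is typically vastly smaller than its argument. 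These bounds are far too slack to pin down $d_H(G,\mathcal{P})$ to within an additive constant.

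What is missing is exactly the ingredient that drives the Fischer--Newman argument: one must locate a \emph{robust} (indeed \emph{final}) partition of $G$, not merely some regular partition. Robustness is a maximality condition on the index which guarantees that any modification of $G$ cannot escape to a substantially better partition; this is what lets one read off, from a single signature, whether $G$ is $(\epsilon-\delta)$-close or $\epsilon$-far for \emph{every} $\epsilon$ simultaneously (the ordered analogue of Lemma~4.5 of \cite{FischerNewman2005}). The paper obtains such a signature by adapting the Goldreich--Goldwasser--Ron partition-parameters test to the ordered, interval-refining, multicolored setting (Lemma~\ref{lem:finding_robust_paramteres}), and then applies the robustness-based distinguishing lemma (Lemma~\ref{lem:robust_is_enough}). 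Your proposal never isolates a canonical partition of $G$ (a graph has many regular partitions, and your text seems to treat ``the reduced structure'' as unique), and never invokes robustness, so it lacks the mechanism that converts a signature into genuine Hamming-distance information.
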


\begin{theorem}
	\label{thm:canonical_vs_regular_reducible}
	A property of edge colored ordered graphs or images is canonically testable if and only if it is regular reducible.
\end{theorem}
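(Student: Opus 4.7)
The plan is to follow the approach of Alon, Fischer, Newman and Shapira for unordered graph properties, adapting it to the ordered setting. The key tool is an ordered version of Szemer\'edi's regularity lemma (as required by Subsection~\ref{subsec:regular_reducible}), where the vertex set $[n]$ of an ordered graph is partitioned into \emph{consecutive} blocks, and for images the row set $[m]$ and column set $[n]$ are each partitioned into consecutive intervals. First I would establish an ordered counting/sampling lemma: if $f$ admits an ordered regular partition $\pi$ of sufficient quality (parameters $\gamma, k$), then for every fixed $q$, the distribution of the ordered induced substructure on a uniformly random order-preserving sample of size $q$ is determined, up to total variation error $\eta(\gamma, k)$, by the \emph{type} of $\pi$, namely the vector of (discretized) block sizes together with the (discretized) densities between pairs of blocks. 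The block sizes must be tracked because the multinomial-like distribution of how the $q$ sampled entries distribute among blocks depends on their relative sizes, and in the ordered setting this distribution is directly visible in the sample.

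For the direction regular reducibility $\Rightarrow$ canonical testability, the canonical tester applies the ordered regularity lemma implicitly: it takes a canonical sample of size depending only on $\epsilon$ and the reducibility parameters, uses it to estimate the type (both block-size proportions and inter-block densities are estimable from density statistics on an order-preserving sample via Chernoff plus a union bound over pairs of blocks), and accepts iff the estimated type is close to one of the finitely many prescribed accepting types. The order-preserving nature of the sample is what makes the test canonical.

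For the harder direction canonical testability $\Rightarrow$ regular reducibility, I would fix a canonical $\epsilon$-test $T$ using $q = q(\epsilon)$ queries and choose regularity parameters $\gamma, k$ fine enough that $\eta(\gamma, k) \ll 1/q$. Call a type $\tau$ \emph{good} if some (equivalently, by the sampling lemma, every) $f$ whose ordered regular partition has type $\tau$ is accepted by $T$ with probability at least, say, $3/5$. Every $f \in \mathcal{P}$ then has a good-type partition, since $T$ accepts $f$ with high probability and this acceptance probability is determined (up to $o(1)$) by the type. Conversely, every $f$ whose partition has a good type is accepted by $T$ with high probability, and hence must be $\epsilon$-close to $\mathcal{P}$ (otherwise $T$ would reject it, contradicting goodness). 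This is exactly the definition of regular reducibility, with accepting types equal to the good types, which form a finite set because the type alphabet (discretized block sizes and densities) is finite.

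The main obstacle I anticipate is in proving the ordered sampling lemma and in controlling the interaction between the discretization of block sizes and the slack built into the definition of goodness. In the unordered case, only inter-cluster densities matter for induced subgraph statistics, by exchangeability of vertices; in the ordered case one must additionally track block sizes as discretized continuous parameters, and verify that the rounding errors they introduce in the type do not distort the induced ordered substructure distribution by more than the built-in tolerance. Once the ordered sampling lemma is in hand, the remainder of the argument parallels the unordered proof, with the ordered regularity lemma (and its image analogue on row/column sub-rectangles) playing the role of its unordered counterpart.
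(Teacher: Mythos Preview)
Your high-level plan---adapt the Alon--Fischer--Newman--Shapira argument using an ordered regularity scheme and a counting/sampling lemma---is the right one and is exactly what the paper does. However, there is a genuine structural misunderstanding that would block the proof as you have sketched it. You describe the ordered regular partition as a partition of $[n]$ into ``consecutive blocks'' whose (variable) sizes must be tracked as part of the type. This is not the scheme the paper uses, and in fact no such scheme can work: a partition of $[n]$ into intervals is \emph{not} $\gamma$-regular in general. For instance, if $G(i,j)$ depends only on the parity of $i+j$, then every pair of intervals has density near $1/2$ but is maximally irregular (restrict to the even elements of each side). So a one-level interval partition cannot support the counting lemma you need.

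The correct object (Definition~\ref{def:ordered_regularity_instance}) is two-level: a fixed $r$-interval \emph{equipartition} $I_1,\ldots,I_r$, refined \emph{equitably} into parts $V_{ij}\subseteq I_i$, where the $V_{ij}$ are \emph{arbitrary} subsets of $I_i$, not sub-intervals. The ``ordered'' information lives only at the top level (which interval a part sits in); within each interval the refinement is an ordinary unordered regular partition. In particular all parts have the same size, so there are no block-size parameters to estimate---your anticipated ``main obstacle'' evaporates once the definition is read correctly. What does require work, and what the paper spends its effort on, is (i) adapting the regularity-enhancement lemma (Lemma~\ref{lem:enhancing_regularity_multicolor}) to the multicoloured setting, so that a graph whose partition \emph{approximately} matches an instance $R$ is genuinely Hamming-close to satisfying $R$; and (ii) showing, for the converse direction, that satisfying a fixed ordered regularity instance is itself (piecewise-)canonically testable, via the fact that a large order-respecting sample inherits the same ordered regular partitions as the host graph. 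Your sketch does not isolate either of these steps, and the first is essential for the ``$\delta$-close to some $R$'' clause in Definition~\ref{def:ordered_regular_reducible}.
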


The characterization of tolerant testability in ER properties, given below, is a direct corollary of Theorems \ref{thm:earthmover_tolerant_informal}, \ref{thm:canonical_to_estimable}, and \ref{thm:canonical_vs_regular_reducible}. 
\begin{corollary}
The following conditions are equivalent for any earthmover resilient property $\mathcal{P}$ of edge colored ordered graphs or images.
\begin{enumerate}
	\item $\mathcal{P}$ is tolerantly testable.
	\item $\mathcal{P}$ is canonically testable.
	\item $\mathcal{P}$ is estimable.
	\item $\mathcal{P}$ is regular reducible.
\end{enumerate}
\end{corollary}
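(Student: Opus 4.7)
The plan is to deduce the corollary by chaining together Theorems \ref{thm:earthmover_tolerant_informal}, \ref{thm:canonical_to_estimable}, and \ref{thm:canonical_vs_regular_reducible}, plus a single trivial implication drawn straight from the definitions. Label the four conditions as (1)--(4) in the order in which they appear in the statement: tolerantly testable, canonically testable, estimable, regular reducible.

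First I would invoke Theorem \ref{thm:earthmover_tolerant_informal}, which (crucially using the ER hypothesis on $\mathcal{P}$) supplies the equivalence (1)$\Leftrightarrow$(2). Next, Theorem \ref{thm:canonical_vs_regular_reducible} directly gives (2)$\Leftrightarrow$(4), and Theorem \ref{thm:canonical_to_estimable} gives the forward implication (2)$\Rightarrow$(3). Note that the latter two theorems are stated for arbitrary properties of edge colored ordered graphs or images and do not themselves require the ER assumption; in our argument the ER assumption only enters through Theorem \ref{thm:earthmover_tolerant_informal}.

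To close the cycle of implications, what remains is (3)$\Rightarrow$(1). This step is immediate from the definitions given in the introduction: estimability demands a constant-query $(\epsilon,\delta)$-test for \emph{every} choice of $0<\delta(\epsilon)<\epsilon$, whereas tolerant testability merely demands such a test for \emph{some} such $\delta$. Therefore estimability trivially entails tolerant testability. Assembling the pieces yields the cycle (1)$\Rightarrow$(2)$\Rightarrow$(3)$\Rightarrow$(1) together with the side equivalence (2)$\Leftrightarrow$(4), which shows all four conditions are equivalent.

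No real obstacle arises in this argument; the content is genuinely a bookkeeping exercise that assembles the three main theorems of the paper. The only subtlety worth flagging in the write-up is the asymmetric role of the ER hypothesis: it is needed to promote tolerant testability to canonical testability (and thus to estimability and regular reducibility), but the reverse chain of implications does not rely on it. In particular, without ER one still has (2)$\Leftrightarrow$(4), (2)$\Rightarrow$(3)$\Rightarrow$(1), but the implication (1)$\Rightarrow$(2) can fail, as the introduction illustrates with the example of Fischer and Fortnow \cite{FischerFortnow2006}.
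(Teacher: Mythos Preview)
Your proof is correct and is exactly the approach the paper takes: the corollary is stated there as ``a direct corollary of Theorems \ref{thm:earthmover_tolerant_informal}, \ref{thm:canonical_to_estimable}, and \ref{thm:canonical_vs_regular_reducible},'' with the remaining implication (3)$\Rightarrow$(1) being trivial from the definitions. One small inaccuracy in your closing commentary: the Fischer--Fortnow example from \cite{FischerFortnow2006} exhibits a property that is \emph{testable but not tolerantly testable}, which is not an instance of (1)$\not\Rightarrow$(2); a cleaner illustration that (1)$\not\Rightarrow$(2) can fail without ER is the property $\mathcal{P}_{111}$ from the introduction, which is estimable (hence tolerantly testable) yet not canonically testable.
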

While the conversion between tolerant tests and canonical tests (and vice versa) among earthmover resilient properties has a reasonable polynomial blowup in the number of queries under certain conditions, for the relation between canonical testability and estimability or regular reducibility this is not known to be the case.
The proofs of Theorems \ref{thm:canonical_to_estimable} and \ref{thm:canonical_vs_regular_reducible} go through Szemer\'edi-regularity type arguments, and thus yields at least a tower-type blowup in the number of queries. Currently, it is not known how to avoid this tower-type blowup in general, even for unordered graphs. However, interesting recent results of Hoppen, Kohayakawa, Lang, Lefmann and Stagni \cite{HKLLS2016, HKLLS2017} state that for hereditary properties of unordered graphs, the blowup between testability and estimability is at most exponential.

Alon and the authors \cite{AlonBenEliezerFischer2017} recently showed that any hereditary property of edge-colored ordered graphs and images is canonically testable, by proving an order-preserving removal lemma for all such properties. From Theorem \ref{thm:canonical_to_estimable} and  \cite{AlonBenEliezerFischer2017} we derive the following very general result. 
\begin{corollary}
	\label{coro:hereditary_estimable}
	Any hereditary property of edge-colored ordered graphs or images is (canonically) estimable.
\end{corollary}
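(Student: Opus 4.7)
The plan is to obtain the corollary as a direct two-step composition of results that are already available. First, I would invoke the main theorem of Alon, Ben-Eliezer and Fischer \cite{AlonBenEliezerFischer2017}, which (via an order-preserving removal lemma) establishes that every hereditary property $\mathcal{P}$ of edge-colored ordered graphs or of images is canonically testable: given $\epsilon > 0$, sampling a random induced ordered subgraph (respectively, a random submatrix) of size $q(\epsilon, \mathcal{P})$ and accepting based only on its isomorphism type and on the size of the input, suffices to distinguish inputs in $\mathcal{P}$ from those that are $\epsilon$-far from $\mathcal{P}$ with good probability.

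Second, I would feed this canonical test into Theorem \ref{thm:canonical_to_estimable}, which asserts that any canonically testable property of edge-colored ordered graphs or images is in fact canonically estimable. Applying it to the canonical tester from the previous step upgrades testability to estimability and yields the corollary.

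Since both ingredients are already established, there is essentially no new content to prove; the entire content of the plan is to verify that the two results are compatible. The only point one should check is that the notion of ``canonical test'' used in \cite{AlonBenEliezerFischer2017} (sample a uniformly random order-preserving substructure and decide from its isomorphism type together with $n$) coincides with the definition feeding into Theorem \ref{thm:canonical_to_estimable}; this is immediate from the definition recalled in the introduction. Accordingly, I do not expect any genuine obstacle: the only mild subtlety is a possible tower-type blow-up in the query complexity coming from the Szemer\'edi-regularity machinery hidden inside Theorem \ref{thm:canonical_to_estimable}, but this only affects the quantitative bound and not the qualitative conclusion of estimability.
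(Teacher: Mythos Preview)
Your proposal is correct and matches the paper's own derivation essentially verbatim: the paper states the corollary as an immediate consequence of \cite{AlonBenEliezerFischer2017} (hereditary properties are canonically testable) combined with Theorem~\ref{thm:canonical_to_estimable} (canonical testability implies canonical estimability). Your remarks about definitional compatibility and the tower-type blowup are accurate and in line with the paper's discussion.
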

In particular, this re-proves the estimability of previously investigated properties such as monotonicity \cite{DodisGLRRS1999, ErgunKannanKRV2000} and more generally $k$-monotonicity \cite{CanonneGGKW2016}, and proves the estimability of forbidden-submatrix and forbidden-poset type properties \cite{AlonBenEliezer2017, AlonBenEliezerFischer2017, AlonFischerNewman2007, FischerNewman2007, FischerRozenberg2007}.

\begin{remark}
The characterization of the one-sided error obliviously testable properties by Alon and Shapira \cite{AlonShapira2008}, mentioned in Subsection \ref{subsec:characterization_unordered}, carries on to canonical tests in ordered graphs and images. That is, a property $\mathcal{P}$ of such structures has a one-sided error oblivious canonical test if and only if it is (essentially) hereditary. The fact that hereditary properties are obliviously canonically testable with one-sided error is proved in \cite{AlonBenEliezerFischer2017}; the proof of the other direction is very similar to its analogue in unordered graphs \cite{AlonShapira2008}, and is therefore omitted.
\end{remark}

\subsection{Related work}
\paragraph{Canonical versus sample-based testing in strings}
\label{subsec:intro_strings}
The notion of a sample-based test, already defined in the seminal work of Goldreich, Goldwasser and Ron \cite{GoldreichGoldwasserRon1998}, refers to tests that cannot choose which queries to make. 
A $q$-query test for $\mathcal{P}$ is \emph{sample-based} if it receives pairs of the form $(x_1, f(x_1)), \ldots, (x_q, f(x_q))$ where $f$ is the unknown input function and $x_1, \ldots, x_q$ are picked uniformly at random from the domain of $X$ (compare this to the definition of canonical tests from Subsection \ref{subsec:unordered_to_ordered}). 
A recent work of Blais and Yoshida \cite{BlaisYoshida2016} characterizes the properties $\mathcal{P}$ that have a constant query \emph{sample-based} test. 

In strings, sample-based testability might seem equivalent to \emph{canonical} testability at first glance, but this is actually not the case, as sample-based tests have more power than canonical ones (canonical testability implies sample-based testability, but the converse is not true). Consider, e.g., the property of equality to the string $010101\ldots$, which is trivially sample-based testable, yet not canonically testable.
Thus, sample-based testability does not imply canonical testability, so the results of Blais and Yoshida \cite{BlaisYoshida2016} are not directly comparable to Theorem \ref{thm:string_characterization} above.

\paragraph{Previously investigated properties of ordered structures}
On top of the hereditary properties mentioned earlier, several different types of properties of ordered structures have been investigated in the property testing literature. Without trying to be comprehensive, here is a short summary of some of these types of properties.

	\subparagraph{Geometric \& visual properties} Image properties that exhibit natural visual conditions, such as connectivity, convexity and being a half plane, were considered e.g. in \cite{BermanMR2016,  BermanMurzabulatovR2016, CFSS2017, Raskhodnikova2003}. Typically in these cases, images with two colors -- black and white -- are considered, where the ``shape'' consists of all black pixels, and the ``background'' consists of all white pixels. For example, convexity simply means that the black shape is convex. As we shall see, some of these properties that are global in nature, such as convexity and being a half plane, are ER, while connectivity -- a property that is sensitive to local modifications -- is not ER. 
	
	\subparagraph{Algebraic properties} String properties related to low-degree polynomials, PCPs and locally testable error correcting codes have been thoroughly investigated, starting with the seminal papers of Rubinfeld and Sudan \cite{RubinfeldSudan1996} and Goldreich and Sudan \cite{GoldreichSudan2002}. As shown in \cite{FischerFortnow2006}, there exist properties of this type that are testable but not tolerantly testable. In this sense, algebraic properties behave very differently from unordered graph properties. This should not come as a surprise: In a PCP or a code, the exact \emph{location} of each bit is majorly influential on its ``role''. This kind of properties is therefore not ER in general.
	
	\subparagraph{Local properties} These are image properties $\mathcal{P}$ where one can completely determine whether a given image $\mathcal{I}$ satisfies $\mathcal{P}$ based only on the statistics of the $k \times k$ consecutive sub-images of $\mathcal{I}$, for a fixed constant $k$. 
	Recently, Ben-Eliezer, Korman and Reichman \cite{BenEliezerKormanReichman2017} observed that for almost all (large enough) patterns $Q$, the local property of not containing a \emph{consecutive} copy of $Q$ in the image is tolerantly testable. Note that monotonicity can also be represented as a local property, taking $k=2$ (but $\ell$-monotonicity cannot be represented this way). Local properties are not ER in general, and obtaining characterizations of testability for them remains an intriguing open problem.

\section{Preliminaries}
\label{sec:prelims}

This Section contains all required definitions, including those that are related to earthmover resilience (Subsection \ref{subsec:earthmover_resilience}), a discussion on earthmover resilient properties (Subsection \ref{subsec:earthmover_visual}), property testing notation (Subsection \ref{subsec:testing_defs}), and finally, the definition of ordered regular reducibility (Subsection \ref{subsec:regular_reducible}).
Along the way, we state the full version of Theorem \ref{thm:earthmover_tolerant_informal} (Subsection \ref{subsec:full_version_statement}).

We start with some standard definitions.
A \emph{property} $\mathcal{P}$ of functions $f \colon X \to \Sigma$ is simply viewed as a collection of such functions, where $f$ is said to \emph{satisfy} $\mathcal{P}$ if $f \in \mathcal{P}$. 
The \emph{absolute} Hamming distance between two functions $f,f':X \to Y$ is $D_H(f,f') = |\{ x \in X : f(x) \neq f'(x) \}|$, and the \emph{relative distance} is $d_H(f, f') = D_H(f, f') / |X|$; note that $0 \leq d_H(f, f') \leq 1$ always holds. $f$ and $f'$ are \emph{$\epsilon$-far} if $d_H(f,f') > \epsilon$, and \emph{$\epsilon$-close} otherwise. The distance of $f$ to a property $\mathcal{P}$ is $\min_{f' \in \mathcal{P}} d_H(f,f')$. $f$ is \emph{$\epsilon$-far} from $\mathcal{P}$ if the distance between $f$ and $\mathcal{P}$ is larger than $\epsilon$, and \emph{$\epsilon$-close} to $\mathcal{P}$ otherwise.

\paragraph{Representing images using ordered graphs}
An image $f \colon [n] \times [n] \to \Sigma$ can be represented by an edge colored ordered graph $g \colon \binom{[2n]}{2} \to \Sigma \cup \{\sigma\}$, where $\sigma \notin \Sigma$ can be thought of as a special ``no edge'' symbol. $g$ is defined as follows. $g(x,y) = \sigma$ for any pair $x \neq y$ satisfying $1 \leq x,y \leq n$ (``pair of rows'') or $n+1 \leq x,y \leq 2y$ (``pair of columns''); and $g(x,n+y) = f(x,y)$ for any $x,y \in [n]$. From now onwards, we almost exclusively use this representation of images as ordered graphs, usually giving our definitions and proofs only for strings and ordered graphs. It is not hard to verify that all results established for ordered graphs can be translated to images through this representation.

\subsection{Earthmover resilience}
\label{subsec:earthmover_resilience}
We now formalize our notion of being ``well behaved''. As both strings and ordered graphs are essentially functions of the form $f \colon \binom{[n]}{k} \to \Sigma$ (for $k=1$ and $k=2$, respectively), we simplify the presentation by giving here the general definition for functions of this type.
\begin{definition}[Earthmover distance]
	\label{def:earth_distance}
	Fix $k > 0$ and let $f \colon \binom{[n]}{k} \to \Sigma$. A \emph{basic move} between consecutive elements $x,x+1 \in [n]$ in $f$ is the operation of swapping $x$ and $x+1$ in $f$. Formally, let $\sigma_x \colon [n] \to [n]$ be the permutation satisfying $\sigma_x(x) = x+1$, $\sigma_x(x+1) = x$, and $\sigma_x(i) = i$ for any $i \neq x,x+1$. For any $X \in \binom{[n]}{k}$, define $\sigma^k_x(X) = \{\sigma_x(i) \colon i \in X\}$.
	The result of a basic move between $x$ and $x+1$ in $f$ is the composition $f' = f \circ \sigma_x^k$.
	
	The {\em absolute earthmover distance} $D_e(f,f')$ between two functions $f, f' \colon \binom{[n]}{k} \to \Sigma$ is the minimum number of basic move operations needed to produce $f'$ from $f$. The distance is defined to be $+\infty$ if $f'$ cannot be obtained from $f$ using any number of basic moves.
	The \emph{normalized earthmover distance} between $f$ and $f'$ is $d_e(f,f') = D_e(f,f') / \binom{n}{2}$, and we say that they are $\epsilon$-earthmover-far if $d_e(f,f') > \epsilon$, and $\epsilon$-earthmover-close otherwise.
\end{definition}

\begin{definition}[Earthmover resilience]
	Fix a function $\delta \colon (0,1) \to (0,1)$. A property $\mathcal{P}$ is \emph{$\delta$-earthmover resilient} if for any $\epsilon > 0$, function $f$ satisfying $\mathcal{P}$, and function $f'$ which is $\delta(\epsilon)$-earthmover-close to $f$, it holds that $f'$ is $\epsilon$-close to $\mathcal{P}$ (in the usual Hamming distance).
	$\mathcal{P}$ is \emph{earthmover resilient} if it is $\delta$-earthmover resilient for some choice of $\delta$.
\end{definition}
Intuitively, a property is earthmover resilient if it is insensitive to local changes in the order of the base elements. 

\paragraph{Hereditary properties are earthmover resilient}
It was shown in \cite{AlonBenEliezerFischer2017} that any hereditary property satisfies a \emph{removal lemma}: If an ordered graph (or image) $G$ is $\epsilon$-far from an hereditary property $\mathcal{P}$, then $G$ contains $\delta n^h$ ordered copies of some $h$-vertex subgraph $H$ not satisfying $\mathcal{P}$, for suitable choices of $\delta = \delta_{\mathcal{P}}(\epsilon) > 0$ and $h = h_{\mathcal{P}}(\epsilon) > 0$. Since one basic move can destroy no more than $n^{h-2}$ such $H$-copies (those that include both swapped vertices), one has to make at least $\delta n^2$ basic moves to make $G$ satisfy $\mathcal{P}$. Thus, $\epsilon$-farness implies $\delta_{\mathcal{P}}(\epsilon)$-earthmover-farness from $\mathcal{P}$.

\subsection{Earthmover resilience in visual properties}
\label{subsec:earthmover_visual}
Convexity and being a half plane are earthmover resilient. This is a special case of a much wider phenomenon concerning properties of black-white images in which the number of pixels lying in the boundary between the black shape and the white background is small. Here, an $m \times n$ white/black image is represented by a $0/1$-matrix $M$ of the same dimensions, where the $(i,j)$-pixel of the image is black if and only if $M(i,j) = 1$.
The definition below is given for square images, but can be easily generalized to $m \times n$ images with $m = \Theta(n)$.
\begin{definition}[Sparse boundary]
	\label{def:sparse_boundary}
	The \emph{boundary} $\mathcal{B} = \mathcal{B}(\mathcal{I})$ of an $n \times n$ black-white image $\mathcal{I}$ is the set of all pixels in $\mathcal{I}$ that are black and have a white neighbor.\footnote{Here, two pixels are neighbors if they share one coordinate and differ by one in the other coordinate. An alternative definition (that will yield the same results in our case) is that two pixels are neighbors if they differ by at most one in each of the coordinates, and are not equal.} $\mathcal{B}$ is \emph{$c$-sparse} for a constant $c > 0$ if $|\mathcal{B}| \leq c n$. A property $\mathcal{P}$ has a \emph{$c$-sparse boundary} if the boundaries of all images satisfying $\mathcal{P}$ are $c$-sparse.
\end{definition}

For example, for any property $\mathcal{P}$ of $n \times n$ images such that the black area in any image satisfying $\mathcal{P}$ is the union of at most $t$ convex shapes (that do not have to be disjoint), $\mathcal{P}$ has a $4t$-sparse boundary. This follows from the fact that the boundary of each of the black shapes is of size at most $4n$. For $t = 1$, this captures both convexity and being a half plane as special cases. The following result states that $c$-sparse properties are earthmover resilient.

\begin{theorem}
	\label{thm:sparse_boundary_to_earthmover_resilience}
	Fix $c \geq 1$. Then any property with a $c$-sparse boundary is $\delta$-earthmover-resilient, where $\delta(\epsilon) \leq \alpha \epsilon^2 / c^2$ for some absolute constant $\alpha > 0$ and any $\epsilon > 0$.
\end{theorem}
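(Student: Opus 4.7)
The plan is to show that whenever $\mathcal{I}$ has a $c$-sparse boundary and $\mathcal{I}'$ is obtained from $\mathcal{I}$ via $K$ basic moves, the direct Hamming distance satisfies $D_H(\mathcal{I},\mathcal{I}') = O(cn\sqrt{K})$. Since $K \le \delta \binom{2n}{2}$ by hypothesis, plugging in yields $d_H(\mathcal{I},\mathcal{I}') \le O(c\sqrt{\delta})$, which is at most $\epsilon$ provided $\delta \le \alpha\epsilon^2/c^2$; this certifies that $\mathcal{I}'$ is $\epsilon$-close to $\mathcal{I}\in\mathcal{P}$. First I would observe that any basic-move sequence on the $2n$-vertex ordered-graph representation whose endpoints are \emph{both} valid images must permute rows among rows and columns among columns (otherwise a pair of row-vertices would fail to be $\sigma$-colored at the endpoint), so $\mathcal{I}'(i,j) = \mathcal{I}(\pi_R^{-1}(i),\pi_C^{-1}(j))$ for some row and column permutations $\pi_R,\pi_C$, with $K \ge K_R + K_C$ for $K_R := \operatorname{inv}(\pi_R)$ and $K_C := \operatorname{inv}(\pi_C)$.

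The heart of the proof is an L-path / boundary-edge counting. For each pixel $(i,j)$, connect it to its source $(\pi_R^{-1}(i),\pi_C^{-1}(j))$ by an L-shaped path through $(\pi_R^{-1}(i),j)$: evaluated along $\mathcal{I}$, any mismatch $\mathcal{I}'(i,j)\ne \mathcal{I}(i,j)$ forces the path to cross a \emph{boundary edge}, i.e.\ two adjacent pixels of $\mathcal{I}$ of different colors. Summing the indicator of a mismatch over all pixels and exchanging the order of summation, the vertical contribution factors as
\[
\sum_{i_0=1}^{n-1} N_R(i_0)\,m_R(i_0),
\]
where $m_R(i_0)=|\{j:\mathcal{I}(i_0,j)\ne\mathcal{I}(i_0{+}1,j)\}|$ is the vertical perimeter at row-gap $i_0$ and $N_R(i_0)=|\{i:\min(i,\pi_R^{-1}(i))\le i_0<\max(i,\pi_R^{-1}(i))\}|$ is the number of vertical legs that cross that gap; the horizontal legs give a symmetric expression with $\pi_C^{-1}$ and the horizontal perimeter.

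To close the argument, I would establish the key inequality $N_R(i_0) = O(\sqrt{K_R})$: if $N$ indices $i\le i_0$ satisfy $\pi_R^{-1}(i)>i_0$, then by the permutation property exactly $N$ indices $i>i_0$ satisfy $\pi_R^{-1}(i)\le i_0$, and each of the resulting $N^2$ cross-pairs is an inversion of $\pi_R^{-1}$, forcing $K_R\ge N^2$. Combined with the sparse-boundary bound $\sum_{i_0}m_R(i_0)\le O(|\mathcal{B}(\mathcal{I})|)\le O(cn)$ (each boundary pixel is incident to at most four adjacent boundary edges), this gives $\sum_{i_0}N_R(i_0)m_R(i_0) \le O(cn\sqrt{K_R})$, and the symmetric horizontal bound yields $D_H(\mathcal{I},\mathcal{I}')\le O\bigl(cn(\sqrt{K_R}+\sqrt{K_C})\bigr)\le O(cn\sqrt{K})$, as desired. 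The main obstacle is the cut-versus-inversions inequality $N_R(i_0) = O(\sqrt{K_R})$; its quadratic nature is precisely what converts a naive linear-in-$c$ guess into the sharper $\delta\asymp\epsilon^2/c^2$ claimed by the theorem, and a weaker pairing argument would lead only to the insufficient bound $\delta\lesssim \epsilon^2/(cn)$.
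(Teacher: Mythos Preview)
Your argument is correct and in fact gives a considerably more direct route than the paper's proof. Both proofs establish the same intermediate inequality $D_H(\mathcal{I},\mathcal{I}')=O(c\sqrt{\delta}\,n^2)$, but the mechanisms are quite different.

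The paper's proof in Appendix~\ref{app:sparse_boundary} proceeds by first passing to an auxiliary image $\mathcal{J}$ in which every non-outer shape has outer boundary of size at least $\sqrt{\delta}\,n$; this requires a sequence of topological lemmas about shapes, encirclement, outer boundaries, and covering boundaries by paths (Lemmas~\ref{lem:boundary_pixel}--\ref{lem:sparse_boundary_small_distance}). It then splits the pixels of $\mathcal{J}$ into those far from the boundary (which can only change color if their row or column moved by at least $\sqrt{\delta}\,n/2$, and there are few such rows and columns) and those within distance $\sqrt{\delta}\,n$ of the boundary (which are bounded geometrically). Your approach sidesteps all of this: you never modify $\mathcal{I}$, never reason about connected components, and never introduce a distance threshold. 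Instead you count boundary-edge crossings along the L-paths directly, and the crucial quadratic gain comes entirely from the clean combinatorial inequality $N_R(i_0)\le 2\sqrt{K_R}$: if $N$ indices $i\le i_0$ cross to the right under $\pi_R^{-1}$ then equally many cross to the left, and all $N^2$ resulting pairs are inversions. This is a genuinely different (and shorter) argument; what the paper's approach buys in exchange is a somewhat more geometric picture of \emph{where} the changed pixels lie, which your counting proof does not provide.

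Two small points worth tightening in a full write-up: the claim that the permutation factors as $(\pi_R,\pi_C)$ should be accompanied by the observation that the alternative (rows and columns swapped en bloc) forces at least $n^2$ inversions and hence $d_e\ge 1/2$, which rules it out for small $\delta$; and once the permutation factors, the mixingness is \emph{exactly} $K_R+K_C$ (no row--column pair can be inverted), so your inequality $K\ge K_R+K_C$ is in fact an equality.
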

The result still holds if $c$ is taken as a function of $\epsilon$. The (non-trivial) proof serves as a good example showing how to prove earthmover resilience of properties, and is given in Appendix \ref{app:sparse_boundary}. 

Naturally, not all properties of interest are earthmover resilient. For example, the local property $\mathcal{P}$ of ``not containing two consecutive horizontal black pixels'' in a black/white image is not earthmover resilient: Consider the chessboard $n \times n$ image, which satisfies $\mathcal{P}$, but by partitioning the board into $n/4$ quadruples of consecutive columns and switching between the second and the third column in each quadruple, we get an image that is $O(1/n)$-earthmover-close to $\mathcal{P}$ yet $1/4$-far from it in Hamming distance.
A similar but slightly more complicated example shows that connectivity is not earthmover resilient as well.

\subsection{Definitions: Testing and estimation}
\label{subsec:testing_defs}
A $q$-query algorithm $T$ is said to be an \emph{$\epsilon$-test} for $\mathcal{P}$ with \emph{confidence} $c > 1/2$, if it acts as follows. Given an unknown input function $f:X \to \Sigma$ (where $X$ and $\Sigma$ are known), $T$ picks $q$ elements $x_1, \ldots, x_q \in X$ of its choice, and queries the values $f(x_1), \ldots, f(x_q)$.\footnote{$T$ as defined here is a \emph{non-adaptive} test, that chooses which queries to make in advance. Adaptivity does not matter for our discussion, since we are only interested in constant-query tests, and since an adaptive test making a constant number $q$ of queries can be turned into a non-adaptive one making $2^q$ queries, which is still a constant.}  
Then $T$ decides whether to accept or reject $f$, so that 
\begin{itemize}
	\item If $f$ satisfies $\mathcal{P}$ then $T$ accepts $f$ with probability at least $c$. 
	\item If $f$ is $\epsilon$-far from $\mathcal{P}$, then $T$ rejects it with probability at least $c$. 
\end{itemize}
Now let $\delta:(0,1) \to (0,1)$ be a function that satisfies $\delta(x) < x$ for any $0 < x < 1$. An \emph{$(\epsilon, \delta)$-tolerant} test $T$ is defined similarly to an $\epsilon$-test, with the first condition replaced with the following strengthening: If $f$ is \emph{$\delta(\epsilon)$-close} to $\mathcal{P}$, then $T$ accepts it with probability at least $1-c$.
Unless stated otherwise, the default choice for the confidence is $c=2/3$.
$\mathcal{P}$ is \emph{testable} if it has a constant-query $\epsilon$-test (whose number of queries depends only on $\epsilon$) for any $\epsilon > 0$. Similarly, $\mathcal{P}$ is $\delta$-tolerantly testable, for a valid choice of $\delta \colon (0,1) \to (0,1)$, if it has a constant query $(\epsilon, \delta)$-test for any $\epsilon > 0$. If $\mathcal{P}$ is $\delta$-tolerantly testable for \emph{some} valid choice of $\delta$, we say that it is \emph{tolerantly testable}. Finally, $\mathcal{P}$ is \emph{estimable} if it is $\delta$-tolerantly testable for \emph{any} valid choice of $\delta$.

Next, we formally define what it means for a test (or a tolerant test) $T$ to be \emph{canonical}, starting with the definition for strings.

A $q$-query test (or tolerant test) $T$ for a property $\mathcal{P}$ of strings $f:[n] \to \Sigma$ is \emph{canonical} if it acts in two steps. First, it picks $x_1 < \ldots < x_q$ uniformly at random, and queries the entries $y_1 = f(x_1), \ldots, y_q = f(x_q)$. The second step only receives the ordered tuple $Y = (y_1, \ldots, y_k)$ and decides  (possibly probabilistically) whether to accept or reject only based on the values of $Y$. Note that the second step does \emph{not} ``know'' the values of $x_1, \ldots, x_q$ themselves. As before, $\mathcal{P}$ is \emph{canonically testable} if it has a $q_{\mathcal{P}}(\epsilon)$-query canonical test for any $\epsilon > 0$, where $q_{\mathcal{P}}(\epsilon)$ depends only on $\epsilon$.

In contrast, a test for string properties is \emph{sample based} if it has the exact same first step, but the second step receives more information: It also receives the values of $x_1, \ldots, x_q$. 
A sample-based test is more powerful than a canonical test  in general. For example, the property of ``being equal to the string $010101\ldots$'' is trivially sample-based $\epsilon$-testable with $O(1/\epsilon)$ queries, but is \emph{not} canonically testable with a constant number of queries (that depends only on $\epsilon$).

For \emph{ordered graphs} $f\colon \binom{[n]}{2} \to \Sigma$, a test (or a tolerant test) $T$ is \emph{canonical} if, again, it acts in two steps. In the first step, $T$ picks $q$ vertices $v_1 < \ldots < v_q$ uniformly at random, and queries all $\binom{q}{2}$ values $y_{ij} = f(v_i, v_j)$. The second step receives the ordered tuple $Y = (y_{11}, y_{12}, \ldots, y_{1q}, \ldots, y_{q-1,q})$, and decides (possibly probabilistically) whether to accept or reject only based on the value of $Y$. 

We take a short detour to explain why asking $T$ to make a deterministic decision in the second step of the canonical test, rather than a probabilistic one, will not make an essential difference for our purposes.  It was proved by Goldreich and Trevisan \cite{GoldreichTrevisan2003} that any probabilistic canonical test (for which the decision to accept or reject in the second step is not necessarily deterministic) can be converted into a deterministic one, with a blowup that is at most polynomial in the number of queries.
The proof was given for unordered graph properties, but it can be translated to ordered structures like strings, ordered graphs and images in a straightforward manner. Thus, the requirement that the canonical test makes a deterministic decision is not restrictive. 

\subsection{The full statement of Theorem \ref{thm:earthmover_tolerant_informal}}
\label{subsec:full_version_statement}
We are finally ready to present the more precise version of Theorem \ref{thm:earthmover_tolerant_informal}. This version depicts an \emph{efficient} transformation from earthmover resilience and tolerant testability to canonical testability, and vice versa.
\begin{theorem}
	\label{thm:earthmover_tolerant_to_canonical}
	Let $\mathcal{P}$ be a property of edge-colored ordered graphs or images, and let $\delta \colon (0,1) \to (0,1)$ and $\eta \colon (0,1) \to (0,1)$ such that $\eta(\epsilon) < \epsilon$ for any $\epsilon > 0$.
	\begin{enumerate}
		\item If $\mathcal{P}$ is $\delta$-earthmover resilient and $\eta$-tolerantly testable, where the number of queries of a corresponding $(\epsilon, \eta)$-tolerant non-adaptive test is denoted by $q(\epsilon)$, then $\mathcal{P}$ is canonically testable. Moreover, if $q$, $\eta^{-1}$ and $\delta^{-1}$ are polynomial in $\epsilon^{-1}$, then the number of queries of the canonical $\epsilon$-test is also polynomial in $\epsilon^{-1}$.
		
		\item If $\mathcal{P}$ is canonically testable, where the number of queries of the canonical (non adaptive) $\epsilon$-test is denoted by $q'(\epsilon)$, then $\mathcal{P}$ is both $\delta'$-earthmover resilient and $\delta'$-tolerantly testable where $\delta' \colon (0,1) \to (0,1)$ depends only on $q'$ and $\epsilon$. Moreover, if $q'$ is polynomial in $\epsilon^{-1}$, then $\delta'$ is polynomial in $\epsilon$.
	\end{enumerate}
\end{theorem}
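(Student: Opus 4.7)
The plan is to prove each direction of the equivalence separately, with direction 2 being a routine sensitivity-of-sampling analysis and direction 1 being the main technical content.

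\textbf{Direction 2 (canonical testability $\Rightarrow$ earthmover resilience and tolerant testability).} Let $T$ be a canonical $\epsilon$-test with $q' = q'(\epsilon)$ queries. I would show that the same test, with $O(1)$-round majority-vote amplification, serves both as a tolerant test and as a witness of ER, with $\delta'(\epsilon) = \Theta(1/q'(\epsilon)^2)$ (polynomial in $\epsilon^{-1}$ whenever $q'$ is). The core observation is that a canonical test's behavior depends only on the induced subgraph on a uniform random $q'$-subset $U$, and this view is highly insensitive to small local perturbations. A Hamming change of $\delta' \binom{n}{2}$ edges affects the sample only when $U$ contains both endpoints of at least one altered edge, an event of probability at most $O(\delta' q'^2)$ by a union bound. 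Similarly, a basic earthmover swap of adjacent vertices $x, x+1$ alters the view only when $\{x, x+1\} \subseteq U$, an event of probability $q'(q'-1)/(n(n-1))$; summing over $\delta' \binom{n}{2}$ swaps gives a total-variation distance of $O(\delta' q'^2)$ between the test's views on $G$ and on an earthmover-perturbed $G'$. Taking $\delta' = \Theta(1/q'^2)$ yields a constant acceptance gap, which amplifies to a proper $2/3$-confidence tolerant test, and forces the earthmover-perturbed input $G'$ to be $\epsilon$-close to $\mathcal{P}$ (otherwise $T$ would reject it with probability $\geq 2/3$).

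\textbf{Direction 1 (ER and tolerantly testable $\Rightarrow$ canonical testability).} Given a non-adaptive $(\epsilon/2, \eta)$-tolerant test $T$ with $q = q(\epsilon/2)$ queries and the assumption that $\mathcal{P}$ is $\delta$-ER, I plan to construct a canonical test $T_c$ with $q' = \poly(q, 1/\eta, 1/\delta(\eta))$ queries via a \emph{blow-up} simulation. $T_c$ samples $q'$ vertices $v_1 < \ldots < v_{q'}$ uniformly, observes the induced subgraph $H$, forms a blow-up graph $\widetilde{G}$ on $[n]$ (partition $[n]$ into $q'$ near-equal consecutive blocks $I_1, \ldots, I_{q'}$ with $v_i \in I_i$, and set $\widetilde{G}(u, u') = H(i,j)$ for $u \in I_i,\ u' \in I_j,\ i \neq j$), and invokes $T$ on $\widetilde{G}$.

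The analysis has two ingredients. \emph{Completeness:} if $G \in \mathcal{P}$, then $\widetilde{G}$ is $\eta$-close to $\mathcal{P}$ with high probability. The plan is to consider an intermediate graph $G^{\mathrm{shuf}}$ obtained from $G$ by permuting vertices within each block via adjacent swaps, costing $O((n/q')^2)$ moves per block and $O(n^2/q')$ moves in total — normalized earthmover distance $O(1/q')$. Choosing $q' \geq 1/\delta(\eta/4)$ brings this within $\delta(\eta/4)$, so ER places $G^{\mathrm{shuf}}$ within $\eta/4$ of $\mathcal{P}$; a concentration argument then controls the remaining Hamming gap between $G^{\mathrm{shuf}}$ and $\widetilde{G}$. \emph{Soundness:} follows from a Lipschitz-like continuity property of $d_H(\cdot, \mathcal{P})$ with respect to earthmover distance, which is an immediate consequence of ER via the triangle inequality. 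This continuity propagates $\epsilon$-farness of $G$ through the same coupling (losing $\epsilon/2$) to show that $\widetilde{G}$ remains $\epsilon/2$-far from $\mathcal{P}$.

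\textbf{Main obstacle.} The crux of the proof is controlling the Hamming distance $d_H(\widetilde{G}, G^{\mathrm{shuf}})$ in the completeness step. The blow-up assigns the single sampled value $G(v_i, v_j)$ to all $|I_i||I_j|$ edges between blocks $I_i$ and $I_j$, while $G^{\mathrm{shuf}}$'s corresponding edges retain $G$'s (possibly diverse) colors in a permuted arrangement. A naive point-sampling coupling can give Hamming distance $\Omega(1)$, destroying the argument. Overcoming this requires either aggregating several samples per block pair and using edge-density estimates in the blow-up, or refining the proxy graph so that ER-permitted shuffling can align it with the sample. The polynomial query bound emerges by balancing the earthmover budget ($1/q' \lesssim \delta(\eta)$) against the concentration budget for accurate inter-block edge estimation, relying critically on the polynomial behavior of $q$, $\eta^{-1}$, and $\delta^{-1}$.
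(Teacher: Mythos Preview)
Your Direction 2 is essentially the paper's argument (Lemmas~5.7--5.8 there): a canonical test's acceptance probability is $\binom{q'}{2}$-Lipschitz in both Hamming and earthmover distance, so $\delta'(\epsilon)=\Theta(1/q'(\epsilon)^2)$ suffices, and majority-vote amplification converts the resulting $>1/2$ gap into a bona fide tolerant test. Nothing to add.

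Your Direction 1, however, has a genuine gap, and it is exactly the one you flag as the ``main obstacle'' without resolving. The blow-up $\widetilde G$ built from a single sampled value per block pair is \emph{not} close in Hamming distance to $G^{\mathrm{shuf}}$ in general, and your suggested fixes do not rescue it. ``Aggregating several samples per block pair and using edge-density estimates'' still fails: shuffling vertices within blocks permutes the bipartite graph between $I_i$ and $I_j$ but does \emph{not} make it look like a typical graph of the same density. For instance, if $G$ restricted to $I_i\times I_j$ is complete between the first halves of each block and empty elsewhere (density $1/4$), then $G^{\mathrm{shuf}}$ is still a complete bipartite graph on random halves, which is $\Omega(1)$-far in Hamming distance from any quasirandom graph of density $1/4$. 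So even with perfect density estimates, the proxy you build need not be close to $G^{\mathrm{shuf}}$, and ER alone gives you no control over $d_H(\widetilde G,\mathcal P)$. The same obstruction breaks your soundness argument, since it also routes through $d_H(\widetilde G,G^{\mathrm{shuf}})$.

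The paper sidesteps this entirely by never constructing a proxy graph. Its route is: (i) convert the tolerant test $T$ into a \emph{piecewise-canonical} test $T'$ on $k=\lceil 2/\delta(\eta(\epsilon/2))\rceil$ intervals, then (ii) convert piecewise-canonical to canonical. Step (i) is the key idea you are missing: $T'$ simulates $T$'s vertex choices, but \emph{randomly relabels} the chosen vertices within each interval before reading the induced subgraph. Running $T'$ on $G$ is then distributionally identical to running $T$ on a graph $G'$ obtained from $G$ by a uniformly random within-interval shuffle; since $d_e(G,G')\le 2/k\le\delta(\eta(\epsilon/2))$, ER plus the Lipschitz lemma give $|d_H(G,\mathcal P)-d_H(G',\mathcal P)|\le\eta(\epsilon/2)$, and the tolerance of $T$ absorbs this slack. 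No Hamming comparison between a sample-based proxy and $G$ is ever needed. Step (ii) is a separate coupling argument showing that sampling $tk$ uniform vertices and treating consecutive blocks of $t$ as ``simulated intervals'' is close in variation distance to true piecewise sampling.
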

The proof is given along Sections \ref{sec:earthmover_mixing}, \ref{sec:earth_tolerant_to_piecewise}, and \ref{sec:piecewise_to_canonical}.

\subsection{Regular reducibility}
\label{subsec:regular_reducible}
The last notion to be formally defined is that of ordered regular reducibility. This notion is a natural analogue of the unordered variant, and is rather complicated to describe and define. Since the intuition behind this definition is quite similar to that of the unordered case, we refer the reader to a more thorough discussion on regular reducibility (and the relation to Szemer\'edi's regularity lemma) in \cite{AlonFischerNewmanShapira2009}. Here, we only provide the set of definitions required for our purposes.

\begin{definition}[Regularity,  regular partition]
	Let $f \colon \binom{[n]}{2} \to \Sigma$ be an edge-colored ordered graph. For any $\sigma \in \Sigma$, the \emph{$\sigma$-density} of a disjoint pair $A,B \subseteq [n]$ is $d_{\sigma}(A,B) = |f^{-1}(\sigma) \cap (A \times B)| / |A||B|$. A pair $(A, B)$ is $\gamma$-regular if for any two subsets $A' \subseteq A$ and $B' \subseteq B$ satisfying $|A'| \geq \gamma |A|$ and $|B'| \geq \gamma |B|$, and any $\sigma \in \Sigma$, it holds that $|d_{\sigma}(A',B') - d_{\sigma}(A,B)| \leq \gamma$.
	An equipartition of $[n]$ into $k$ parts $V_1, \ldots, V_k$ is \emph{$\gamma$-regular} if all but at most $\gamma \binom{k}{2}$ of the pairs $(V_i, V_j)$ are $\gamma$-regular.
\end{definition}
\begin{definition}[Interval partitions]
	The \emph{$k$-interval equipartition} of $[n]$ is the unique partition of $[n]$ into sets $X_1, \ldots, X_k$, such that $x < x'$ for any $x \in X_i, x' \in X_{i'}, i < i'$ and $|X_{i'}| \leq |X_i| \leq |X_{i'}|+1$ for any $i < i'$. An \emph{interval partition} of an ordered graph or a string is defined similarly. 
\end{definition}
\begin{definition}[Ordered regularity instance]
	\label{def:ordered_regularity_instance}
	An ordered regularity instance $R$ for $\Sigma$-colored ordered graphs is given by an error parameter $\gamma$, integers $r,k$, a set of $K = \binom{r}{2}k^2 |\Sigma|$ densities $0 \leq \eta_{ij}^{i'j'}(\sigma) \leq 1$ indexed by $i < i' \in [r]$, $j,j' \in [k]$ and $\sigma \in \Sigma$, and a set $\bar{R}$ of tuples $(i,j,i',j')$ of size at most $\gamma K$. An ordered graph $f \colon \binom{[n]}{2} \to \Sigma$ satisfies the regularity instance if there is an equitable refinement $\{V_{ij} : i \in [r], j \in [k]\}$ of the $r$-interval equipartition $V_1, \ldots, V_r$ where $V_{ij} \subseteq V_i$ for any $i$ and $j$,  such that for all $(i,j,i',j') \notin \bar R$ the pair $V_{ij}, V_{i'j'}$ is $\gamma$-regular and satisfies $d_{\sigma}(V_{ij}, V_{i'j'}) = \eta_{ij}^{i'j'}(\sigma)$ for any $\sigma \in \Sigma$. The \emph{complexity} of the regularity instance is $\max \{1/\gamma, K\}$.
\end{definition}
With some abuse of notation, when writing $d_{\sigma}(V_{ij}, V_{i'j'}) = \eta_{ij}^{i'j'}(\sigma)$ we mean that the number of $\sigma$-colored edges between $V_{ij}$ and $V_{i'j'}$ is $\lfloor \eta_{ij}^{i'j'}(\sigma)|V_{ij}| |V_{i'j'}|  \rfloor$ or $\lceil \eta_{ij}^{i'j'}(\sigma)|V_{ij}| |V_{i'j'}|  \rceil$. This way we avoid divisibility issues, without affecting any of our arguments.

The definition of an ordered regularity instance differs slightly from the analogous definition for unordered graphs in \cite{AlonFischerNewmanShapira2009}: Here we insist that the regular partition will be a refinement of an interval equipartition, disregarding pairs of parts inside the same interval. We also allow a color set of size bigger than two.
The definition of regular reducibility is analogous to the unordered case, though obviously the regularity instances used in the definition are of the ordered type.
\begin{definition}[Regular reducible]
	\label{def:ordered_regular_reducible}
	An edge-colored ordered graph property $\mathcal{P}$ is \emph{regular-reducible} if for any $\delta > 0$ there exists $t = t_{\mathcal{P}}(\delta)$ such that for any $n$ there is a family $\mathcal{R}$ of at most $t$ regularity instances, each of complexity at most $t$, such that the following holds for every $\epsilon > 0$ and ordered graph $f \colon \binom{[n]}{2} \to \Sigma$: 
	\begin{itemize}
		\item If $f$ satisfies $\mathcal{P}$ then for some $R \in\mathcal{R}$, $f$ is $\delta$-close to satisfying $R$.
		\item If $f$ is $\epsilon$-far from satisfying $\mathcal{P}$, then for any $R \in \mathcal{R}$, $f$ is $(\epsilon - \delta)$-far from satisfying $R$.
	\end{itemize}
	
\end{definition}

\section{Proof outline}
\label{sec:proof_outline}

	In this section, we shortly describe the main ingredients of our proofs.
	
	\paragraph{Earthmover distance and mixingness}
	Suppose that $G,G':\binom{[n]}{2} \to \Sigma$ are two ordered graphs with a finite earthmover distance between them (all results mentioned here also apply for strings). In this case, $G$ and $G'$ are isomorphic as unordered graphs, meaning that the collection of vertex permutations $\pi:[n] \to [n]$ that ``turn'' $G$ into $G'$ is not empty. We define the (absolute) \emph{mixingness} between $G$ and $G'$ as the minimal number of pairs $x < y \in [n]$ such that $\pi(x) > \pi(y)$, over all possible choice of $\pi$ from the collection. We show, via a simple inductive proof, that the mixingness between $G$ and $G'$ is \emph{exactly equal} to the earthmover distance between them.
	
	With the tool of mixingness in hand, it is not hard to prove that canonical testability implies earthmover resilience and tolerant testability. The basic idea is that, if two graphs $G$ and $G'$ are sufficiently close in terms of mixingness, then the distributions of their $q$-vertex subgraphs are very similar, and so a $q$-query canonical test cannot distinguish between them with good probability. See Section \ref{sec:earthmover_mixing} for more details.
	
	\paragraph{Earthmover resilience to piecewise-canonical testability}
	A test $T$ is \emph{piecewise-canonical} if it acts in the following manner on the $t$-interval partition of the unknown input graph (or string). First, $T$ chooses how many vertices (entries, respectively) to take from each interval, where the number of vertices may differ between different intervals. Then $T$ picks the vertices (entries) from the intervals in a uniformly random manner. 
	Finally, $T$ queries precisely all pairs of picked vertices (or all entries, in the string case), and decides whether to accept or reject based on the ordered tuple of the values returned by the queries. 
	
	For strings of length $n$ over $\Sigma$, if $\mathcal{P}$ is earthmover resilient then it is also piecewise-canonically testable. The main idea of the proof is the following. 
	If one takes a string $S$ and partitions it into sufficiently many equitable interval parts $S_1, \ldots, S_t$, then ``shuffling'' entries inside each of the interval parts $S_i$ will not change the distance of $S$ to $\mathcal{P}$ significantly.
	With this idea in hand, it is not hard to observe that knowing the histograms $H_i$ of all parts $S_i$ (with respect to letters in $\Sigma$) is enough to estimate the distance of $S$ to $\mathcal{P}$ up to a small additive constant error.
	These histograms cannot be computed exactly with a constant number of queries, but it is well known that each $H_i$ can be estimated up to a small constant error with a constant number of queries, which is enough for our purposes.
	
	For properties $\mathcal{P}$ of ordered graphs (or images), earthmover resilience by itself is not enough to imply piecewise-canonical testability, but earthmover resilience and tolerant testability are already enough. The idea is somewhat similar to the one we used for strings.
	We may assume that $\mathcal{P}$ has a tolerant test $T$ whose set of queried pairs is always an induced subgraph of $G$.
	Like before, we partition our input graph $G$ into sufficiently many interval parts $V_1, \ldots, V_t$. Now the piecewise canonical test $T^*$ simulates a run of the original tolerant test $T$ (without making the actual queries that $T$ decided on). Denote the vertices that $T$ decides to pick in $V_i$ by $v_1^i, \ldots, v_{q_i}^{i}$. $T^*$ picks exactly $q_i$ vertices uniformly at random in each part $V_i$, and queries all edges between all chosen vertices. Now $T^{*}$ randomly ``assigns'' the labels $v_1^i, \ldots, v_{q_i}^{i}$ to the vertices that it queried from $V_i$, and returns the same answer that $T$ would have returned for this set of queries.
	It can be shown that $T^*$ is a test whose probability to return the same answer as $T$ is high, as desired. 
	For the full details, see Section \ref{sec:earth_tolerant_to_piecewise}.
	
	\paragraph{Piecewise-canonical testability to canonical testability}
	We describe the transformation for ordered graph properties; for strings this is very similar.
	Let $T$ be piecewise-canonical test for $\mathcal{P}$ that partitions the input into $t$ intervals $U_1, \ldots, U_t$.
	Consider the following canonical test $T'$: $T'$ picks $qt$ vertices $v_1 < \ldots < v_{qt}$ uniformly at random, for large enough $q$. Then $T'$ partitions the vertices into $t$ intervals $A_1 = \{v_1, \ldots, v_q\}, \ldots, A_t = \{v_{(t-1)q + 1}, v_{tq}\}$. Now $T'$ simulates a run of $T$. If $T$ chose to take $q_i$ vertices from $U_i$, then $T'$ picks exactly $q_i$ vertices from $A_i$. Finally, $T'$ queries all edges between all vertices it picked, and returns the same answer as $T$ (where the simulation of $T$ assumes here that the vertices that were actually picked from $A_i$ come from $U_i$).
	
	A rather straightforward but somewhat technical proof (that we do not describe at this point, see Section \ref{sec:piecewise_to_canonical}) shows that the probability that $T'$ returns the answer that $T$ would have returned on the same input is high, establishing the validity of $T'$.
	For the full details, see Section \ref{sec:piecewise_to_canonical}.
	
	\paragraph{Canonical testability, estimability and regular reducibility}
	The proofs of Theorems \ref{thm:canonical_to_estimable} and \ref{thm:canonical_vs_regular_reducible} are technically involved. Fortunately, the proofs follow the same spirit as those of the unordered case, considered in \cite{AlonFischerNewmanShapira2009, FischerNewman2005}, and in this paper we only describe how to adapt the unordered proofs to our case. 	
	
	Sections \ref{sec:canonical_to_estimable} and \ref{sec:canonical_vs_regular_reducibility} contain the proofs of Theorems \ref{thm:canonical_to_estimable} and \ref{thm:canonical_vs_regular_reducible}, respectively. It is shown in these sections that for our ordered case, in some sense it is enough to make the proofs work for $k$-partite graphs, for a fixed $k$. The intuition is that for our purposes, it is enough to view an ordered graph $G$ as a $k$-partite graph (for a large enough constant $k$), where the parts are the intervals of a $k$-interval partition of $G$.

	At this point, it is too difficult to explain the proof idea in high level without delving deeply into the technical details. Therefore, all details are deferred to Sections \ref{sec:canonical_to_estimable} and \ref{sec:canonical_vs_regular_reducibility}.

\section{Discussion and open problems}
\label{sec:discussion}
The earthmover resilient properties showcase, among other phenomena, an interesting connection between visual properties of images and the regularity-based machinery that was previously used to investigate unordered graphs. 
We believe that further research on the characterization problem for ordered structures would be interesting. It might also be interesting to investigate such problems using distance functions that are not Hamming distance, as was done, e.g., in \cite{BermanRaskhodnikovaYaroslavtev2014}.
Finally we present two open questions.

\paragraph{Characterization of testable earthmover-resilient properties}
In this work we provide a characterization of earthmover resilient tolerantly testable properties. Although using such tests might make more sense than using intolerant tests in the presence of noise in the input (a situation that is common in areas like image processing, that are related to image property testing), it would also be very interesting to provide a characterization of the \emph{testable} earthmover resilient properties. In particular, does there exist an earthmover resilient property that is testable but not tolerantly testable? The only known example of a (non earthmover resilient) property that is testable but not tolerantly testable is the PCPP-based property of \cite{FischerFortnow2006}, and it will certainly be interesting to find more examples of properties that have this type of behavior.

\paragraph{Alternative classes of properties} 
The class of earthmover resilient properties captures properties that are global in nature, and it will be interesting to identify and analyze some other wide classes of properties. A natural candidate is the class of all local properties \cite{BenEliezerKormanReichman2017}. We also believe that it might be possible to find other interesting classes of visual properties.

\section{Earthmover-resilience and mixing}
\label{sec:earthmover_mixing}
\begin{definition}
	Let $\mu$ and $\eta$ be two distributions over a finite family $\mathcal{H}$ of combinatorial structures. The \emph{variation distance} between $\mu$ and $\eta$ is $|\mu - \eta| = \frac{1}{2} \sum_{H \in \mathcal{H}} |\Pr_{\mu}(H) - \Pr_{\eta}(H)|$.
\end{definition}
The following folklore fact regarding the variation distance will be useful later. 
\begin{lemma}
	\label{lem:variation1}
	Let $\mu$ and $\eta$ be two distributions over a finite family $\mathcal{H}$. Then $|\mu - \eta| = \max_{\mathcal{F} \subseteq \mathcal{H}} |\Pr_{\mu}(\mathcal{F}) - \Pr_{\eta}(\mathcal{F})| = \sum_{H \in \mathcal{H}:\ \Pr_{\mu}(H) > \Pr_{\eta}(H)} (\Pr_{\mu}(H) - \Pr_{\eta}(H))$.
\end{lemma}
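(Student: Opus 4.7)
The plan is to prove a chain of equalities by identifying the optimal set $\mathcal{F}$ in the middle expression, which turns out to be the natural one $\mathcal{F}^+ \eqdef \{H \in \mathcal{H} : \Pr_\mu(H) > \Pr_\eta(H)\}$. First I would observe the crucial identity $\sum_{H \in \mathcal{H}} (\Pr_\mu(H) - \Pr_\eta(H)) = 0$, since both sums are $1$. Splitting $\mathcal{H}$ into $\mathcal{F}^+$ and its complement $\mathcal{F}^-$, this identity forces
\[
\sum_{H \in \mathcal{F}^+} (\Pr_\mu(H) - \Pr_\eta(H)) \;=\; \sum_{H \in \mathcal{F}^-} (\Pr_\eta(H) - \Pr_\mu(H)),
\]
so both positive and negative contributions to $\sum_H |\Pr_\mu(H) - \Pr_\eta(H)|$ are equal. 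Plugging this into the definition of $|\mu - \eta|$ immediately gives the rightmost equality of the lemma, i.e.\ $|\mu - \eta| = \sum_{H \in \mathcal{F}^+} (\Pr_\mu(H) - \Pr_\eta(H))$.

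For the middle equality I would prove both an upper and a lower bound on $\max_{\mathcal{F}} |\Pr_\mu(\mathcal{F}) - \Pr_\eta(\mathcal{F})|$. For the upper bound, fix any $\mathcal{F} \subseteq \mathcal{H}$ and note that
\[
\Pr_\mu(\mathcal{F}) - \Pr_\eta(\mathcal{F}) \;=\; \sum_{H \in \mathcal{F}} (\Pr_\mu(H) - \Pr_\eta(H)) \;\leq\; \sum_{H \in \mathcal{F} \cap \mathcal{F}^+} (\Pr_\mu(H) - \Pr_\eta(H)) \;\leq\; \sum_{H \in \mathcal{F}^+} (\Pr_\mu(H) - \Pr_\eta(H)),
\]
where in the first inequality we discarded the nonpositive summands, and in the second we enlarged to all of $\mathcal{F}^+$. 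The symmetric argument (using the earlier identity) shows $\Pr_\eta(\mathcal{F}) - \Pr_\mu(\mathcal{F})$ is bounded by the same quantity, so $|\Pr_\mu(\mathcal{F}) - \Pr_\eta(\mathcal{F})|$ is at most $\sum_{H \in \mathcal{F}^+}(\Pr_\mu(H) - \Pr_\eta(H)) = |\mu - \eta|$. For the matching lower bound, simply take $\mathcal{F} = \mathcal{F}^+$: this choice achieves equality by construction.

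There is essentially no obstacle here; the entire proof is routine bookkeeping around the zero-sum identity $\sum_H (\Pr_\mu(H) - \Pr_\eta(H)) = 0$. The only mild subtlety is whether to include the ties (entries $H$ with $\Pr_\mu(H) = \Pr_\eta(H)$) in $\mathcal{F}^+$ or $\mathcal{F}^-$; they contribute zero either way, so the statement as written (strict inequality) is consistent. I would write the argument in two short paragraphs and not belabor it, since the result is used only as a convenient reformulation of variation distance in the subsequent sections.
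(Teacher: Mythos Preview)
Your proof is correct and entirely standard. The paper itself does not prove this lemma at all; it is stated there as a ``folklore fact'' and left unproved, so there is nothing to compare against beyond noting that your argument is the usual one.
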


\begin{definition}
	\label{def:unordered_iso}
	An \emph{unordered isomorphism} between two ordered graphs $G,H:\binom{[n]}{2} \to \Sigma$ is a permutation $\sigma \colon[n] \to [n]$ such that $G(ij) = H(\sigma(i)\sigma(j))$ for any $i < j \in [n]$.
		
	Given a permutation $\sigma$ of $[n]$, the \emph{mixing set} of $\sigma$ is $MS(\sigma) = \{i<j:\sigma(i)>\sigma(j)\} \subseteq \binom{[n]}{2}$, its \emph{mixingness} is $D_m(\sigma) = |MS(\sigma)|$ and its
	{\em normalized mixingness} is $d_m(\sigma) = |MS(\sigma)|/\binom{n}{2}$. Given graphs $G$ and $H$, their normalized mixingness $d_m(G, H)$ is defined as the minimal normalized mixingness of an unordered isomorphism from $G$ to $H$ (and $+\infty$ if $G$ and $H$ are not isomorphic as unordered graphs).
\end{definition}

Our next goal is to show that the earthmover distance between two ordered graphs is equal to the mixingness between them. 
Given a permutation $\sigma \colon [n] \to [n]$, a \emph{basic move} for $\sigma$ transforms it to a permutation $\sigma'$ of the same length, such that for some $i$, $\sigma(i) = \sigma'(i+1)$ and $\sigma'(i) = \sigma(i+1)$, and $\sigma(j) = \sigma'(j)$ for any $j \neq i,i+1$.
Let $b(\sigma)$ denote the minimal number of basic moves required to turn $\sigma$ into the identity permutation $id$ satisfying $id(i) = i$ for any $i$.

\begin{lemma}
\label{lem:mixing_basic_perms}
$D_m(\sigma) = b(\sigma)$ for any permutation $\sigma \colon [n] \to [n]$.
\end{lemma}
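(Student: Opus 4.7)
The plan is to prove the two inequalities $b(\sigma) \leq D_m(\sigma)$ and $D_m(\sigma) \leq b(\sigma)$ separately, both by a straightforward induction on the number of basic moves. The key structural observation driving both directions is that a single basic move changes the size of the mixing set by exactly $\pm 1$: if we apply the basic move at position $i$ to produce $\sigma'$ from $\sigma$, then for every $j \notin \{i, i+1\}$ and every index $k \in \{i, i+1\}$, the relative order of $\sigma(j)$ and $\sigma(k)$ is preserved, and the only pair whose inversion status flips is $(i, i+1)$ itself. Thus $D_m(\sigma') = D_m(\sigma) + 1$ if $\sigma(i) < \sigma(i+1)$, and $D_m(\sigma') = D_m(\sigma) - 1$ if $\sigma(i) > \sigma(i+1)$.

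For the lower bound $D_m(\sigma) \leq b(\sigma)$, note that the identity permutation satisfies $D_m(\mathrm{id}) = 0$, and by the observation above, each basic move in a sequence transforming $\sigma$ into $\mathrm{id}$ decreases $D_m$ by at most $1$. Hence at least $D_m(\sigma)$ basic moves are required.

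For the upper bound $b(\sigma) \leq D_m(\sigma)$, proceed by induction on $D_m(\sigma)$. If $D_m(\sigma) = 0$ then $\sigma = \mathrm{id}$ and no moves are needed. Otherwise, $\sigma$ has at least one inversion, and the usual ``bubble sort'' argument shows that there must in fact exist an \emph{adjacent} inversion, i.e., an index $i$ with $\sigma(i) > \sigma(i+1)$: indeed, if no such $i$ existed then $\sigma$ would be increasing, contradicting $D_m(\sigma) > 0$. Applying the basic move at position $i$ yields $\sigma'$ with $D_m(\sigma') = D_m(\sigma) - 1$, so by the inductive hypothesis $b(\sigma') \leq D_m(\sigma) - 1$, giving $b(\sigma) \leq b(\sigma') + 1 \leq D_m(\sigma)$.

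This is a classical fact (the minimum number of adjacent transpositions sorting a permutation equals its number of inversions), so I do not expect any real obstacle; the only care needed is in verifying the $\pm 1$ behavior of $D_m$ under a single basic move, which is a direct case analysis on the pair $(i, i+1)$ versus pairs involving an outside index.
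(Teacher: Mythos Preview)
Your proposal is correct and follows essentially the same approach as the paper: both prove $D_m(\sigma)\le b(\sigma)$ by noting that a basic move can decrease $D_m$ by at most one, and prove $b(\sigma)\le D_m(\sigma)$ by induction, finding an adjacent inversion to undo. The only cosmetic difference is how the adjacent inversion is located---the paper starts from an arbitrary inversion $(i,j)$ and slides $i$ rightward to the largest $i'<j$ with $\sigma(i')>\sigma(j)$, whereas you argue directly that the absence of any adjacent inversion forces $\sigma$ to be increasing.
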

\begin{proof}
The inequality $D_m(\sigma) \leq b(\sigma)$ is trivial: Any basic move changes the relative order between a (single) pair of entries in the permutation, and thus cannot decrease the size of the mixing set by more than one.
Next we show by induction that $b(\sigma) \leq D_m(\sigma)$. $D_m(\sigma) = 0$ implies that $\sigma = id$ and $b(\sigma) = 0$ in this case. Now assume that $D_m(\sigma) > 0$ and pick some $i < j$ such that $\sigma(i) > \sigma(j)$. Take $i' < j$ to be the largest for which $\sigma(i') > \sigma(j)$ -- such an $i'$ exists since $\sigma(i) > \sigma(j)$. Note that $\sigma(i'+1) \leq \sigma(j) < \sigma(i')$ due to the maximality of $i'$. Take $\sigma'$ to be the result of the basic move between $i'$ and $i'+1$ in $\sigma$. $D_m(\sigma') = D_m(\sigma)-1$, and by the induction assumption we know that $b(\sigma') = D_m(\sigma') = D_m(\sigma) - 1$. But since $\sigma'$ is the result of a basic move on $\sigma$, we conclude that $b(\sigma) \leq b(\sigma')+1 = D_m(\sigma)$, as desired.
\end{proof}

The equivalence between the earthmover distance and the mixingness is now immediate.
\begin{lemma}\label{lem:mixdists}
	For any two graphs $G, H \colon \binom{[n]}{2} \to \Sigma$, $d_e(G, H) = d_m(G,H)$.
\end{lemma}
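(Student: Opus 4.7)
The plan is to establish a direct translation between sequences of basic moves transforming $G$ into $H$ and unordered isomorphisms from $G$ to $H$, and then to invoke Lemma \ref{lem:mixing_basic_perms} to replace basic-move counts by mixingness counts.

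First, I will unpack what a basic move does on the level of vertex permutations. A single basic move $\sigma_x$ produces $G'$ with $G'(ij) = G(\sigma_x(i)\sigma_x(j))$, by Definition \ref{def:earth_distance}. Composing $m$ basic moves $\sigma_{x_1}, \ldots, \sigma_{x_m}$ and setting $\pi = \sigma_{x_1} \circ \sigma_{x_2} \circ \cdots \circ \sigma_{x_m}$, the resulting graph satisfies $H(ij) = G(\pi(i)\pi(j))$, so $\sigma \eqdef \pi^{-1}$ is an unordered isomorphism from $G$ to $H$ in the sense of Definition \ref{def:unordered_iso}. Conversely, given any unordered isomorphism $\sigma$ from $G$ to $H$, factoring $\pi = \sigma^{-1}$ as a product of consecutive transpositions (which is always possible) yields a sequence of basic moves taking $G$ to $H$ whose length equals the factorization length. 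Minimising over all such factorizations gives
\[
D_e(G,H) \;=\; \min\bigl\{ b(\pi) \,:\, \pi^{-1} \text{ is an unordered isomorphism from } G \text{ to } H \bigr\}.
\]

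Second, I will observe the symmetry $D_m(\sigma) = D_m(\sigma^{-1})$ for every permutation $\sigma$: the assignment $(i,j) \mapsto (\sigma(j), \sigma(i))$ is a bijection between $MS(\sigma)$ and $MS(\sigma^{-1})$, since $i < j$ with $\sigma(i) > \sigma(j)$ is equivalent to $\sigma(j) < \sigma(i)$ with $\sigma^{-1}(\sigma(j)) > \sigma^{-1}(\sigma(i))$.

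Combining these observations with Lemma \ref{lem:mixing_basic_perms}, which states $b(\pi) = D_m(\pi)$, we obtain
\[
D_e(G,H) \;=\; \min_{\sigma} D_m(\sigma^{-1}) \;=\; \min_{\sigma} D_m(\sigma) \;=\; D_m(G,H),
\]
where $\sigma$ ranges over all unordered isomorphisms from $G$ to $H$; in the case where $G$ and $H$ are not isomorphic as unordered graphs, both sides are $+\infty$ by definition. Normalising by $\binom{n}{2}$ gives the claim. The only mildly subtle point is keeping track of the direction of composition and the passage to the inverse in the first step; beyond that the argument is pure bookkeeping.
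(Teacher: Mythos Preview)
Your proof is correct and follows essentially the same approach as the paper's: both reduce the graph-level statement to Lemma~\ref{lem:mixing_basic_perms} by identifying sequences of basic moves with permutations realizing an unordered isomorphism. The paper's proof is terser and glosses over the inverse bookkeeping (it simply asserts that $D_e(G,H)$ is the minimum of $b(\sigma)$ over unordered isomorphisms $\sigma$), whereas you track the direction of composition explicitly and invoke the symmetry $D_m(\sigma)=D_m(\sigma^{-1})$ to close the gap; this extra care is justified but does not constitute a different argument.
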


\begin{proof}
	$D_m(G,H)$ is the minimum value of $D_m(\sigma)$ among all unordered isomorphisms $\sigma$ from $G$ to $H$, and $D_e(G,H)$ is the minimum value of $b(\sigma)$ among all such isomorphisms. By Lemma \ref{lem:mixing_basic_perms}, these two values are equal, and thus the corresponding relative measures are also equal.
\end{proof}

\begin{lemma}
	\label{lem:earthmover_to_edit}
	Let $\delta:(0,1) \to (0,1)$ and let $\mathcal{P}$ be a $\delta$-earthmover-resilient property. If two graphs $G,H \colon \binom{[n]}{2} \to \Sigma$ satisfy $d_e(G,H) \leq \delta(\epsilon)$ for some $\epsilon > 0$, then $d_H(G,\mathcal{P}) \leq d_H(H, \mathcal{P}) + \epsilon$.
\end{lemma}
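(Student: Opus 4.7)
The plan is to use the earthmover-resilience of $\mathcal{P}$ on a suitably relabeled nearest point to $H$, and then move back to $G$ via the isomorphism that witnesses the earthmover distance $d_e(G,H)$.

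First I would invoke Lemma~\ref{lem:mixdists} to obtain an unordered isomorphism $\sigma \colon [n] \to [n]$ from $G$ to $H$ (so $G(ij) = H(\sigma(i)\sigma(j))$ for all $i<j$) whose normalized mixingness satisfies $d_m(\sigma) \leq d_e(G,H) \leq \delta(\epsilon)$. Next let $H^{*} \in \mathcal{P}$ be a closest element of $\mathcal{P}$ to $H$, so that $d_H(H, H^{*}) = d_H(H,\mathcal{P})$, and define a new graph $G^{*}$ by transporting $H^{*}$ along $\sigma$, i.e.\ $G^{*}(ij) \eqdef H^{*}(\sigma(i)\sigma(j))$. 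Then $\sigma$ is also an unordered isomorphism from $G^{*}$ to $H^{*}$ of the same mixingness, so by Lemma~\ref{lem:mixdists} again, $d_e(G^{*}, H^{*}) = d_m(\sigma) \leq \delta(\epsilon)$.

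Now I would apply the two key inputs. Since $H^{*} \in \mathcal{P}$ and $\mathcal{P}$ is $\delta$-earthmover-resilient, the bound $d_e(G^{*}, H^{*}) \leq \delta(\epsilon)$ forces $G^{*}$ to be $\epsilon$-close to $\mathcal{P}$ in Hamming distance. Separately, because $G$ and $G^{*}$ arise from $H$ and $H^{*}$ respectively by the \emph{same} relabeling $\sigma$ of pairs, the disagreement sets match up bijectively, giving the isometry $d_H(G, G^{*}) = d_H(H, H^{*}) = d_H(H,\mathcal{P})$.

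Finally, the triangle inequality for Hamming distance yields
\[
d_H(G,\mathcal{P}) \;\leq\; d_H(G, G^{*}) + d_H(G^{*}, \mathcal{P}) \;\leq\; d_H(H,\mathcal{P}) + \epsilon,
\]
which is the desired conclusion. There is no real obstacle here; the only point requiring a moment's care is verifying that transporting $H^{*}$ by $\sigma$ preserves Hamming distance against the transported graph (which follows since $\sigma$ induces a bijection on $\binom{[n]}{2}$), and that the mixingness of $\sigma$ viewed as an isomorphism from $G^{*}$ to $H^{*}$ is the same as when viewed as an isomorphism from $G$ to $H$ -- both are properties of $\sigma$ alone.
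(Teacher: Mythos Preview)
Your proof is correct and follows essentially the same route as the paper's: take the nearest point in $\mathcal{P}$ to one of the two graphs, transport it along the isomorphism $\sigma$ witnessing the earthmover distance, apply earthmover resilience, and finish with the triangle inequality. The only cosmetic difference is that the paper starts from the nearest point to $G$ (thus literally obtaining the symmetric inequality $d_H(H,\mathcal{P}) \leq d_H(G,\mathcal{P}) + \epsilon$, which is equivalent since the hypothesis is symmetric in $G$ and $H$), whereas you start from the nearest point to $H$ and obtain the stated direction directly; one tiny slip is that you write $d_e(G^{*},H^{*}) = d_m(\sigma)$ where only $\leq$ is guaranteed, but this is the direction you need.
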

\begin{proof}
	Suppose that $G$ and $H$ satisfy $d_e(G,H) \leq \delta(\epsilon)$. By definition, there exists an unordered isomorphism $\sigma \colon G \to H$ such that $d_m(G,H) = d_m(\sigma)$. Let $G' \colon \binom{[n]}{2} \to \Sigma$ be the graph in $\mathcal{P}$ that is closest to $G$ (in Hamming distance). 
	Consider the graph $H'$ satisfying $H'(\sigma(u)\sigma(v)) = G'(uv)$ for any $u \neq v \in V$, then $d_H(H, H') = d_H(G, G')$. Note that $\sigma$ is an unordered isomorphism between $G'$ and $H'$. It follows, building on Lemma \ref{lem:mixdists}, that
	$d_m(G', H') \leq d_m(\sigma) = d_m(G,H) = d_e(G,H) \leq \delta(\epsilon)$. This implies (by the earthmover resilience) that $H'$ is $\epsilon$-close to $\mathcal{P}$. The triangle inequality concludes the proof.
\end{proof}

\subsection*{Canonical testability implies earthmover resilience}

\begin{definition}
	Let $H$ and $G$ be $\Sigma$-edge-colored ordered graphs on $q$ and $n$ vertices respectively. The number of (ordered) copies of $H$ in $T$, i.e., the number of induced subgraphs of $G$ of size $q$ isomorphic to $H$, is denoted by $h(H, G)$. The \emph{density} of $H$ in $G$ is $t(H,G) = h(H, G) / \binom{n}{q}$ (where $t(H,G) = 0$ if $q > n$).
	The \emph{$q$-statistic} of $G$ is the vector $(t(H, G))_{H \in \mathcal{H}_q}$, where $\mathcal{H}_q$ is the family of all $\Sigma$-edge-colored ordered graphs with $q$ vertices.
\end{definition}

Every property of ordered graphs already testable by a canonical test is $\delta$-earthmover-resilient for some $\delta$ (depending on the number of its query vertices as a function of $\epsilon$), as implied by the following lemma.

\begin{lemma}\label{lem:smallcaninf}
	Let $\epsilon, \delta > 0$.
	For any canonical $\epsilon$-test querying up to $q$ vertices and any two graphs $G$ and $G'$ of either Hamming distance or earthmover distance at most $\delta$, the difference between the acceptance probabilities of $G$ and of $G'$ is at most $\delta\binom{q}{2}$.
\end{lemma}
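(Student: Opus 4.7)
The plan is to express the difference in acceptance probability as the total-variation distance between the distributions over the ordered $q$-vertex induced subgraphs that the canonical test sees when run on $G$ versus on $G'$, and to bound this via an explicit coupling whose disagreement probability is at most $\delta\binom{q}{2}$. A canonical $\epsilon$-test first picks $q$ vertices uniformly at random, queries all $\binom{q}{2}$ edges among them, and then decides accept/reject purely as a function of the resulting ordered induced subgraph; hence any coupling of the sampled ordered subgraphs $H_G$ and $H_{G'}$ whose disagreement probability is at most $\rho$ automatically bounds $|\Pr[\text{accept }G] - \Pr[\text{accept }G']|$ by $\rho$.

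For the Hamming case I would use the identity coupling: sample the same uniform $q$-subset $U \subseteq [n]$ on both sides. Then $G[U]$ and $G'[U]$ coincide as ordered subgraphs provided $G$ and $G'$ agree on every pair in $\binom{U}{2}$. Each such pair is marginally uniform in $\binom{[n]}{2}$, and $G$ and $G'$ disagree on at most $\delta\binom{n}{2}$ pairs, so a union bound over the $\binom{q}{2}$ sampled pairs yields the desired $\delta\binom{q}{2}$ bound.

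For the earthmover case I would first invoke Lemma \ref{lem:mixdists} to obtain an unordered isomorphism $\sigma \colon G \to G'$ with $D_m(\sigma) = D_e(G,G') \leq \delta\binom{n}{2}$. The coupling I have in mind samples $U$ uniformly in $\binom{[n]}{q}$, uses $U$ on the $G$ side, and uses $V = \sigma(U)$ on the $G'$ side; since $\sigma$ is a bijection, $V$ is also uniform in $\binom{[n]}{q}$, so this is a legitimate coupling of two canonical samples. The key observation is that because $\sigma$ is an unordered isomorphism, the ordered subgraphs $G[U]$ and $G'[V]$ are identical whenever $\sigma$ restricted to $U$ preserves order, i.e., whenever $\binom{U}{2} \cap MS(\sigma) = \emptyset$. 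Markov's inequality applied to the expected size $|MS(\sigma)| \cdot \binom{q}{2}/\binom{n}{2} \leq \delta\binom{q}{2}$ of this intersection then closes the argument.

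The one point I expect to require the most care is the order-preservation claim: one needs to verify that when $\sigma$ is order-preserving on $U$, the canonical test's ordered tuple on $U$ under $G$ coincides entry-by-entry with its ordered tuple on $V = \sigma(U)$ under $G'$, which follows from the defining identity $G(u_i u_j) = G'(\sigma(u_i)\sigma(u_j))$ together with the fact that the reordering of $\sigma(U)$ is then the identity. Once this tiny calculation is in place, both cases of the lemma reduce to a one-line union bound or Markov estimate, and the analogous statement for strings (with $\binom{q}{2}$ replaced by $q$) follows by the same argument applied to $\binom{[n]}{1}$.
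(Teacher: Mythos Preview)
Your proposal is correct and essentially identical to the paper's proof: for the Hamming case you both use the identity coupling with a union bound over the $\binom{q}{2}$ queried pairs, and for the earthmover case you both invoke Lemma~\ref{lem:mixdists} to obtain an unordered isomorphism $\sigma$ of small mixingness, couple via $U \mapsto \sigma(U)$, and observe that the ordered induced subgraphs agree unless some pair in $\binom{U}{2}$ lies in $MS(\sigma)$, which happens with probability at most $\delta\binom{q}{2}$ by a union bound (your ``Markov'' formulation is the same thing).
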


\begin{proof}
	We may assume that the test queries exactly $q$ vertices. For Hamming distance, the statement is well known, and follows easily by taking a union bound over all $\binom{q}{2}$ queried edges. Assume then that $d_e(G,G') \leq \delta$.
	Let $\mu, \mu'$ be the $q$-statistics of $G$, $G'$ respectively, where $G,G' \colon \binom{[n]}{2} \to \Sigma$ are two graphs with earthmover distance at most $\delta$ between them.   
	By Lemma \ref{lem:variation1} it will be enough to show that $|\mu - \mu'| \leq \delta \binom{q}{2}$.  
	Lemma \ref{lem:mixdists} implies that there is an unordered isomorphism $\sigma \colon G \to G'$ with $d_m(\sigma) \leq \delta$.
	
	For any set $Q$ of $q$ vertices, let $\sigma(Q) = \{\sigma(v) : v \in Q\}$, and note that $Q \mapsto \sigma(Q)$ is a bijective mapping from $\mathcal{H}_q$ to itself. Observe that the induced subgraph $G[Q]$ can be non-isomorphic to $G'[\sigma(Q)]$ (as an ordered graph on $q$ vertices) only if there exist two vertices $u,v \in Q$ satisfying $uv \in MS(\sigma)$. By a union bound, the probability of a uniformly random $Q \in \mathcal{H}_q$ to have such a pair is at most $d_m(\sigma) \binom{q}{2} \leq \delta \binom{q}{2}$, implying that $|\mu - \mu'| \leq \delta \binom{q}{2}$.
\end{proof}

The next lemma proves the second (and easier) direction of Theorem \ref{thm:earthmover_tolerant_to_canonical}. It uses Lemma \ref{lem:smallcaninf} to conclude that a canonically testable property is earthmover-resilient and tolerantly testable.
\begin{lemma}
	\label{lem:canonical_to_earthmover}
	Let $\mathcal{P}$ be an ordered graph property. Suppose that $\mathcal{P}$ has a canonical $\epsilon$-test $T$ making $q(\epsilon)$ vertex queries for any $\epsilon > 0$.
	Then $\mathcal{P}$ is $\delta$-earthmover-resilient and $\delta$-tolerantly $\epsilon$-testable with $9q(\epsilon)$ vertex queries, where $\delta(\epsilon) = 1 / 20\binom{q(\epsilon)}{2}$ for any $\epsilon > 0$.
\end{lemma}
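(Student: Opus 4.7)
The whole argument rests on Lemma \ref{lem:smallcaninf}, which says that for the canonical $\epsilon$-test $T$ of $\mathcal{P}$ (with $q=q(\epsilon)$ vertex queries), any two graphs $G,G'$ with $d_H(G,G')\le\delta$ or $d_e(G,G')\le\delta$ satisfy $\bigl|\Pr[T(G)\text{ accepts}]-\Pr[T(G')\text{ accepts}]\bigr|\le\delta\binom{q}{2}$. Setting $\delta=\delta(\epsilon)=1/(20\binom{q(\epsilon)}{2})$, the gap between the two acceptance probabilities is at most $1/20$.

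\paragraph{Earthmover resilience.}
Suppose $G\in\mathcal{P}$ and $d_e(G,G')\le\delta(\epsilon)$. Since $T$ accepts $G$ with probability at least $2/3$, Lemma \ref{lem:smallcaninf} gives that $T$ accepts $G'$ with probability at least $2/3-1/20=37/60>1/3$. If $G'$ were $\epsilon$-far from $\mathcal{P}$, then $T$ would accept $G'$ with probability at most $1/3$, a contradiction. Hence $G'$ is $\epsilon$-close to $\mathcal{P}$, proving that $\mathcal{P}$ is $\delta$-earthmover resilient.

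\paragraph{The $\delta$-tolerant $\epsilon$-test with $9q(\epsilon)$ queries.}
Define $T^{*}$ to be the test that runs $T$ independently $9$ times (on freshly sampled vertex sets, using $9q(\epsilon)$ vertex queries in total) and returns the majority verdict. To analyze it we use Lemma \ref{lem:smallcaninf} once per run together with a standard amplification bound on $\mathrm{Bin}(9,p)$.

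If $G$ is $\delta(\epsilon)$-close to $\mathcal{P}$, pick $G^{*}\in\mathcal{P}$ with $d_H(G,G^{*})\le\delta(\epsilon)$; since $T$ accepts $G^{*}$ with probability at least $2/3$, Lemma \ref{lem:smallcaninf} yields $\Pr[T(G)\text{ accepts}]\ge 2/3-1/20=37/60$. Each of the nine independent runs of $T$ on $G$ therefore accepts with probability $p\ge 37/60$, and a direct computation of $\Pr[\mathrm{Bin}(9,p)\ge 5]$ with $p=37/60$ shows that this probability exceeds $2/3$, so $T^{*}$ accepts $G$ with probability at least $2/3$. If instead $G$ is $\epsilon$-far from $\mathcal{P}$, each run accepts with probability at most $1/3$, and by the symmetry $\Pr[\mathrm{Bin}(9,1/3)\ge 5]=\Pr[\mathrm{Bin}(9,2/3)\le 4]\le 0.145$, so $T^{*}$ rejects with probability at least $1-0.145>2/3$.

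\paragraph{Main obstacle.}
Conceptually, the proof is straightforward once Lemma \ref{lem:smallcaninf} is in hand; the only quantitative point requiring care is balancing the additive slack $1/20$ contributed by the $\delta$-closeness with the amplification loss of the $9$-fold majority vote. The choice $\delta(\epsilon)=1/(20\binom{q(\epsilon)}{2})$ and the boost factor $9$ are precisely matched so that the acceptance probability in the close case remains above the majority threshold after paying the $1/20$ penalty, while the rejection side still comfortably exceeds $2/3$.
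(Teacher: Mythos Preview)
Your proof is correct and follows essentially the same approach as the paper: both parts rest on Lemma~\ref{lem:smallcaninf} to bound the change in acceptance probability by $\delta(\epsilon)\binom{q(\epsilon)}{2}=1/20$, then derive earthmover resilience by the contradiction $37/60>1/3$, and obtain the tolerant test by a $9$-fold majority amplification of $T$. The paper's proof is terser but identical in structure and in all the key constants.
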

\begin{proof}
	Let $\epsilon > 0$, and suppose that $G$ and $G'$ are of earthmover distance at most $\delta(\epsilon)$ between them, where $G$ satisfies $\mathcal{P}$; to prove the earthmover resilience, we need to show that $G'$ is $\epsilon$-close to satisfying $\mathcal{P}$. Since $G \in \mathcal{P}$, it is accepted by $T$ with probability at least $2/3$. By Lemma \ref{lem:smallcaninf}, the acceptance probability of $G'$ by $T$ is at least $\frac{2}{3} - \delta(\epsilon) \binom{q(\epsilon)}{2} > 1/3$. Since $T$ rejects any graph $\epsilon$-far from $\mathcal{P}$ with probability at least $2/3$, we conclude that $G'$ must be $\epsilon$-close to $\mathcal{P}$.
	
	For the second part, regarding tolerant testability, Lemma \ref{lem:smallcaninf} implies that for any graph that is $\delta(\epsilon)$-close to satisfying $\mathcal{P}$, the acceptance probability of $T$ is at least $2/3 - \delta(\epsilon) \binom{q(\epsilon)}{2} > 0.61$. By applying $T$ independently $9$ times and accepting if and only if the majority of the runs accepted, we get a test that accepts $\delta(\epsilon)$-close graphs with probability at least $2/3$ and rejects $\epsilon$-far graphs with probability at least $2/3$ as well. This test can be made canonical with no need for additional queries.
\end{proof}

Let us finish with two comments. First, in the last two lemmas it was implicitly assumed that the canonical test is a deterministic one, but they also hold for randomized ones: The fact that $|\mu - \mu'| \leq \delta \binom{q}{2}$ in Lemma \ref{lem:smallcaninf} is actually enough to imply the statement of Lemma \ref{lem:smallcaninf} for any (deterministic or randomized) canonical test, and Lemma \ref{lem:canonical_to_earthmover} follows accordingly.

Second, the results in this section, along with Sections \ref{sec:earth_tolerant_to_piecewise} and \ref{sec:piecewise_to_canonical}, are not exclusive to two-dimensional structures, and naturally generalize to $k$-dimensional structures for any $k$. Thus, in ordered hypergraphs and tensors in three dimensions or more, it is still true that the combination of earthmover resilience and tolerant testability is equivalent to canonical testability.

\section{Piecewise-canonical testability}
\label{sec:earth_tolerant_to_piecewise}
In this section, we show that ER string properties and ER tolerantly testable ordered graph properties have a constant-query \emph{piecewise canonical test}. 
This is a test that consider a $k$-interval partition of the input, picking a predetermined number of vertices (or entries, in the string case) uniformly at random from each interval (this number may differ between different intervals), and finally, queries all edges between the picked vertices from all intervals.
We always assume that our tolerant tests are non-adaptive and based on $q$ query vertices (we assume they query the entire induced subgraph even if they do not use all of it). Note that unlike the case of unordered graphs, the move from an adaptive test to a non-adaptive one can cause an exponential blowup in the query complexity (we may need to ``unroll'' the entire decision tree).

\begin{definition}
	A (probabilistic) {\em piecewise-canonical} test with $k$ parts and $q$ query vertices for a property $\mathcal{P}$ of functions $f \colon \binom{[n]}{\ell} \to \Sigma$ works as follows. First, the test non-adaptively selects (possibly non-deterministically) numbers $q_1,\ldots,q_k$ that sum up to $q$, and then it considers a $k$-interval partition $I_1, I_2, \ldots, I_k$ of the input function $f$, selecting a uniformly random set of $q_j$ vertices from $I_j$ for every $1\leq j\leq k$. The test finally accepts or rejects $f$ based only on the selected numbers $q_1,\ldots,q_k$ and the unique function $f' \colon \binom{[q]}{\ell} \to \Sigma$ that is isomorphic (in the ordered sense) to the restriction of $f$ on the selected vertices.
	
	A property $\mathcal{P}$ is {\em piecewise-testable} if for for every $\epsilon$ there exist $k(\epsilon)$ and $q(\epsilon)$ for which $\mathcal{P}$ has a piecewise canonical $\epsilon$-test with $k(\epsilon)$ parts and $q(\epsilon)$ query vertices.
\end{definition}

\begin{remark}
	\label{rem:determinstic_canonical_piecewise}
	In Section \ref{sec:prelims} it was noted that a probabilistic canonical test for a property can be transformed into a deterministic one, with the same confidence, as was shown in \cite{GoldreichTrevisan2003}. This is true for any choice of confidence $c$ (not only the ``default'' confidence $c=2/3$). 
	Since one can always amplify a (probabilistic or deterministic) test to get a test of the same type with confidence arbitrarily close to $1$, we conclude that if a property $\mathcal{P}$ has a probabilistic canonical test with a certain confidence $c > 1/2$, then for any $\zeta > 0$, $\mathcal{P}$ has a deterministic canonical test with confidence at least $1-\zeta$.
	
	All of the above is also true for piecewise-canonical tests; the proof for canonical tests carries over naturally to this case, so we omit it. Here, the simulating deterministic test has the same number of parts as the original test.
\end{remark}

\subsection{Strings: Earthmover resilience to piecewise-canonical testability}
\label{subsec:strings_proof1}
In this subsection, we prove that ER properties of strings are piecewise canonically testable. In Section \ref{sec:piecewise_to_canonical}, we show that the latter condition implies canonical testability.

For a string $S:[n] \to \Sigma$ let $d_{\sigma}(S) = |S^{-1}(\sigma)| / n$ denote the \emph{density} of $\sigma$ in $S$. Let $T(S) = (d_{\sigma}(S))_{\sigma \in \Sigma}$ denote the distribution vector of letters in $S$. 
The following well known fact is important for the proof. 
\begin{fact}
	\label{fact:estimate_dist_vector}
	The distribution vector of a string over $\Sigma$ can be approximated up to variation distance $\zeta$, with probability at least $1 - \tau$, using $O(|\Sigma|^2 \log (\tau^{-1})  \zeta^{-2})$ queries. 
\end{fact}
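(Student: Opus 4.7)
The plan is to use a direct empirical sampling procedure: estimate each coordinate of $T(S)$ by its empirical frequency, apply Hoeffding/Chernoff concentration per coordinate, and union bound over the alphabet. Concretely, I would query $m$ positions $x_1,\ldots,x_m \in [n]$ chosen independently and uniformly at random, and for each letter $\sigma \in \Sigma$ output the empirical density $\hat d_\sigma = \frac{1}{m}|\{i \in [m] : S(x_i) = \sigma\}|$, returning $\hat T(S) = (\hat d_\sigma)_{\sigma \in \Sigma}$. The only quantity to calibrate is the sample size $m$.

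For each fixed $\sigma$, the count $m\hat d_\sigma$ is a sum of $m$ independent $\mathrm{Bernoulli}(d_\sigma)$ random variables, so Hoeffding's inequality gives
\[
\Pr\!\left[|\hat d_\sigma - d_\sigma| > \zeta/|\Sigma|\right] \;\leq\; 2\exp\!\bigl(-2m\zeta^2/|\Sigma|^2\bigr).
\]
A union bound over $\sigma \in \Sigma$ shows that every coordinate is estimated within additive error $\zeta/|\Sigma|$ except with probability at most $2|\Sigma|\exp(-2m\zeta^2/|\Sigma|^2)$. Taking $m = C|\Sigma|^2 \zeta^{-2} \log(\tau^{-1})$ for a sufficiently large absolute constant $C$ makes this failure probability at most $\tau$; the mild extra $\log|\Sigma|$ that arises from the union bound either gets absorbed into $\log(\tau^{-1})$ in the regime where the statement is interesting, or is removed altogether by invoking a standard multinomial concentration bound such as Bretagnolle--Huber--Carol in place of the per-coordinate Hoeffding step.

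Conditioned on this good event, the variation distance between the true and the empirical distribution is
\[
|T(S) - \hat T(S)| \;=\; \tfrac{1}{2}\sum_{\sigma \in \Sigma}|\hat d_\sigma - d_\sigma| \;\leq\; \tfrac{1}{2} \cdot |\Sigma| \cdot \tfrac{\zeta}{|\Sigma|} \;=\; \tfrac{\zeta}{2} \;\leq\; \zeta,
\]
matching the claim. I do not anticipate any real obstacle here; this is essentially a textbook Chernoff-plus-union-bound argument. The only subtle point worth flagging, which explains why the final query bound carries $|\Sigma|^2$ rather than $|\Sigma|$, is that controlling the $\ell_1$-type sum $\sum_\sigma|\hat d_\sigma - d_\sigma|$ by $\zeta$ forces a per-coordinate target of $\zeta/|\Sigma|$, and the Hoeffding exponent therefore costs an extra factor of $|\Sigma|^2$.
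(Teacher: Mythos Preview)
Your argument is correct and is exactly the standard textbook route to this fact. The paper itself does not provide a proof: it simply labels the statement a ``well known fact'' and moves on, so there is nothing to compare against. Your Hoeffding-per-coordinate plus union-bound analysis yields precisely the $O(|\Sigma|^2 \zeta^{-2}\log(\tau^{-1}))$ query count claimed, and your remark about the stray $\log|\Sigma|$ term is accurate---it can be absorbed into the constant (since $|\Sigma|$ is fixed throughout the paper) or eliminated via a direct $\ell_1$ concentration bound for the multinomial, as you note.
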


Fix a function $\delta \colon (0,1) \to (0,1)$, a $\delta$-earthmover resilient property $\mathcal{P}$ of strings over $\Sigma$, and $\epsilon > 0$. Take $t = \lceil 1 / 2\delta(\epsilon/2) \rceil$. For any string $S$ over $\Sigma$, let $S_1, \ldots, S_t$ be the $t$-interval partition of $S$ and let the \emph{$t$-interval distribution} $\Gamma_t(S) = (T(S_1), \ldots, T(S_t))$ denote the $t$-tuple of the distribution vectors of $S_1, \ldots, S_t$. For $S$ as above and another string $S'$ over $\Sigma$ with $t$-interval partition $S'_1, \ldots, S'_t$, the \emph{$t$-aggregated distance} between $S$ and $S'$ is $d_A(S, S') = \sum_{i=1}^{t} |T(S_i) - T(S'_i)| \cdot |S_i|/ |S|$; recall that $|T(S_i) - T(S'_i)|$ is the variation distance between $T(S_i)$ and $T(S'_i)$. As usual, we define $d_A(S, \mathcal{P}) = \min_{S' \in \mathcal{P}} d_A(S,S')$.
The next easy lemma relates between the Hamming distance and the $t$-aggregated distance of $S$ to $\mathcal{P}$.
\begin{lemma}
	\label{lem:string_aggregate}
	For any string $S$ over $\Sigma$ we have $0 \leq d_H(S, \mathcal{P}) - d_A(S, \mathcal{P}) \leq \epsilon / 2$.
\end{lemma}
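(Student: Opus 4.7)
The plan is to prove the two inequalities separately. For the lower bound $d_A(S,\mathcal{P}) \le d_H(S,\mathcal{P})$, I would argue pointwise: for any $S' \in \mathcal{P}$ and each interval $i$, the variation distance between histograms satisfies $|T(S_i) - T(S'_i)| \le D_H(S_i, S'_i)/|S_i|$, since a letter appearing at an agreeing position of $S_i$ and $S'_i$ contributes equally to both histograms and thus cannot be counted in their variation distance (equivalently, the position-coupling of the two uniform distributions on $S_i$ and $S'_i$ has disagreement probability $D_H(S_i, S'_i)/|S_i|$). Weighting by $|S_i|/|S|$ and summing yields $d_A(S, S') \le d_H(S, S')$; minimizing over $S' \in \mathcal{P}$ then gives the desired inequality.

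For the upper bound, I would take $S' \in \mathcal{P}$ attaining $d_A(S,\mathcal{P})$ and build an intermediate string $S''$ by permuting each block $S'_i$ internally so as to agree with $S_i$ on as many positions as possible. A greedy letter-matching argument shows that the maximum number of matches in interval $i$ is $\sum_\sigma \min\{|S_i^{-1}(\sigma)|, |(S'_i)^{-1}(\sigma)|\} = |S_i|(1 - |T(S_i) - T(S'_i)|)$, so that $d_H(S, S'') = d_A(S, S') = d_A(S,\mathcal{P})$. Since $S''$ differs from $S'$ only by permutations confined to individual intervals, its normalized earthmover distance satisfies $d_e(S', S'') \le \sum_i \binom{|S_i|}{2}/\binom{n}{2} \le \delta(\epsilon/2)$ by the choice of $t = \lceil 1/(2\delta(\epsilon/2))\rceil$. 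Invoking $\delta$-earthmover resilience on $S' \in \mathcal{P}$ then gives $d_H(S'', \mathcal{P}) \le \epsilon/2$, and the triangle inequality closes the argument: $d_H(S, \mathcal{P}) \le d_H(S, S'') + d_H(S'', \mathcal{P}) \le d_A(S, \mathcal{P}) + \epsilon/2$.

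I expect the main subtlety to lie in pinning down the numerical bound $\sum_i \binom{|S_i|}{2}/\binom{n}{2} \le \delta(\epsilon/2)$ precisely, handling the rounding in the block sizes $|S_i| \in \{\lfloor n/t\rfloor, \lceil n/t\rceil\}$ and confirming that the stated value of $t$ suffices (perhaps after taking $n$ large enough so that the additive lower-order correction is absorbed). Everything else reduces to elementary counting identities combined with a single application of earthmover resilience.
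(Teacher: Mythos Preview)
Your proposal is correct and follows essentially the same route as the paper: both arguments hinge on the fact that permuting entries within the $t$ intervals incurs earthmover distance at most roughly $1/t \le \delta(\epsilon/2)$, and both use the greedy per-interval matching identity relating Hamming distance to the aggregated (histogram) distance. The only cosmetic difference is the direction of the permutation: the paper permutes $S$ within intervals to the string $S'$ closest to $\mathcal{P}$ and invokes Lemma~\ref{lem:earthmover_to_edit}, whereas you permute the optimal $S'\in\mathcal{P}$ toward $S$ and invoke earthmover resilience directly---your version is arguably a hair more direct, but the content is the same, including the minor constant-factor slack in $t$ that you flag.
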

\begin{proof}
	Let $S'$ be the string that is closest to $\mathcal{P}$ among those that can be generated from $S$ only using basic moves inside the intervals $S_1, \ldots, S_t$. In particular, it is trivial that $d_H(S', \mathcal{P}) \leq d_H(S, \mathcal{P})$ and we know by Lemma \ref{lem:mixdists} that $d_e(S, S') \leq 2/t \leq \delta(\epsilon/2)$. By Lemma \ref{lem:earthmover_to_edit}, we get that $d_H(S, \mathcal{P}) - d_H(S', \mathcal{P}) \leq \epsilon/2$. On the other hand, $d_H(S', \mathcal{P}) = d_A(S', \mathcal{P}) = d_A(S, \mathcal{P})$ follows by the definitions of the distance functions and the minimality of $S'$. 
\end{proof}

Finally we present the piecewise canonical test for $\mathcal{P}$. More accurately, we describe a  piecewise-canonical algorithm $\mathcal{A}$ that, given an unknown string $S$ over $\Sigma$ of an unknown length $n$, approximates the $t$-aggregated distance of $S$ to $\mathcal{P}$ up to an additive error of $\epsilon/6$, with probability at least $2/3$. The test simply runs $\mathcal{A}$ and accepts if and only if its output value is at most $\epsilon/4$.
The algorithm $\mathcal{A}$ acts as follows. First, it runs the algorithm of Fact \ref{fact:estimate_dist_vector} in each interval of the $t$-interval partition of $S$, with parameters $\zeta = \epsilon / 6$ and $\tau = 1/3t$. For any $1 \leq i \leq t$, let $T^*_i$ denote the distribution returned by this algorithm for interval $i$. Then, Algorithm $\mathcal{A}$ returns $r = \min_{S' \in \mathcal{P}} \sum_{i=1}^{t} |T^*_i - T_{S'_i}| \cdot |S_i| / |S|$.

With probability $2/3$, we get that $|T(S_i) - T^*_i| \leq \epsilon/6$ for any $i$. Suppose from now on that the latter happens. It follows from the triangle inequality for the variation distance that $d_A(S, \mathcal{P}) \leq d_A(S, S') \leq r + \epsilon/6$, where $r$ is the minimum defined above and $S' \in \mathcal{P}$ is the string achieving this minimum. Conversely, 
there exists $S'' \in \mathcal{P}$ such that $d_A(S, S'') = d_A(S, \mathcal{P})$. But the minimality of $S'$ implies that $\sum_{i=1}^{t} |T^*_i - T_{S''_i}| \cdot |S_i| / |S| \geq r$, and again, from the triangle inequality we get that $d_A(S, S'') \geq r - \epsilon/6$. 
To summarize,
$$
r - \epsilon / 6 \leq d_A(S, S'') = d_A(S, P) \leq d_A(S, S') \leq r + \epsilon/6
$$ 
which means that $r$ is, with probability at least $2/3$, an $(\epsilon/6)$-additive approximation of $d_A(S, \mathcal{P})$. Thus, if $S$ satisfies $\mathcal{P}$ (meaning that $d_A(S, \mathcal{P}) = 0$) then with probability $2/3$ the algorithm $\mathcal{A}$ returns $r \leq \epsilon / 6$ and the test accepts. On the other hand, if $S$ is $\epsilon$-far from $\mathcal{P}$ then $d_A(S, \mathcal{P}) \geq \epsilon / 2$ by the above lemma, and $\mathcal{A}$ returns $r \geq \epsilon / 2 - \epsilon / 6 = \epsilon / 3$ (making the test reject) with probability at least $2/3$, as desired.

\subsection{Ordered graphs: ER and tolerant tests to piecewise-canonical tests}
The next lemma shows that a tolerant test for an ER property $\mathcal{P}$ of ordered graphs can be translated, in an efficient manner, into a piecewise-canonical test for $\mathcal{P}$.

\begin{lemma}\label{lem:topiecewise}
	Let $q:(0,1)\to\mathbb{N}$, $\eta:(0,1)\to(0,1)$, and $\delta:(0,1) \to (0,1)$, and suppose that $\mathcal{P}$ is a $\delta$-earthmover-resilient $\eta$-tolerantly testable property of ordered graphs,
	where for any $\epsilon > 0$, the corresponding $(\epsilon, \eta(\epsilon))$-tolerant test queries $q(\epsilon)$ vertices.
	Then for any $\epsilon > 0$ there exist $q'$ and $k$ such that $\mathcal{P}$ has a piecewise-canonical $\epsilon$-test with $k$ parts and $q'$ query vertices. Moreover, if $q, \eta, \delta$ are polynomial in $\epsilon$, then so are $q'$ and $k$.
\end{lemma}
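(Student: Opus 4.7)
The strategy is to simulate the given tolerant test $T$ on a randomly ``shuffled'' copy $G^\ast$ of $G$, obtained by applying a uniformly random permutation $\pi$ of $[n]$ that fixes each part of a fixed $k$-interval equipartition $I_1,\ldots,I_k$. Since $\pi$ never inverts a pair lying in distinct intervals, Lemma~\ref{lem:mixdists} yields $d_e(G, G^\ast) \leq k\binom{\lceil n/k\rceil}{2}/\binom{n}{2}$, which is at most $2/k$ for $n$ sufficiently larger than $k$ (for smaller $n$ one may query the entire input). I would therefore set $k = \lceil 2/\delta(\eta(\epsilon/2))\rceil$, so that Lemma~\ref{lem:earthmover_to_edit} (applied symmetrically) gives $|d_H(G,\mathcal{P}) - d_H(G^\ast, \mathcal{P})| \leq \eta(\epsilon/2)$ for every realization of $G^\ast$. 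I would also convert the hypothesized tolerant test into a non-adaptive $(\epsilon/2,\eta(\epsilon/2))$-tolerant test making $q := q(\epsilon/2)$ vertex queries, with $2/3$ confidence; the standard decision-tree conversion suffices since all quantities stay constant for fixed $\epsilon$.

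The piecewise-canonical test $T^\ast$ then does the following. Using internal randomness $r$, it simulates the non-adaptive $T$ on parameter $\epsilon/2$ to determine the vertices $v_1 < \cdots < v_q$ that $T$ would query, and lets $q_i$ be the number of $v_j$'s that lie in interval $I_i$; these are the counts required by the piecewise-canonical definition. $T^\ast$ then samples, uniformly at random, a set $W\subseteq [n]$ with exactly $q_i$ vertices in each $I_i$, and queries all $\binom{q}{2}$ edges inside $W$. Independently for each $i$, $T^\ast$ draws a uniformly random bijection from the $q_i$ sampled vertices in $I_i$ to the $v_j$'s lying in $I_i$; this assignment induces a reordering of the queried tuple, and $T^\ast$ returns the decision of $T$ (with randomness $r$) on the resulting reordered tuple of edge values.

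The key correctness claim is that, conditional on $r$, the tuple of edge values fed to $T$ inside $T^\ast$ has \emph{exactly} the same distribution as the tuple $T$ would see when querying $v_1,\ldots,v_q$ in the random graph $G^\ast$: the random interval-wise bijections in $T^\ast$ are precisely the restriction of a uniformly random interval-preserving relabeling to the queried vertices. Averaging over $r$, the acceptance probability of $T^\ast$ on $G$ equals $\mathbb{E}_\pi\bigl[\Pr[T \text{ accepts } G^\ast(\pi)]\bigr]$. Combined with the two cases of the earthmover bound -- $d_H(G^\ast,\mathcal{P})\leq \eta(\epsilon/2)$ when $G\in\mathcal{P}$, and $d_H(G^\ast,\mathcal{P})\geq \epsilon - \eta(\epsilon/2)\geq \epsilon/2$ when $G$ is $\epsilon$-far from $\mathcal{P}$ -- the $(\epsilon/2,\eta(\epsilon/2))$-tolerant guarantee of $T$ directly yields the required $\epsilon$-test with confidence $2/3$. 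The parameters $k = \lceil 2/\delta(\eta(\epsilon/2))\rceil$ and $q' = q(\epsilon/2)$ are polynomial in $\epsilon^{-1}$ whenever $q,\eta^{-1},\delta^{-1}$ are.

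The step I expect to require the most care is the distributional matching in the third paragraph: a random interval-preserving permutation $\pi$ does \emph{not} act in an order-preserving way within an interval, so I must use the explicit random interval-wise bijection (rather than the order-preserving identification between $W$ and $\{v_1,\ldots,v_q\}$) when passing values to $T$. Without this randomization within intervals, the simulated view would differ from $T$'s view on a genuinely random $G^\ast$, and the earthmover bound would not transfer correctly.
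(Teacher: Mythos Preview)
Your proposal is correct and follows essentially the same approach as the paper's proof: the same choice $k=\lceil 2/\delta(\eta(\epsilon/2))\rceil$ and $q'=q(\epsilon/2)$, the same simulation of the non-adaptive tolerant test with per-interval random bijections, and the same distributional identity between $T^\ast$ on $G$ and $T$ on a uniformly interval-shuffled $G^\ast$. The subtlety you flag about needing a random (rather than order-preserving) within-interval bijection is exactly the permutation $\pi$ the paper introduces for the same reason.
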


\begin{proof}
	Let $T$ be a (non-adaptive) $(\epsilon/2, \eta)$-tolerant test for $\mathcal{P}$ querying the induced subgraph on $q' = q(\epsilon/2)$ vertices. Let $G \colon \binom{[n]}{2} \to \Sigma$ denote the unknown input graph.
	Since $T$ is non-adaptive, we may view it as a two-step algorithm acting as follows. In the first step, $T$ chooses a $q'$-tuple $x_1 < \ldots < x_{q'} \in [n]$ (which will eventually be the vertices $T$ will query) according to some distribution $p_T$. The second step receives the tuples $(x_1, \ldots, x_{q'})$ and $(G(x_i x_j))_{i<j \in [q']}$ and decides (probabilistically) whether to accept or reject based only on these tuples.
	
	Take $k = \lceil 2/\delta(\eta(\epsilon/2)) \rceil$ and consider the $k$-interval partition $I_1, \ldots, I_k$ of the input graph $G$. 
	Our piecewise-canonical test $T'$, also making $q'$ vertex queries, is designed as follows. 
	First it picks a tuple $X$ of $q'$ elements $x_1 < \ldots < x_{q'} \in [n]$ according to the distribution $p_T$. For each $i = 1,\ldots,k$, let $q_i = |X \cap I_i|$ and let $S_i = \{ 1 + \sum_{j=1}^{i-1} q_j, \ldots, \sum_{j=1}^{i} q_j\}$. $T'$ queries exactly $q_i$ vertices from $I_i$ uniformly at random. Now, $T'$ picks a permutation $\pi \colon [q'] \to [q']$ in the following manner: For each $1 \leq i \leq k$, $\pi$ restricted to $S_i$ is a uniformly random permutation on $[S_i]$.
	Finally, $T'$ runs the second step of the original test $T$, with tuples $(x_1, \ldots, x_{q'})$ and $(G(x_{\pi(i)} x_{\pi(j)}))_{i<j \in [q']}$.
	
	Clearly, $T'$ makes in total $q'$ queries in $k$ intervals, where the vertex queries within each interval are chosen uniformly at random. It only remains to show that $T'$ is a valid $\epsilon$-test. 
	Observe that applying $T'$ on the input graph $G$ is equivalent to the following process, in the sense that their output distribution (given any fixed $G$) is identical.
	\begin{enumerate}
		\item ``Shuffle'' the vertices inside each interval $I_i$ of $G$ in a uniformly random manner, to get a new ordered graph $G'$.
		\item Run the original test $T$ on $G'$, and return its answer.
	\end{enumerate}
	The relative mixingness between $G$ and any such $G'$ is at most $k \binom{\lceil n/k \rceil}{2} / \binom{n}{2} < 2 / k \leq \delta(\eta(\epsilon/2))$ where the first inequality holds for large enough $n$. By Lemmas \ref{lem:mixdists} and \ref{lem:earthmover_to_edit} and the $\delta$-earthmover resilience of $\mathcal{P}$, we get that $|d_H(G', \mathcal{P}) - d_H(G, \mathcal{P})| \leq \eta(\epsilon / 2) < \epsilon/2$. Thus, if $G$ satisfies $\mathcal{P}$, then any $G'$ possibly generated in the first step of the above process is $\eta(\epsilon / 2)$-close to $\mathcal{P}$.
	Since $T$ is $(\epsilon/2, \eta)$-tolerant, the second step of the process accepts with probability at least $2/3$ for any fixed choice of $G'$. Thus, the process (or equivalently, $T'$) accepts $G$ with probability at least $2/3$ in this case.
	Conversely, if $G$ is $\epsilon$-far from $\mathcal{P}$ then $G'$ generated in the first step is $\epsilon/2$-far from $\mathcal{P}$, and, similarly, the process (or equivalently, $T'$) rejects with probability at least $2/3$.
\end{proof}

\section{Piecewise-canonical testability to canonical testability}
\label{sec:piecewise_to_canonical}
This section is dedicated to the proof that piecewise-canonically testable properties are canonically testable. While the proofs are presented here for ordered graphs, they can easily be translated to the case of strings. Therefore, the results in this section, combined with the previous two sections, complete the proof of Theorems \ref{thm:earthmover_tolerant_to_canonical} and \ref{thm:string_characterization}.

\begin{definition}
	Given $\{q_1,\ldots,q_k\}$ that sum up to $q$ and $t\geq\max_{1\leq j\leq k}q_j$, the {\em $t$-simulated piecewise distribution} over subsets of of $[n]$ of size $q$ is the result of the following process.
	\begin{description}
		\item[Uniform sampling] Select a set of $tk$ indices from $[n]$, uniformly at random. Let $\{i_1,\ldots,i_{tk}\}$ denote the set with its members sorted in ascending order.
		\item[Simulation inside each block] For every $1\leq j\leq k$, select a subset of $\{i_{(j-1)t+1},\ldots,i_{jt}\}$ of size $q_i$, uniformly at random.
	\end{description}
\end{definition}

\begin{lemma}\label{lem:simupiece}
	For every $\delta$, $k$ and $q$, there exist $t(\delta,k,q)$ and $N(\delta,k,q)$ polynomial in $\delta, k, q$, so that if $n>N(\delta,k,q)$ then the $t$-simulated piecewise distribution with respect to $q_1,\ldots,q_k$ is $\delta$-close (in the variation distance) to an actual piecewise distribution with respect to $q_1, \ldots, q_k$, i.e., a process of the following type. 
	Consider a $k$-interval partition $I_1, \ldots, I_k$ of the input graph, and for every $1\leq j\leq k$, pick a uniformly random subset of $I_j$ of size $q_j$.
\end{lemma}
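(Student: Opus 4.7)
The plan is to reformulate the $t$-simulated distribution in a way that exposes exactly how it differs from the actual piecewise distribution, and then to bound this difference through concentration. I would first perform the $tk$ uniform sampling ``interval by interval'': let $N_j$ denote the number of the $tk$ samples landing in $I_j$, so that conditionally on $(N_1,\dots,N_k)$ these form independent uniform subsets $A_j \subseteq I_j$ with $|A_j| = N_j$, and the block $B_j$ is the set of positions $(j-1)t+1,\dots,jt$ in the sorted union $\bigcup_i A_i$. The key observation is that if one replaces the final step of the simulated process (``sample $C_j$ uniformly from $\binom{B_j}{q_j}$'') by (``sample $C_j$ uniformly from $\binom{A_j}{q_j}$''), then a standard double-sampling identity shows that each resulting $C_j$ is distributed as a uniform $q_j$-subset of $I_j$, independently across $j$. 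That is, the modified process realizes the actual piecewise distribution exactly, so it suffices to compare the simulated and modified processes under the coupling that shares $(A_1,\dots,A_k)$.

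Conditioned on $(A_1,\dots,A_k)$, both distributions factor as products over $j$, so their total variation distance is bounded by $\sum_j d_{TV}\!\left(\operatorname{Unif}\binom{A_j}{q_j},\,\operatorname{Unif}\binom{B_j}{q_j}\right)$. Introducing $\alpha_j := \sum_{i\le j} N_i - jt$ (with $\alpha_0 = \alpha_k = 0$), the ranges of positions occupied by $A_j$ and $B_j$ in the sorted union give, after a short calculation, $|A_j \triangle B_j| = |\alpha_j| + |\alpha_{j-1}|$. A direct formula for the variation distance between uniform distributions on $q_j$-subsets of two sets of roughly equal size with large intersection, followed by standard ratio-of-binomial estimates, then yields $d_{TV}\!\left(\operatorname{Unif}\binom{A_j}{q_j},\operatorname{Unif}\binom{B_j}{q_j}\right) \le q_j (|\alpha_j| + |\alpha_{j-1}|)/t$ whenever $\max_j |\alpha_j| \le t/2$. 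Summing over $j$ and using $\sum_j q_j = q$ gives a conditional total variation bound of $O(q \cdot \max_j |\alpha_j| / t)$.

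Finally, I would apply a Hoeffding-type tail inequality for the hypergeometric distribution (Chv\'atal): each partial sum $M_j := \sum_{i\le j} N_i$ counts the size-$tk$ sample from $[n]$ landing in the prefix $I_1 \cup \cdots \cup I_j$, which gives $\Pr[|\alpha_j| \ge s] \le 2\exp(-2s^2/(tk))$, up to an $O(1)$ correction to the mean when $k \nmid n$ that is harmless once $n$ is sufficiently large. A union bound over $j \in [k]$ produces $\max_j |\alpha_j| \le O\!\left(\sqrt{tk \log(k/\delta)}\right)$ with probability at least $1 - \delta/2$. Plugging this into the previous bound, the overall variation distance is at most $O\!\left(q\sqrt{k\log(k/\delta)/t}\right) + \delta/2$, so taking $t = \Theta(q^2 k \log(k/\delta)/\delta^2)$ and $N = \Theta(tk)$---both polynomial in $q$, $k$, and $\delta^{-1}$---suffices. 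The main conceptual step is the reformulation that turns the modified process into the actual piecewise distribution; once this is established, the rest is a concentration calculation. The main technical nuisance I expect is carefully handling the fact that the intervals $I_j$ differ in size by one when $k\nmid n$ and that hypergeometric means are thus only approximately $jt$, but these contribute only lower-order terms provided $n \gg tk$.
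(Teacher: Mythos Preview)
Your approach is correct and takes a genuinely different route from the paper's. The paper argues via order statistics: it uses Chebyshev's inequality on the $r$-th smallest of the $tk$ sampled points to show that, with high probability, the samples near each interval boundary $jn/k$ lie in narrow ``danger zones''; it then defines, for each of the two processes, a bad event of landing in the appropriate danger zone, and argues that the two conditional laws (conditioned on avoiding the bad events) coincide. Your argument instead rests on a clean structural observation: replacing ``sample $q_j$ from block $B_j$'' by ``sample $q_j$ from $A_j$'' (the samples that actually fell in $I_j$) realizes the piecewise distribution \emph{exactly}, by the double-sampling identity together with the conditional independence of the $A_j$ given $(N_1,\dots,N_k)$. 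This reduces the whole lemma to bounding $\sum_j d_{TV}\bigl(\mathrm{Unif}\binom{A_j}{q_j},\mathrm{Unif}\binom{B_j}{q_j}\bigr)$ through the symmetric differences $|A_j\triangle B_j|=|\alpha_{j-1}|+|\alpha_j|$, and then to a hypergeometric (Hoeffding-type) tail bound on the partial sums $M_j$. The resulting proof is more transparent than the order-statistics route and yields $t=\Theta\bigl(q^2k\log(k/\delta)/\delta^2\bigr)$, somewhat tighter than the paper's $t=600k^4q^2/\delta^3$, though both are polynomial as required. One small point you should make explicit in a full write-up: the double-sampling identity and the formula for $|A_j\triangle B_j|$ both require $N_j\ge q_j$ (and more generally $\max_j|\alpha_j|<t/2$), but this is exactly the concentration event you already condition on, so it costs nothing extra.
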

In the proof of Lemma \ref{lem:simupiece} we do not try to optimize the dependence of $t$ and $N$ on $\delta, k, q$, but just show that it is a reasonable polynomial dependence. 
\begin{proof}
	Fix $q_1, \ldots, q_k$ and write $Q_i = \sum_{j=1}^{i} q_j$ for any $1 \leq i \leq k$. Also take $q = Q_k$. 
	For any $1 \leq l_1 < \ldots < l_q \leq n$	
	denote by $\Pr_{\text{piece}}(E_{l_1, \ldots, l_q})$ the probability that the indices selected by a piecewise canonical distribution with parameters $q_1, \ldots, q_k$ are $l_1, \ldots, l_q$.
	Similarly, for $q_1, \ldots, q_k$ as above and a fixed $t \geq \max_{1\leq j\leq k} q_j$, we denote by $\Pr_{\text{sim}}(E_{l_1, \ldots, l_q})$ the probability that the indices selected by a simulated piecewise canonical distribution with parameters $q_1, \ldots, q_k$ and $t$ are $l_1, \ldots, l_q$.
	It is enough to show, for a suitable choice of $t$ and for $n$ large enough, that $\sum_{l_1 < \ldots < l_q} |\Pr_{\text{piece}}(E_{l_1, \ldots, l_q}) - \Pr_{\text{sim}}(E_{l_1, \ldots, l_q}) | < \delta$.
	To prove this, we show that there exist suitable events $A$ and $B$ satisfying the following conditions.
	\begin{itemize}
		\item $\Pr_{\text{piece}}(A) \leq \delta$ and $\Pr_{\text{sim}}(B) \leq \delta$.
		\item $\Pr_{\text{piece}}(E_{l_1, \ldots, l_q} | \neg A) = \Pr_{\text{sim}}(E_{l_1, \ldots, l_q} | \neg B)$ for any possible choice of $l_1 < \ldots < l_q$, where $\neg A$ and $\neg B$ are the complementary events of $A$ and $B$, respectively.
	\end{itemize}  
	
	In the rest of the proof we define and analyze the events $A$ and $B$.
	\subparagraph{Order statistics}
	Take $t = 600 k^4 q^2 \delta^{-3}$ and $N = tk$. Let $1 \leq i_1 < \ldots < i_N \leq n$ be the elements of an $N$-tuple from $\binom{[n]}{N}$, picked uniformly at random. It is well known (see, e.g., Chapter 3 in \cite{ArnoldBalakNagar1992}) that the expected value of $i_r$ -- the $r$-th order statistic of the tuple -- is $\mu_r = r(n+1)/(N+1)$ and satisfies $|\mu_r - rn/N| < n/N$, and the variance of $i_r$ is $\sigma_r^2 \leq n^2 / N$.
	
	By Chebyshev's inequality, for any $1 \leq r \leq N$ it holds that $\Pr(|i_r - \mu_r| > \alpha n) < 1 / N \alpha^2 $. Pick $\alpha = 3\sqrt{k / \delta N} < \delta / 8 k^2 q$.
	For any $1 \leq j \leq k-1$,
	we take $r_{j}^{-}$ as the largest integer $r$ for which $\mu_r < (Q_j / k - \alpha - 1/N)n$ and $r_{j}^{+}$ as the smallest integer $r'$ for which $\mu_{r'} > (Q_j / k + \alpha + 1/N)n$; note that $tQ_j - r_j^- < 2 \alpha N$ and $\mu_{r_{j}^{-}} > (Q_j / k - 2 \alpha) n$, and on the other hand, $r_j^+ - tQ_j < 2 \alpha N$ and $\mu_{r_{j}^{+}} < (Q_j / k + 2 \alpha) n$. Intuitively speaking, $r_j^-, r_j^+$ were chosen here with the following requirements in mind. With good probability, $r_j^-$ needs to be contained in $I_j$, $r_j^+$ needs to be contained in $I_{j+1}$, and both $r_j^-$ and $r_j^+$ should be close to $jn/k$ (which is roughly equal to the last element of $I_j$ and the first element of $I_{j+1}$).
	  
	Indeed, let $C$ denote the event that
	\begin{align}
	\left(\frac{Q_j}{k} - 3\alpha \right)n < i_{r_{j}^{-}} < \frac{Q_j}{k} n < i_{r_{j}^{+}} < \left(\frac{Q_j}{k} + 3 \alpha \right)n
	\end{align} 
	holds for any $1 \leq j \leq k-1$, and observe that $\left(\frac{Q_j}{k} + 3 \alpha \right)n < \left(\frac{Q_{j+1}}{k} - 3 \alpha \right)n$ for any $j$. $\neg C$ is contained in the event that, for some $j$, $|i_{r_{j}^{-}} - \mu_{r_{j}^{-}}| > \alpha n$ or $|i_{r_{j}^{+}} - \mu_{r_{j}^{+}}| > \alpha n$. The probability of the latter event is bounded by $2k/N\alpha^2 = 2\delta/9$ by a union bound. Therefore $C$ holds with probability at least $1 - 2\delta/9$.

	\subparagraph{The ``bad'' events $A$ and $B$}
	Suppose that, after picking $i_1 < \ldots < i_N$ uniformly at random as above, we pick two (not necessarily disjoint) $q$-tuples $w ,w'$ of vertices from $[n]$ simultaneously: $w$ is picked according to the piecewise canonical distribution among all elements of $G$, whereas $w'$ is picked according to the $t$-simulated piecewise distribution, considering $\{i_1, \ldots, i_N\}$ as the output of the first step -- the \emph{uniform sampling} step -- of the simulated process.
	The events $A$ and $B$ are defined as follows. $A$ holds if and only if either $C$ doesn't hold  or some entry of $w$ is picked from $I = \bigcup_{i=1}^{k-1} I_j$, where $I_j = \{i_{r_j^-}, i_{r_j^{-}} + 1, \ldots, i_{r_j^+}\}$ for any $j$. $B$ holds if and only if either $C$ doesn't hold or some entry of $w'$ is taken from $I' = \bigcup_{j=1}^{k-1} I'_j$, where $I'_j =  \{i_{r_j^{-}}, i_{r_j^{-} + 1}, \ldots, i_{r_j^{+}}\}$ for any $j$.
	
	\subparagraph{$A$ and $B$ satisfy the requirements}
	The major observation here is that the distribution of the piecewise canonical
	distribution under the assumption that $A$ does not hold is identical to the distribution of the simulated process under the assumption that $B$ does not hold. 
	That is, $\Pr_{\text{piece}}(E_{l_1, \ldots, l_q} | \neg A) = \Pr_{\text{sim}}(E_{l_1, \ldots, l_q} | \neg B)$ for any possible choice of $l_1 < \ldots < l_q$, as required above.
	To see this, observe that under these assumptions, both distributions pick exactly $q_j$ entries, uniformly at random, from the set $\{i_{r_j^+}+1, \ldots, i_{r_{j+1}^-}-1\}$ for any $0 \leq j \leq k-1$ (where we define $r_0^+ = 0$ and $r_k^- = n+1$).  
	It remains to show that $\Pr_{\text{piece}}(A) \leq \delta$ and $\Pr_{\text{sim}}(B) \leq \delta$. 
	
	In the piecewise-canonical distribution, every entry has probability at most $q/n$ to be picked. Assuming that $C$ holds, we get that $|I_j| < 6 \alpha n$ for any $j$, and so $|I| \leq 6 \alpha k n$. Therefore, $\Pr_{\text{piece}}(A | C) \leq |I| q / n < 6 \alpha k q < 3 \delta / 4$. Thus, $\Pr_{\text{piece}}(A) \leq \Pr_{\text{piece}}(A | C) + \Pr(\neg C) < \delta$, as needed.
	
	In the simulated distribution, the probability that any given element from $I'$ is taken to $w'$ is at most $q/t$. Since $|I_j'| < 4 \alpha N$, we get that $|I'| < 4 \alpha k N$ and so $\Pr_{\text{sim}}(B) \leq |I'| q / t + \Pr(C) < 4 \alpha k^2 q + \delta/4 < \delta$, as desired.  
\end{proof}
\begin{lemma}\label{lem:tocanonical}
	A piecewise-testable property has a canonical test. Moreover, if the number of parts of the piecewise-canonical $\epsilon$-test, denoted by $k(\epsilon)$, and its number of vertex queries, denoted by $q(\epsilon)$, are polynomial in $\epsilon$, then so is the number of queries of the canonical test.
\end{lemma}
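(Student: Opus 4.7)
The plan is to reduce the construction of a canonical test to the simulation guarantee of \autoref{lem:simupiece}. Suppose $T$ is a piecewise-canonical $\epsilon$-test for $\mathcal{P}$ with $k=k(\epsilon)$ parts and $q=q(\epsilon)$ query vertices. First, I would amplify $T$'s confidence from $2/3$ to, say, $5/6$ by running it a constant number of times independently and taking the majority vote; the result is still a piecewise-canonical test, with the same $k$ and with $O(q)$ query vertices. I will then fix $\delta = 1/12$ and invoke \autoref{lem:simupiece} to obtain $t = t(\delta, k, q)$ and $N = N(\delta, k, q)$, both polynomial in $k$ and $q$ (since $\delta$ is a constant).

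For inputs on $n \leq tk$ vertices, the canonical test can simply query a uniformly random set of $\min(n, tk)$ vertices, recover the entire relevant structure, and apply $T$'s decision rule deterministically; this uses at most $tk$ vertex queries. For $n > tk \geq N$, the canonical test $T'$ acts as follows: it picks $v_1 < \ldots < v_{tk}$ uniformly at random from $[n]$ and queries all $\binom{tk}{2}$ induced edges. In the decision step, $T'$ first chooses the split $q_1, \ldots, q_k$ exactly as $T$ would (including any randomness), and then, in each block $\{v_{(j-1)t+1}, \ldots, v_{jt}\}$, picks a uniformly random subset of size $q_j$. Finally, $T'$ feeds the resulting $q_1,\ldots,q_k$ together with the induced ordered subgraph on the chosen vertices to the decision rule of the amplified $T$, and outputs the answer. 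Note that this is a bona fide canonical test in the sense of \autoref{subsec:testing_defs}: $T'$ samples $tk$ vertices uniformly and decides (probabilistically) only as a function of the induced ordered subgraph, with the additional randomness folded into the decision step.

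For correctness, observe that the joint distribution over the finally-selected $q$-subset produced by $T'$ (conditioned on any fixed choice of $q_1, \ldots, q_k$) is exactly the $t$-simulated piecewise distribution, which by \autoref{lem:simupiece} is within variation distance $\delta = 1/12$ of the actual piecewise distribution used by $T$. Consequently, for any input $G$, the acceptance probability of $T'$ differs from that of the amplified $T$ by at most $\delta$, so the confidence of $T'$ is at least $5/6 - 1/12 = 3/4 > 2/3$. For the ``moreover'' part, the query complexity is $\binom{tk}{2}$, and since $t$ is polynomial in $k, q, \delta^{-1}$ by \autoref{lem:simupiece} while $\delta$ is a constant, the total number of vertex queries is polynomial in $k$ and $q$, hence polynomial in $\epsilon^{-1}$ whenever $k(\epsilon)$ and $q(\epsilon)$ are.

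The only real technical content here is the variation-distance bound from \autoref{lem:simupiece}, which is already established. The remaining steps---amplifying the piecewise test, handling small $n$ separately, and translating the confidence loss through the triangle inequality---are routine. I do not foresee a significant obstacle beyond keeping the bookkeeping of the polynomial dependencies clean.
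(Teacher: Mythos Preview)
Your proposal is correct and follows essentially the same approach as the paper: both proofs boost the confidence of the piecewise-canonical test, set $\delta=1/12$, invoke \autoref{lem:simupiece} to obtain $t$, sample $tk$ vertices uniformly, and simulate the piecewise test inside the $k$ blocks of the sample. The only cosmetic differences are that the paper appeals to Remark~\ref{rem:determinstic_canonical_piecewise} to pass to a \emph{deterministic} piecewise-canonical test with confidence $3/4$ (so the final confidence is exactly $2/3$), whereas you amplify probabilistically to $5/6$; and you handle the small-$n$ regime explicitly while the paper only notes it implicitly.
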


\begin{proof}
	Let $\mathcal{P}$ be a piecewise-testable property.
	Following Remark \ref{rem:determinstic_canonical_piecewise}, for any $\epsilon > 0$ there exists a deterministic piecewise-canonical $\epsilon$-test $T$, with confidence $3/4$, making exactly $q$ queries on $k$ parts. 
	To simulate $T$ using a canonical test $T'$, we pick $\delta = 1/12$ and take $t = t(\delta, k, q)$ as provided by Lemma \ref{lem:simupiece} (here we also implicitly assume that $n > N(\delta, k, q)$). $T'$ is taken as the $t$-simulated piecewise test, that queries the induced subgraph $H$ on $kt$ vertices picked uniformly at random, and then imitates $T$: If, for any $1 \leq i \leq k$, $T$ chooses $q_i$ vertices in part number $i$, then $T'$ chooses $q$ vertices of $H$ using a $t$-simulated piecewise distribution, where $q_i$ vertices are taken from the $i$-th simulated block. Then, $T'$ makes the same decision that $T$ would have made on the queried subgraph induced on the chosen $q$ vertices.
	
	By Lemma \ref{lem:simupiece}, the distributions $\eta$ and $\eta'$ over $q$-tuples of vertices generated by $T$ and $T'$, respectively, are $\delta$-close.
	Let $\mathcal{H}$ be a family of ordered graphs on $q$ vertices such that $T$ accepts its queried induced subgraph $H$ if and only if $H \in \mathcal{H}$. Then, $\Pr_{\eta}(H \in \mathcal{H}) \geq 3/4$ if the input graph $G$ satisfies $\mathcal{P}$, whereas $\Pr_{\eta}(H \in \mathcal{H}) < 1/4$ if $G$ is $\epsilon$-far from $\mathcal{P}$. By Lemma \ref{lem:variation1}, if the input graph $G$ for $T'$ satisfies $\mathcal{P}$ then the queried induced subgraph $H$ satisfies $\Pr_{\eta'}(H \in \mathcal{H}) \geq 3/4 - \delta = 2/3$, and if $G$
	is $\epsilon$-far from $\mathcal{P}$, then $\Pr_{\eta'}(H \in \mathcal{H}) < 1/4 + \delta = 1/3$. Thus, $T'$ is a valid test for $\mathcal{P}$.
\end{proof}

Lemmas \ref{lem:topiecewise} and \ref{lem:tocanonical} together prove the first (and more difficult) direction of Theorem \ref{thm:earthmover_tolerant_to_canonical}.

\section{Canonical testability to estimability}
\label{sec:canonical_to_estimable}
This section describes the proof of Theorem \ref{thm:canonical_to_estimable}. The proof takes roughly the same steps as in the proof of Fischer and Newman \cite{FischerNewman2005} for the unordered case.
For the proof of \cite{FischerNewman2005} to work in our case, we only need to make a few slight modifications. Therefore, instead of rewriting the whole proof, we only describe what modifications are made and how they change the proof. 

The proof in \cite{FischerNewman2005} builds on a test for partition parameters, established in the seminal paper of Goldreich, Goldwasser and Ron \cite{GoldreichGoldwasserRon1998}. The test of \cite{GoldreichGoldwasserRon1998} also needs to be slightly modified for our needs. Therefore, the partition test receives the same treatment as the proof in \cite{FischerNewman2005}: We describe the modified statement and how to change the proof accordingly, but do not get into unnecessary technicalities.  

\subsection{The unordered proof}
First we sketch the proof that canonical testability in unordered graphs implies estimability \cite{FischerNewman2005}.
\subsubsection{Signatures of regular partitions and approximating the $q$-statistic}
\label{subsubsec:signatures}
A $(\gamma, \epsilon)$-signature for an equipartition $\mathcal{A} = \{ V_1, \ldots, V_t \}$ is a sequence of densities $\eta_{i,j}$, such that the density between $V_i$ and $V_j$ differs from $\eta_{i,j}$ by at most $\gamma$, for all but at most $\epsilon \binom{t}{2}$ of the pairs $i,j$.
The (labeled) $q$-statistic of a graph is the distribution of the labeled graphs on $q$ vertices in it. Given a signature as above, it is natural to define the perceived $q$-statistic of the signature as the distribution on labeled $q$-vertex graphs generated as follows: First we choose $q$ indices $i_1, \ldots, i_q$ from $[t]$. Then for every $j < j'$ we add an edge between $v_j$ and $v_{j'}$ with probability $\eta_{i_j, i_{j'}}$, independently. 
The main observation in this part is that the perceived $q$-statistic of a signature with good (small) enough parameters of a regular enough partition of a graph $G$ is close to the actual $q$-statistic of $G$. Thus, to estimate the $q$-statistic of a graph we just need to obtain a good signature of a regular partition of this graph. For more details, see Section 4 in \cite{FischerNewman2005}.

\subsubsection{Computing signature of a final partition}
Implicit in the proof of the celebrated Szemer\'edi regularity lemma \cite{Szemeredi1976} is the concept of an \emph{index} of an equipartition, which is a convex function of partitions that never decreases under taking refinements of a partition. A partition $P$ is \emph{robust} if, for any refinement $Q$ of $P$ that is not too large (in terms of the number of parts) with respect to $P$, the index of $Q$ is similar to that of $P$.
The main argument in \cite{Szemeredi1976} is that robustness implies regularity. An even stronger condition, that implies robustness, is finality. A partition $P$ is \emph{final} if for \emph{any} partition $Q$\footnote{Here $Q$ is not necessarily a refinement of $P$} whose number of parts is not much larger than that of $P$, the index of $Q$ is also not much larger than that of $P$. It is easy to prove that robust and final partitions with arbitrarily good parameters exist. The definitions appear in Section 4 of \cite{FischerNewman2005}, while the rest of the discussion here appears in Section 5 there.

Knowing the parameters of a good signature of a robust enough partition is useful for estimation, as we shall see soon. Before doing so, we explain how to find such a signature using the partition parameters test of \cite{GoldreichGoldwasserRon1998}. This test is described in a more formal and detailed fashion in Subsection \ref{subsec:partition_parameters}, but for our purposes, it acts as a test for the property of ``having a given signature''. We consider a quantized set of signatures, which contains only a constant number of possible signatures, so that every graph is close to a graph satisfying one of the signatures (i.e., an $\eta$-net for a suitable parameter $\eta$).

By applying the test of \cite{GoldreichGoldwasserRon1998} to each of the signatures sufficiently many times and accepting or rejecting each of the signatures according to majority vote, we determine with good probability which signatures our input graph $G$ is close to having. More precisely, all signatures that are very close to some actual signature $S$ of $G$ are accepted, and all of those that are very far from any actual signature $S$ are rejected. Thus, this process only accepts signatures that are at the very least ``quite close'' to some actual one.

Finally, an index measure can also be defined for signatures, and the index of a good signature is close to that of the corresponding partition. Under the assumption that all signatures that we captured are quite close to an actual one, in particular we will find a good approximation of a final partition, and will recognize that it is final by not finding signatures of partitions that are only somewhat bigger and have a much bigger index (meaning that such partitions do not exist).

\subsubsection{Knowing signature of a robust partition implies estimation}
Note that for $\delta > 0$ and a family $\mathcal{H}$ of $q$-vertex graphs, having only a good signature $S$ of a robust enough partition allows us to distinguish for any $\epsilon > 0$, deterministically, between the case that $G$ is $(\epsilon - \delta)$-close to a graph $G'$ that contains a large number of copies of labeled graphs from $\mathcal{H}$, and the case that all graphs that are $\epsilon$-close to $G$ contain only a small number of $\mathcal{H}$-copies. 
Combining this statement with the one from the previous subsection, stating that computing the signature of some robust (and in particular, final) partition is possible with good probability in constant time, it is straightforward to conclude that any testable graph property is estimable. 
As the proof of this statement is rather technical and the main arguments do not change when moving to the ordered case, we do not go into the details of the proof here. Section 6 in \cite{FischerNewman2005} is dedicated to this proof.
\subsection{Adapting to the ordered setting}
\label{subsec:adapt_canonical_estimable}
Suppose that a property $\mathcal{P}$ has a canonical test making $q$ queries. Using the proof for the undirected case as is will not work here. The reason is that, theoretically, a pair of vertex sets can be regular as an unordered pair, but interleaved in a way that makes it useless when we are interested in understanding the ordered $q$-statistic of a graph. Another issue that needs to be considered is the fact that we work here with edge-colored graphs, instead of standard ones. However, the latter is not a real issue: As observed in previous works \cite{AlonBenEliezerFischer2017, AustinTao2010, AxenovichMartin2011}, regularity-based arguments tend to generalize in a straightforward manner to the multicolored setting.

To accommodate for the first issue, we need a ``regularity scheme'' that is slightly different from the unordered instance. At the base of the scheme lies a $k$-interval equipartition $I$ for a suitable $k$, which is known in advance. The regular, robust or final partitions that we need along the proof (analogously to the unordered case) are always refinements of the interval equipartition $I$, where we do not care about the relation between two parts that lie inside the same interval. Here, for partitions $P$ and $Q$, we say that $Q$ is a \emph{refinement} of $P$ if any part of $Q$ is completely contained in a part of $P$.
A formal presentation of the scheme is given in the next few definitions and lemmas.
The first definition presents the $(q,k)$-statistic of a graph, which in some sense is the $k$-partite version of the $q$-statistic, as defined in Section \ref{sec:earthmover_mixing}.
\begin{definition}
Let $G, H$ be $\Sigma$-edge-colored ordered graphs on $n \geq q$ vertices respectively, and let $I = I_k(G) = (I_1, \ldots, I_k)$ be the $k$-interval equipartition of $G$ for $k \geq q$.
A $q$-vertex induced subgraph of $G$ is $k$-separated if, for every $1 \leq i \leq k$, no two vertices of the subgraph lie in $I_i$. The total number of $k$-separated subgraphs on $q$ vertices in a graph on $n$ vertices is denoted by $N(k,q,n)$. 
The number of $k$-separated $H$-copies in $G$ is denoted by $h_k(H, G)$. The \emph{$k$-density} of $H$ in $G$ is $t_k(H,G) = h_k(H, G) / N(k, q, n)$.
Finally, the \emph{$(q,k)$-statistic} of $G$ is the vector $(t_k(H, G))_{H \in \mathcal{H}_q}$, where $\mathcal{H}_q$ is the family of all $\Sigma$-edge-colored ordered graphs with $q$ vertices.
\end{definition}

\begin{observation}
\label{obs:q_stat_vs_q_k_stat}
The variation distance between the $q$-statistic and the $(q,k)$-statistic of a graph is at most $q^2/2k$.
\end{observation}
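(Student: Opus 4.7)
My plan is to prove the observation via a straightforward coupling argument plus a union bound.

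First I would note that both statistics are the distributions of the isomorphism type of the induced subgraph under two sampling schemes on $[n]$: the $q$-statistic samples a uniformly random $Q \in \binom{[n]}{q}$, while the $(q,k)$-statistic samples a uniformly random \emph{$k$-separated} $Q$, i.e., a uniformly random element of $\binom{[n]}{q}$ subject to no two of its members lying in the same interval $I_i$. The key observation is that conditioning a uniform draw from $\binom{[n]}{q}$ on the event ``$Q$ is $k$-separated'' yields precisely the uniform distribution on $k$-separated subsets, because every $k$-separated subset has the same probability $1/\binom{n}{q}$ of being drawn. Hence the two sampling distributions on $\binom{[n]}{q}$ can be coupled so that they agree whenever the uniformly chosen $Q$ is $k$-separated. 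By the standard coupling bound, the variation distance between the induced subgraph distributions is at most the probability, under the uniform measure on $\binom{[n]}{q}$, that $Q$ is not $k$-separated.

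Next I would bound this failure probability by a union bound over the $\binom{q}{2}$ pairs of vertices in $Q$. For a fixed pair, the probability that both land in the same interval $I_i$ equals
\[
\sum_{i=1}^{k} \frac{|I_i|(|I_i|-1)}{n(n-1)}.
\]
Since $I_1,\ldots,I_k$ is the $k$-interval equipartition, $|I_i| \leq \lceil n/k \rceil$, which gives $\sum_i |I_i|(|I_i|-1) \leq n(\lceil n/k \rceil - 1) \leq n(n-1)/k$ (using $\lceil n/k \rceil - 1 \leq (n-1)/k$ when $n \geq k$; the trivial cases $n<k$ are handled separately, or are vacuous since then no $q \leq k$-vertex $k$-separated set needs attention beyond $n \geq q$). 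Thus the per-pair collision probability is at most $1/k$, and summing over all $\binom{q}{2}$ pairs yields
\[
\Pr[Q \text{ not $k$-separated}] \leq \binom{q}{2}\cdot \frac{1}{k} \leq \frac{q^2}{2k},
\]
which is exactly the desired bound.

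The argument is essentially a textbook coupling + union bound, and I do not anticipate a substantive obstacle. The only mildly delicate point is the arithmetic surrounding $\lceil n/k \rceil$ so that the per-pair collision probability comes out cleanly at $1/k$ rather than $1/k + O(1/n)$; this is routine and, if necessary, can be replaced by an asymptotic statement (the bound $q^2/2k$ holds up to a $1+o_n(1)$ factor otherwise), but the cleaner route is the inequality $\lceil n/k\rceil - 1 \leq (n-1)/k$ above, which gives the stated constant $q^2/2k$ exactly.
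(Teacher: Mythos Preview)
Your proposal is correct and follows essentially the same route as the paper: condition the uniform $q$-subset on being $k$-separated (which yields the $(q,k)$-statistic), then bound the variation distance by the probability of a collision via a union bound over the $\binom{q}{2}$ pairs. The only cosmetic difference is the arithmetic for the per-pair collision probability: you use $\lceil n/k\rceil-1\le (n-1)/k$ to get a clean $1/k$ and hence $\binom{q}{2}/k\le q^2/(2k)$, whereas the paper bounds the per-pair probability by $\frac{n/k}{n-1}$ and then uses $\frac{n}{n-1}\le \frac{q}{q-1}$ (valid since $n\ge q$) to obtain $\frac{q}{k(q-1)}\binom{q}{2}=q^2/(2k)$.
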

\begin{proof}
For a uniformly chosen pair $(u,v)$ of disjoint vertices in a graph $G$, the probability that $v$ lies in the same interval as $u$ is at most $\frac{n/k}{n-1}$. By a union bound, the probability that a uniformly random $q$-tuple $Q$ of disjoint vertices contains two vertices in the same interval is at most 
$$
\frac{n}{k(n-1)} \binom{q}{2} \leq \frac{q}{k(q-1)} \binom{q}{2} = \frac{q^2}{2k}.
$$
Conditioning on the above not happening, the induced subgraph $G[Q]$ is distributed according to the $(q,k)$-statistic. The statement of the lemma thus follows from Lemma \ref{lem:variation1}.
\end{proof}
The next definition presents the $k$-partite notion analogous to canonical testability. 
\begin{definition}
	\label{def:k_canonical}
A property $\mathcal{P}$ of $\Sigma$-edge-colored ordered graphs is \emph{$(\epsilon, q, k)$-canonical} if there exists a set $\mathcal{A}$ of $q$-vertex $\Sigma$-edge-colored ordered graphs satisfying the following two conditions.
\begin{itemize}
	\item If an ordered graph $G$ satisfies $\mathcal{P}$, then $\sum_{H \in \mathcal{A}} t_k(H, G) \geq 2/3$. In this case we say that $G$ is $\mathcal{A}$-\emph{positive}.
	\item If $G$ is $\epsilon$-far from satisfying $\mathcal{P}$, then $\sum_{H \in \mathcal{A}} t_k(H, G) \leq 1/3$. Here $G$ is $\mathcal{A}-$\emph{negative}.
\end{itemize}
\end{definition}
Note that there may be graphs that are neither positive not negative with respect to $\mathcal{A}$ in the above definition.
As it turns out, canonical $\epsilon$-testability implies $(\epsilon, q, k)$-canonicality for a  suitable $q$ and any $k = \Omega(q^2)$. In fact, the converse is also true, but is not needed for our proof.
\begin{lemma}
\label{lem:canonical_to_interval_canonical}
If a property $\mathcal{P}$ of edge-colored ordered graphs is canonically testable, then there exists a function $q:(0,1) \to \mathbb{N}$ so that $\mathcal{P}$ is $(\epsilon, q(\epsilon), k)$-canonical for any $\epsilon > 0$ and $k \geq 4q(\epsilon)^2$.
\end{lemma}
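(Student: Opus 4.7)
The plan is to amplify the given canonical $\epsilon$-test for $\mathcal{P}$ and then transfer its acceptance probability from the $q$-statistic (which a canonical test naturally sees) to the $(q,k)$-statistic that appears in Definition~\ref{def:k_canonical}, using the closeness between these two statistics provided by Observation~\ref{obs:q_stat_vs_q_k_stat}.

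Concretely, I would first apply standard amplification: running the hypothesized canonical $\epsilon$-test a constant number of times on independent uniformly random vertex batches and taking majority vote yields, by a Chernoff bound, a (probabilistic) canonical $\epsilon$-test $T'$ with confidence $1-\zeta$ for any desired constant $\zeta>0$, making $q = q(\epsilon)$ queries (a constant depending only on $\epsilon$). Via the Goldreich--Trevisan-style derandomization invoked in Remark~\ref{rem:determinstic_canonical_piecewise}, $T'$ may be assumed deterministic; let $\mathcal{A} \subseteq \mathcal{H}_q$ denote its set of accepting $q$-vertex configurations. By definition of a deterministic canonical test, its acceptance probability on any input $G$ on $n \geq q$ vertices is exactly $\sum_{H \in \mathcal{A}} t(H, G)$, which is at least $1-\zeta$ when $G\in\mathcal{P}$ and at most $\zeta$ when $G$ is $\epsilon$-far from $\mathcal{P}$.

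To pass to the $(q,k)$-statistic I would invoke Observation~\ref{obs:q_stat_vs_q_k_stat}, which bounds the variation distance between the $q$-statistic and the $(q,k)$-statistic of any graph by $q^2/(2k)$, hence by $1/8$ whenever $k \geq 4q^2$; Lemma~\ref{lem:variation1} then gives
\[
\left| \sum_{H \in \mathcal{A}} t(H, G) - \sum_{H \in \mathcal{A}} t_k(H, G) \right| \leq \frac{q^2}{2k} \leq \frac{1}{8}.
\]
Choosing $\zeta \leq 1/6$ at the amplification step, I obtain $\sum_{H \in \mathcal{A}} t_k(H, G) \geq 1 - 1/6 - 1/8 = 17/24 > 2/3$ when $G \in \mathcal{P}$, and $\sum_{H \in \mathcal{A}} t_k(H, G) \leq 1/6 + 1/8 = 7/24 < 1/3$ when $G$ is $\epsilon$-far, which is precisely the $(\epsilon, q, k)$-canonicality condition with witness $\mathcal{A}$. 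There is no real obstacle here; the only points to verify are that the amplified test is genuinely canonical (it is, since running several canonical tests in parallel still depends only on the induced ordered subgraph of a uniform vertex sample) and that $n$ is large enough for the $k$-interval equipartition to be defined, which holds for all sufficiently large $n$ since $q(\epsilon)$ and $k \geq 4q(\epsilon)^2$ are constants in $\epsilon$.
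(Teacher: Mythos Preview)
Your proposal is correct and follows essentially the same approach as the paper: amplify the canonical test to high confidence (the paper uses $11/12$, you use $5/6$), take $\mathcal{A}$ to be its accepting set, and then invoke Observation~\ref{obs:q_stat_vs_q_k_stat} together with Lemma~\ref{lem:variation1} to transfer the acceptance probability from the $q$-statistic to the $(q,k)$-statistic with additive error at most $q^2/(2k)\le 1/8$. The only differences are cosmetic choices of constants.
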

\begin{proof}
By Remark \ref{rem:determinstic_canonical_piecewise}, if $\mathcal{P}$ is canonically testable then for any $\epsilon > 0$ it has a canonical $\epsilon$-test with confidence $11/12$, making $q = q(\epsilon)$ queries. This means that there is a family $\mathcal{A}$ of $q$-vertex graphs, such that $\sum_{H \in \mathcal{A}} t(H, G) \geq 11/12$ for graphs $G$ satisfying $\mathcal{P}$ and $\sum_{H \in \mathcal{A}} t(H, G) \leq 1/12$ for graphs that are $\epsilon$-far from $\mathcal{P}$. By Observation \ref{obs:q_stat_vs_q_k_stat}, $\sum_{H \in \mathcal{A}} |t(H, G) - t_k(H, G)| \leq 2 \frac{q^2}{2k} \leq 1/4$, and the statement follows.
\end{proof}

The definition of a regular partition needed for our case is given below. Here, the partition must refine the base interval equipartition, and we do not care how parts inside the same interval interact between themselves. For a single pair of parts lying in different intervals, the notion of regularity that we use is the standard multicolored notion, defined in Subsection \ref{subsec:regular_reducible}.
\begin{definition}[$k$-refinement, $(\gamma, k)$-regular partition, $(\gamma, \epsilon, k)$-signature]
Let $G$ be an $\Sigma$-edge-colored ordered graph, and let $I = (I_1, \ldots, I_k)$ be the $k$-interval equipartition of $G$.
An equipartition $P = (V_{11}, \ldots, V_{1r}, \ldots, V_{k1}, \ldots, V_{kr})$ is a $k$-refinement if $V_{ij} \subseteq I_i$ for any $i,j$. $P$ is \emph{$(\gamma, k)$-regular} if it is a $k$-refinement and all but a $\gamma$-fraction of the pairs $(V_{ij}, V_{i'j'})$ with $i < i'$ are $\gamma$-regular.

A \emph{$(\gamma, \epsilon, k)$-signature} of $P$ is a sequence $S = (\eta_{ij}^{i'j'}(\sigma))$ for $i < i' \in [k]$, $j,j' \in [r]$, $\sigma \in \Sigma$, such that for all but at an $\epsilon$-fraction of the pairs $(V_{ij}, V_{i'j'})$ with $i < i'$, we have $|d_{\sigma}(V_{ij}, V_{i'j'}) - \eta_{ij}^{i'j'}(\sigma)| \leq \gamma$ for any $\sigma \in \Sigma$.
A $(\gamma, \gamma, k)$-signature is also referred to as a $(\gamma, k)$-signature.
\end{definition}
In the above definition, $d_{\sigma}(U,V)$ is the density of the color $\sigma$ among edges between $U$ and $V$.
The \emph{perceived $(q, k)$-statistic} is the natural translation of the notion of the perceived $q$-statistic from Definition 7 in \cite{FischerNewman2005} to our $k$-partite setting: It captures the ``expected'' fractions of each of the graphs on $q$ vertices among the $k$-separated $q$-vertex subgraphs of $G$.
$(f, \gamma, k)$-Robust and $(f, \gamma, k)$-final partitions (see Section 4 in \cite{FischerNewman2005} for the original unordered definitions) are also defined with respect to the $k$-partite structure, where we do not care about the relation between pairs of parts from the same interval. To accommodate the fact that we consider multicolored graphs, the \emph{index} of a pair $U,V$ is $\sum_{\sigma \in \Sigma} d_{\sigma}(U,V)^2$ (compared to $d(U,V)^2$ in the case of standard graphs). The index of an equipartition refining an interval partition is the sum of indices of all pairs not coming from the same interval, divided by the total number of such pairs.

After providing the definitions required for our ordered setting, the main statements of the proof, analogous to Lemmas 3.8, 4.4 and 4.5 in \cite{FischerNewman2005}, are the following.
\begin{lemma}[Ordered analogue of Lemma 3.8 in \cite{FischerNewman2005}]
For every $q$ and $\epsilon$ there exist $\gamma$ and $k$, so that for every $(\gamma, k)$-regular partition $P$ of $G$ into $t \geq k$ sets, where $G$ has $n \geq N(q, \epsilon, t)$ vertices, and for every $(\gamma, k)$-signature $S$ of $P$, the variation distance between the actual $(q,k)$-statistic and the perceived $(q,k)$-statistic with respect to $S$ is at most $\epsilon$. 
\end{lemma}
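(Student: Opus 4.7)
The proof plan is to adapt the argument of Lemma 3.8 in \cite{FischerNewman2005} to the $k$-partite multicolored setting outlined in Subsection \ref{subsec:adapt_canonical_estimable}. I will choose $\gamma = c\epsilon/(q^2 |\Sigma|)$ for a sufficiently small absolute constant $c$, take $k = \Omega(q^2/\epsilon)$, and let $N(q,\epsilon,t)$ be large enough that the sub-partitions of each interval behave uniformly under sampling. Since $q^2/k$ is already small, Observation \ref{obs:q_stat_vs_q_k_stat} lets me focus purely on the $(q,k)$-statistic.

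The first step is a sampling reduction. Since $P = \{V_{ij}\}$ refines the $k$-interval equipartition, sampling a uniformly random $k$-separated $q$-vertex subgraph of $G$ is equivalent, by symmetry, to the following: pick interval indices $1 \le i_1 < \cdots < i_q \le k$ uniformly, then pick one vertex $v_m$ from each $I_{i_m}$ uniformly. Since $P$ equitably refines the interval partition into $r$ sub-parts per interval, conditionally on $(i_1,\ldots,i_q)$ the tuple $(j_1,\ldots,j_q)$ with $v_m \in V_{i_m j_m}$ is distributed as the product of uniform distributions on $[r]$, up to $o(1)$ additive variation error for large $n$, and conditionally on $(i_m,j_m)$ each $v_m$ is uniform in $V_{i_m j_m}$. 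This matches the corresponding conditional structure in the perceived $(q,k)$-statistic, so the task reduces to comparing, given $(i_m,j_m)$, the distribution of the induced colored subgraph on $(v_1,\ldots,v_q)$ to the product-of-signature-densities distribution.

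Next, a union bound over bad pairs. Call $(V_{ij}, V_{i'j'})$ with $i<i'$ \emph{bad} if it is either not $\gamma$-regular or fails $|d_\sigma(V_{ij}, V_{i'j'}) - \eta_{ij}^{i'j'}(\sigma)| \le \gamma$ for some $\sigma \in \Sigma$. By the definitions of $(\gamma, k)$-regular partition and $(\gamma, k)$-signature, the fraction of bad pairs among pairs of parts in distinct intervals is at most $2\gamma$. Under the sampling procedure, each of the $\binom{q}{2}$ sampled pairs is marginally uniform among pairs between its two selected intervals, so a union bound yields probability at most $2\binom{q}{2}\gamma \le \epsilon/3$ that some sampled pair is bad. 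Conditional on no bad pair, replacing the actual densities by the signature densities contributes at most $\binom{q}{2}\gamma |\Sigma| \le \epsilon/3$ additional variation distance.

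The main step is a multicolored counting argument: conditional on all sampled pairs being $\gamma$-regular and on $(i_m,j_m)$, the colored subgraph on $(v_1,\ldots,v_q)$ is within $\epsilon/3$ variation distance of the product measure in which each edge $v_m v_{m'}$ independently takes color $\sigma$ with probability $d_\sigma(V_{i_m j_m}, V_{i_{m'} j_{m'}})$. This follows by the standard iterative vertex-exposure proof of the counting lemma, generalized to colors (as in the multicolored regularity arguments of \cite{AlonBenEliezerFischer2017, AustinTao2010, AxenovichMartin2011}): reveal $v_1, v_2, \ldots, v_q$ in order, and at step $m$ invoke $\gamma$-regularity of $(V_{i_{m'} j_{m'}}, V_{i_m j_m})$ for each $m' < m$ to show that, conditioned on the previous revelations, the tuple of edge-colors to $v_m$ is within $O(m |\Sigma| \gamma)$ variation distance of the product of the respective density vectors. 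The main obstacle is the usual regularity bookkeeping: at step $m$, the conditioning set within each previously fixed $V_{i_{m'} j_{m'}}$ must retain density at least $\gamma$ for the regularity hypothesis to stay applicable, which is why $\gamma$ must be chosen polynomially small in $1/(q|\Sigma|)$ and not merely linearly small in $\epsilon$. Summing this error against the two previous contributions gives total variation at most $\epsilon$, completing the proof.
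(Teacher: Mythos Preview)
Your approach is essentially what the paper recommends: the paper states this lemma and explicitly declines to give a proof, noting only that ``the flow of the proof is essentially the same as in the unordered case'' once one passes to the $k$-partite multicolored setting. Your three-step outline (sampling reduction to fixed parts, union bound over non-regular or signature-violating pairs, multicolored counting lemma on the remaining good pairs) is exactly the adaptation of Lemma~3.8 in \cite{FischerNewman2005} that the paper has in mind.

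Two minor remarks. First, your invocation of Observation~\ref{obs:q_stat_vs_q_k_stat} is unnecessary: the lemma is already stated for the $(q,k)$-statistic, not the $q$-statistic, so there is nothing to reduce. Second, your vertex-exposure description of the counting lemma has its indices slightly crossed: when you reveal $v_1,\ldots,v_q$ in order, the sets that must retain relative size at least $\gamma$ for regularity to remain applicable are the candidate subsets of the \emph{not-yet-revealed} parts $V_{i_{m'} j_{m'}}$ with $m' > m$ (cut down by the color constraints imposed by already-chosen vertices), not ``the conditioning set within each previously fixed $V_{i_{m'} j_{m'}}$'' with $m'<m$, where the vertex is already a singleton. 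This is a bookkeeping slip rather than a real gap; the standard embedding-lemma argument you cite from \cite{AlonBenEliezerFischer2017, AustinTao2010, AxenovichMartin2011} delivers what you need. Relatedly, your opening choice $\gamma = c\epsilon/(q^2|\Sigma|)$ is likely too optimistic for that argument, as you yourself acknowledge at the end; the actual dependence on $q$ is worse, but this does not affect the existence statement.
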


\begin{lemma}[Ordered analogue of Lemma 4.4 in \cite{FischerNewman2005}]
\label{lem:finding_robust_paramteres}
For every $k$, $\gamma$, and $f \colon \mathbb{N} \to \mathbb{N}$ there exist $q$, $T$, and an algorithm that makes up to $q$ (piecewise-canonical) queries to any large enough graph $G$, computing with probability at least $2/3$ a $(\gamma, k)$-signature of an $(f, \gamma, k)$-final partition of $G$ into at most $T$ sets.
\end{lemma}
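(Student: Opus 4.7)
The plan is to mirror the proof of Lemma 4.4 in \cite{FischerNewman2005}, with two essential adaptations for the ordered setting: restricting all partitions under consideration to $k$-refinements of the $k$-interval equipartition (so that we only care about inter-interval pairs), and replacing the standard index by the multicolor index $\mathrm{ind}(U,V) = \sum_{\sigma \in \Sigma} d_\sigma(U,V)^2$. The overall strategy proceeds in three steps, with the output $T$ a function of $k$, $\gamma$, $f$ and $|\Sigma|$ only, independent of $n$.

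First, for every $t$ with $k \le t \le T$ (where $T$ will be fixed at the end), I would quantize the space of $(\gamma/4, k)$-signatures of $k$-refinement equipartitions with $t$ parts per interval by discretizing each density to a $(\gamma/4)$-grid in $[0,1]$. This yields a finite net $\mathcal{N}_t$ of candidate signatures whose cardinality depends only on $t$, $k$, $\gamma$ and $|\Sigma|$. Second, I would invoke the ordered version of the partition parameters test of \cite{GoldreichGoldwasserRon1998} (to be developed in the subsection on partition parameters) on each $S \in \mathcal{N}_t$. The test is piecewise-canonical: it samples uniformly from each interval independently and estimates inter-interval densities between sampled parts. Running this test $O(\log |\mathcal{N}_t|)$ times per signature and taking a majority vote gives, with probability at least $5/6$, a reliable classification in which (i) every accepted $S$ is $\gamma/2$-close to an actual $(\gamma, k)$-signature of some $k$-refinement partition of $G$, and (ii) every actual such signature is $\gamma/2$-close to some accepted $S$.

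Third, I would search iteratively for a final partition. Set $t_0 = k$; given $t_i$, let $s_i$ be the maximum index of any accepted signature at level $t_i$ and $s_i'$ the maximum index of any accepted signature at level $f(t_i)$. If $s_i' - s_i < \gamma'$ for a suitable $\gamma' = \gamma'(\gamma, |\Sigma|)$, output any accepted signature achieving $s_i$; otherwise set $t_{i+1} = f(t_i)$ and repeat. Since all indices lie in $[0, |\Sigma|]$ and each unsuccessful iteration raises the witnessed index by at least $\gamma'$, the loop halts after at most $\lceil |\Sigma| / \gamma' \rceil + 1$ rounds, which determines $T = f^{(\lceil |\Sigma|/\gamma' \rceil + 1)}(k)$. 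At termination the accepted signature is $\gamma/2$-close to the actual signature of a $k$-refinement partition whose index is nearly maximal among $k$-refinements with up to $f(\cdot)$ parts per interval, which (after the usual tightening of approximation parameters at the outset) is a $(\gamma, k)$-signature of an $(f, \gamma, k)$-final partition of $G$.

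The main obstacle is the faithful adaptation of the partition parameters test of \cite{GoldreichGoldwasserRon1998} to the $k$-partite piecewise-canonical regime: one has to verify that a constant number of samples drawn uniformly per interval suffices to decide whether $G$ admits a $k$-refinement equipartition matching a given signature, under the averaged interpretation that tolerates a $\gamma$-fraction of bad pairs. A secondary subtlety is that the notion of $(f, \gamma, k)$-finality must be set up in terms of $k$-refinements (rather than arbitrary refinements as in \cite{FischerNewman2005}), since only such partitions interact with the $(q,k)$-statistic; once this definition is aligned with the robust/final hierarchy used by the preceding lemma, the bookkeeping is routine and the stated query bound follows.
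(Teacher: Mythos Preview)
Your proposal is correct and follows essentially the same approach as the paper, which itself defers to \cite{FischerNewman2005} and only spells out the two adaptations you identify: restricting all partitions to $k$-refinements of the interval equipartition (ignoring intra-interval pairs), and using the multicolored index together with the ordered partition-parameters test of Subsection~\ref{subsec:partition_parameters}. One minor remark: since $\sum_{\sigma} d_\sigma(U,V)=1$, the index of a pair is in fact bounded by $1$ rather than $|\Sigma|$, so the iteration count in your third step can be taken as $O(1/\gamma')$ independent of $|\Sigma|$; this does not affect the argument.
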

Note that the second lemma requires piecewise-canonical vertex queries, making our algorithm a piecewise-canonical one. But Lemma \ref{lem:tocanonical} implies that this algorithm can be converted into a canonical one, since an algorithm that distinguishes between $\delta$-closeness to a property $\mathcal{P}$ and $\epsilon$-farness from $\mathcal{P}$, for any $\epsilon > \delta$, is actually an $(\epsilon - \delta)$-test for being $\delta$-close to $\mathcal{P}$.

\begin{lemma}[Ordered analogue of Lemma 4.5 in \cite{FischerNewman2005}]
\label{lem:robust_is_enough}
	For every $q$ and $\delta$ there exist $\gamma$, $k$, and $f \colon \mathbb{N} \to \mathbb{N}$ with the following property. For every family $\mathcal{H}$ of edge-colored ordered graphs with $q$ vertices there exists a deterministic algorithm that receives as an input only a $(\gamma, k)$-signature $S$ of an $(f, \gamma, k)$-robust partition with $t \geq k$ sets of a graph $G$ with $n \geq N(q, \delta, t)$ vertices, and distinguishes given any $\epsilon$ between the case that $G$ is $(\epsilon - \delta)$ close to some $\mathcal{H}$-positive graph, and the case that $G$ is $\epsilon$-far from every graph that is not $\mathcal{H}$-negative.
\end{lemma}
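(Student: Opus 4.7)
The plan is to mirror the unordered argument of Lemma 4.5 in \cite{FischerNewman2005}, modifying it to respect the $k$-partite interval structure and the multicolored setting. The key inputs are the signature $S$ of the robust partition $P$, the ordered analogue of Lemma 3.8 (which lets us read off the $(q,k)$-statistic from any signature of a sufficiently regular partition), and the definition of $\mathcal{H}$-positivity/negativity in terms of $\sum_{H \in \mathcal{H}} t_k(H,\cdot)$. The idea is that the algorithm does not just read $S$ directly, but rather searches over a small net of alternative signatures that are ``reachable'' from $S$ via an $(\epsilon-\delta)$-Hamming-distance edit of the underlying graph, and checks whether any such alternative would witness $\mathcal{H}$-positivity.

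First, I fix $\gamma_1 = \delta/10$ and let $\gamma_0,k_0$ be the parameters produced by the ordered analogue of Lemma 3.8 for inputs $q$ and $\gamma_1$. I set $\gamma := \min(\gamma_0,\delta/10)$ and $k := k_0$. The function $f$ will be chosen large enough so that the robustness guarantee absorbs every refinement invoked in the argument. Given any $\epsilon$ and input signature $S$ of a $(\gamma,k)$-regular $(f,\gamma,k)$-robust partition $P$ with $t \geq k$ parts, the algorithm enumerates an $\eta$-net (for appropriately small $\eta = \eta(q,\delta)$) of ``candidate signatures'' $S'$ on $k$-refinements of $P$ with at most $f(t)$ sub-parts per original part, restricted to those $S'$ whose densities differ from those of $S$ by at most $\epsilon - \delta/2$ in the appropriate weighted $\ell_1$ norm (where each pair $(V_{ij},V_{i'j'})$ contributes proportionally to $|V_{ij}||V_{i'j'}|$). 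For each such $S'$ the algorithm computes the perceived $(q,k)$-statistic $\tilde{\mu}'$ using the product formula of Subsection \ref{subsubsec:signatures} adapted to the multicolored $k$-partite case, and tests whether $\sum_{H \in \mathcal{H}} \tilde\mu'(H) \geq \tfrac{1}{2}$. It outputs ``$(\epsilon-\delta)$-close to $\mathcal{H}$-positive'' if any $S'$ passes, and ``$\epsilon$-far from every non-$\mathcal{H}$-negative graph'' otherwise.

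For the positive direction, if $G$ is $(\epsilon-\delta)$-close to some $\mathcal{H}$-positive graph $G^\ast$, then viewing $P$ as a partition of $G^\ast$ changes the density of each pair by a total amount bounded by $\epsilon-\delta$; robustness of $P$ then yields a refinement $P^\ast$ of at most $f(t)$ parts per interval that is $(\gamma,k)$-regular for $G^\ast$, with index close to that of $P$. The corresponding signature $S^\ast$ lies inside our net (up to the quantization $\eta$), and by the ordered analogue of Lemma 3.8 its perceived $(q,k)$-statistic approximates $(t_k(H,G^\ast))_H$ to within $\delta/10$, so $\sum_H \tilde\mu^\ast(H) \geq 2/3 - \delta/10 \geq 1/2$ and the algorithm outputs yes. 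For the negative direction, suppose the algorithm accepts on some $S'$. Then $S'$ is the signature (up to $\eta$) of a $(\gamma,k)$-regular refinement of $P$ whose densities represent some graph $G'$ within Hamming distance $\epsilon - \delta/3$ of $G$; by Lemma 3.8 again, $\sum_H t_k(H,G') \geq \tfrac{1}{2} - \delta/10 > 1/3$, so $G'$ is not $\mathcal{H}$-negative, contradicting the assumption that $G$ is $\epsilon$-far from every non-$\mathcal{H}$-negative graph.

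The main technical obstacle is the two-way correspondence between edits of $G$ and perturbations of the signature. In one direction one must show that an arbitrary $(\epsilon-\delta)$-edit of $G$ can be tracked by a bounded refinement of $P$ whose signature differs from $S$ by roughly $\epsilon-\delta$ in weighted $\ell_1$; this is exactly where $(f,\gamma,k)$-robustness is needed, and $f$ must grow fast enough to dominate the refinement sizes that arise when forcing regularity on the perturbed graph. In the reverse direction one must show that every candidate $S'$ in the net is realized, up to negligible error, by some graph within Hamming distance $\epsilon - \delta/3$ of $G$, which is a routine rounding argument about distributing edges of each color across pairs of parts. The choice of $f$, $\gamma$, and the net granularity $\eta$ are coupled, exactly as in \cite{FischerNewman2005}, and the argument goes through with no essential change beyond restricting every partition to refine the base $k$-interval equipartition and ignoring pairs inside a common interval, consistent with the definition of $t_k$.
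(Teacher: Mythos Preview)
Your proposal is correct and takes essentially the same approach as the paper: the paper itself does not give a standalone proof of this lemma, but explicitly defers to Section~6 of \cite{FischerNewman2005} with the $k$-partite and multicolored adaptations, which is precisely what you outline. The one place the paper singles out for extra care is the multicolored analogue of Lemma~6.2 in \cite{FischerNewman2005} (the probabilistic recoloring that realizes a candidate signature by an actual nearby graph), which you describe as a ``routine rounding argument''; this is the step where the multicolored setting genuinely bites, so you may want to flag it as the main technical point rather than a routine one, but your overall plan matches the paper's.
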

Once all definitions for our setting have been given, Lemma $\ref{lem:canonical_to_interval_canonical}$ brings us to a ``starting point'' from which the flow of the proof is essentially the same as in the unordered case, other then two issues mentioned and handled below. To avoid repeating the same ideas as in the unordered case, we will not provide the full technical details of the proofs of the three main lemmas. Deriving the proof of Theorem \ref{thm:canonical_to_estimable} from Lemmas \ref{lem:finding_robust_paramteres} and \ref{lem:robust_is_enough} is similar to the unordered case.

One place where the move to a multicolored version requires more care is in proving the multicolored analogue of Lemma 6.2 in \cite{FischerNewman2005}. In the original proof, edges are being added/removed with a suitable probability, where the decision whether to modify an edge is independent of the other edges. In the multicolored version, the analogue of adding/removing edges is recoloring them. One way to do this is the following: for every color $c$ where edges need to be added, we consider every relevant edge that has a ``too dense'' color $c'$ and, with a suitable probability (that depends on the densities of the colors $c,c'$ and the relevant signature), we recolor this edge from $c'$ to $c$. By doing this iteratively for all colors that are in deficit, the multicolored analogue of Lemma 6.2 in \cite{FischerNewman2005} follows.

Another issue is that for our ordered setting, we need a ``partition parameters'' test that is slightly different than the one proved in \cite{GoldreichGoldwasserRon1998} and used in \cite{FischerNewman2005}. We describe the modified partition parameters problem in Subsection \ref{subsec:partition_parameters}.

\subsection{The partition parameters test}
\label{subsec:partition_parameters}
Let $\Phi = \{\rho_j^{LB}, \rho_j^{UB}\}_{j=1}^{k} \cup \{ \varrho_{j,j'}^{LB}, \varrho_{j,j'}^{UB}\}_{j < j' \in \binom{[k]}{2}}$ be a set of nonnegative parameters so that $\rho_j^{LB} \leq \rho_j^{UB}$ and $\varrho_{j,j'}^{LB} \leq \varrho_{j,j'}^{UB}$. An $n$-vertex graph $G = (V,E)$ satisfies an (unordered) \emph{$\Phi$-instance} if there is a partition $V = V_1 \cup \ldots \cup V_k \cup V'$ such that 
\begin{itemize}
	\item $0 \leq |V'| < k$ and $|V|-|V'|$ is divisible by $k$.
	\item For any $1 \leq j \leq k$, $\rho_j^{LB} \lfloor n / k \rfloor \leq |V_j| \leq  \rho_j^{UB} \lfloor n / k \rfloor$.
	\item For any $j < j' \in \binom{[k]}{2}$, $\varrho_{j,j'}^{LB} \lfloor  n / k \rfloor^2 \leq |E[V_i, V_j]|  \leq \varrho_{j,j'}^{UB}  \lfloor n / k \rfloor^2$. 
	\end{itemize}
In \cite{GoldreichGoldwasserRon1998}, it was shown that the property of having an unordered $\Phi$-instance is testable. 

For our purposes, the base graph that we need to consider is an edge-colored $r$-partite graph, where the parts are of equal size (instead of a complete base graph, as in the unordered case). Formally, the partition parameters problem that we need to test is the following. 
\begin{definition}[Ordered $\Phi$-instance]
An \emph{ordered $\Phi$-instance} whose parameters are the positive integers $r$ and $k$ and the finite color set $\Sigma$ consists of the following ingredients:
\begin{itemize}
\item For every $i < i' \in [r]$ and $j,j' \in [k]$ and every $\sigma \in \Sigma$, there are parameters $ \ell_{j,j'}^{i,i'}(\sigma) \leq  h_{j,j'}^{i,i'}(\sigma)$.
\item For fixed $i,i',j,j'$, it holds that $\sum_{\sigma \in \Sigma} \ell_{j,j'}^{i,i'}(\sigma) \leq 1 \leq \sum_{\sigma \in \Sigma} h_{j,j'}^{i,i'}(\sigma)$.
\end{itemize}
Let $G$ be an $n$-vertex $\Sigma$-edge-colored ordered graph, and denote its $r$-interval equipartition by $I = (I_1, \ldots, I_r)$. 
$G$ is said to \emph{satisfy} $\Phi$ if there exist disjoint sets of vertices $V_{11}, \ldots, V_{1k}, \ldots, V_{r1}, \ldots, V_{rk}$ such that for any $i$ and $j$, $V_{ij} \subseteq I_i$ and $|V_{ij}| = \lfloor n / rk \rfloor$, and $\ell_{j,j'}^{i,i'}(\sigma) \leq d_{\sigma}(V_{ij}) \leq  h_{j,j'}^{i,i'}(\sigma)$ for any $i < i' \in [r]$, $j,j' \in [k]$ and $\sigma \in \Sigma$.
\end{definition}
Recall that $d_{\sigma}(A,B)$ is the \emph{density} function of the color $\sigma$ between the sets $A$ and $B$.
Note that while in the original unordered $\Phi$-instance, one could also specify lower and upper bounds on the number of vertices in each part, in our case it is not needed; for us it suffices to consider the special case where the size of each part is a $1/rk$-fraction of the total number of vertices.
\begin{lemma}
The edge-colored ordered graph property of satisfying an ordered $\Phi$-instance is testable.
\end{lemma}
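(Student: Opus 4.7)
The plan is to adapt the partition parameters tester of Goldreich, Goldwasser and Ron \cite{GoldreichGoldwasserRon1998} to respect the $r$-interval structure. Let $\Phi$ be an ordered $\Phi$-instance with parameters $r, k, \Sigma$ and slack $\epsilon > 0$. Given an unknown input $G \colon \binom{[n]}{2} \to \Sigma$ with its known $r$-interval equipartition $I = (I_1, \ldots, I_r)$, the test will sample a uniformly random set $S_i \subseteq I_i$ of size $s = s(\epsilon, r, k, |\Sigma|)$ from each interval independently, query all colored edges between pairs of sampled vertices lying in different intervals, and then exhaustively enumerate over all ways of splitting each $S_i$ into $k$ equal-sized parts $S_{i1}, \ldots, S_{ik}$. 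For each such combination of sub-partitions, it checks whether the empirical densities $d_\sigma(S_{ij}, S_{i'j'})$ are within $\epsilon/4$ of the interval $[\ell_{j,j'}^{i,i'}(\sigma), h_{j,j'}^{i,i'}(\sigma)]$ for every $i < i'$, every $j, j' \in [k]$ and every $\sigma \in \Sigma$, accepting if and only if some combination passes.

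For completeness, suppose $G$ satisfies $\Phi$ via a partition $\{V_{ij}\}$ with $V_{ij} \subseteq I_i$ and $|V_{ij}| = \lfloor n/rk \rfloor$. Since $S_i$ is a uniform sample from $I_i$, the random sets $S_{ij}^{\star} := S_i \cap V_{ij}$ are concentrated around size $s/k$, and by Hoeffding's inequality the empirical densities $d_\sigma(S_{ij}^{\star}, S_{i'j'}^{\star})$ lie within $\epsilon/8$ of $d_\sigma(V_{ij}, V_{i'j'})$ with probability $1 - o(1)$, uniformly over all $(i,i',j,j',\sigma)$, for $s$ polynomial in $r, k, |\Sigma|, \epsilon^{-1}$. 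A negligible adjustment equates $|S_{ij}^\star|$ to $s/k$, producing an accepting witness.

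For soundness, we show the contrapositive: whenever an accepting witness $(S_{ij})$ exists, $G$ is $\epsilon$-close to an ordered $\Phi$-instance. The argument is the ordered analogue of the GGR ``oblivious classification'' step. For each interval $I_i$ separately and each vertex $v \in I_i$, assign $v$ a label $j(v) \in [k]$ by comparing, for each candidate $j$, the vector of observed $\sigma$-densities $\bigl(d_\sigma(\{v\}, S_{i'j'})\bigr)_{i' \ne i,\, j',\, \sigma}$ against the target midpoints $\tfrac{1}{2}(\ell_{j,j'}^{i,i'}(\sigma) + h_{j,j'}^{i,i'}(\sigma))$, and picking a $j$ whose profile is within $\epsilon/4$ of the targets (with a canonical tie-breaking rule; if no such $j$ exists assign $v$ arbitrarily). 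Standard sampling arguments show that, except for an $\epsilon/8$-fraction of vertices, the induced classes $W_{ij} := \{v \in I_i : j(v) = j\}$ have sizes close to $n/(rk)$ and induce pair densities within $\epsilon/4$ of the target intervals. A trimming step reassigns the $o(n)$ overflow or misclassified vertices within each interval to equalize $|W_{ij}| = \lfloor n/rk \rfloor$, affecting at most $\epsilon n^2/4$ edges. Finally, for the pairs $(W_{ij}, W_{i'j'})$ whose density still lies outside $[\ell_{j,j'}^{i,i'}(\sigma), h_{j,j'}^{i,i'}(\sigma)]$, we recolor a minimal number of edges between them to force compliance (for multicolored alphabets this is done by iteratively recoloring from over-represented colors to under-represented ones, as in the multicolored adaptation of Lemma 6.2 of \cite{FischerNewman2005}); the total number of edge modifications is at most $\epsilon n^2$.

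The main obstacle is the soundness analysis, and specifically the fact that the classification must be carried out inside each interval $I_i$ without moving vertices between intervals, while still controlling the $r\binom{k}{2}|\Sigma|$ cross-interval density constraints simultaneously. This is achieved by observing that each vertex $v \in I_i$ only participates in inter-interval constraints, so its classification depends solely on edges going to $S_{i'j'}$ for $i' \ne i$, which are information we have queried. Intra-interval edges are unconstrained by $\Phi$, which actually simplifies the bookkeeping relative to the original unordered setting of \cite{GoldreichGoldwasserRon1998}, and the divisibility issue $n/rk \notin \mathbb{Z}$ is handled uniformly across all intervals by absorbing remainders into a negligible set of vertices.
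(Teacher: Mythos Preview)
Your completeness argument is fine, but the soundness direction has a genuine gap. The claim that ``standard sampling arguments'' show the classes $W_{ij}$ have sizes close to $n/(rk)$ is not justified and is in general false. Profile-based classification assigns each vertex $v$ to whichever class $j$ has target densities closest to $v$'s observed densities toward the sample; nothing in this rule controls how many vertices land in each class. Concretely, if two classes $j_1, j_2$ in interval $i$ happen to have identical (or nearly identical) target density vectors $\bigl(\tfrac12(\ell_{j,j'}^{i,i'}(\sigma)+h_{j,j'}^{i,i'}(\sigma))\bigr)_{i',j',\sigma}$, your rule cannot distinguish them, and your tie-breaking may send every qualifying vertex to $j_1$, leaving $|W_{ij_2}| \approx 0$. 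The subsequent ``trimming'' is then not $o(n)$ but $\Theta(n/r)$ vertices, and you lose control over the cross-densities you had just established.

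There is also a circularity issue in the density claim. By construction, each $v \in W_{ij}$ has $d_\sigma(\{v\}, S_{i'j'})$ close to the target, so averaging gives $d_\sigma(W_{ij}, S_{i'j'})$ close to the target. But you need $d_\sigma(W_{ij}, W_{i'j'})$, and $W_{i'j'}$ is itself defined via profiles toward $S$, not via membership in $S_{i'j'}$; nothing you have written bridges $S_{i'j'}$ to $W_{i'j'}$. Closing this loop is exactly the hard part of the analysis in \cite{GoldreichGoldwasserRon1998}. The paper, following that work, handles both issues by an iterative redistribution rather than a one-shot classification: the vertex set is split into constantly many chunks $X_1,\ldots,X_s$, and for each chunk in turn one brute-force enumerates not only a partition of an auxiliary sample $U_t$ but also the (quantized) $P_j$-fractions saying how many vertices of each neighborhood-profile type should go to each class; one then shuffles $X_t$ accordingly while holding the rest of the partition fixed. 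The enumeration over these fractions is precisely what enforces the equal sizes $\lfloor n/rk\rfloor$, and the chunk-by-chunk update is what lets one track the densities between the evolving parts rather than only toward the fixed sample. The ordered adaptation (which you correctly anticipate at a high level) is simply to carry out each shuffle separately inside each interval $I_i$, so that vertices never cross intervals.
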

The proof is very similar to that of the unordered case in \cite{GoldreichGoldwasserRon1998}. 
We first explain the main ideas of the proof in \cite{GoldreichGoldwasserRon1998}, and then describe what modifications are needed for our case.
\paragraph{A sketch of the proof of Goldreich, Goldwasser and Ron \cite{GoldreichGoldwasserRon1998}}
\begin{itemize}
	\item The following observation is a key to the proof: Given a partition $P = (P_1, \ldots, P_k)$ of the set of vertices $V$ and a set $X$ which is small relatively to $V$, define the \emph{neighborhood profile} of a vertex $v \in X$ with respect to $P, X$ as the (ordered) set of $k$ densities of the edges from $v$ to each of the parts $P_j \setminus X$. The observation is that if all vertices of $X$ have approximately the same neighborhood profile, and if we redistribute the vertices of $X$ among the sets $P_1, \ldots, P_k$ so that each set receives roughly the same amount of vertices it lost to $X$, then the amount of edges between every pair of sets $P_i, P_j$ is roughly maintained.
	\item Generally we will deal with sets $X$ containing vertices with different neighborhood profiles, and will need a way to cluster them according to their profiles, and then be able to use the above observation. For this, one needs an \emph{oracle} that, given a vertex $v$, will determine efficiently and with good probability a good approximation of the neighborhood profile of $v$. Another related oracle that we need is one that efficiently approximates, for $P_1, \ldots, P_k$ and $X$, the ``$P_j$-fraction'' with respect to $X$, which determines what fraction of the vertices in $X$ with a given neighborhood profile belong to each $P_j \cap X$. 
	\item Using the oracles, it is shown that if a given graph satisfies a $\Phi$-instance, then the following process generates, with good probability, an explicit partition $P_1^s, \ldots, P_k^s$ that approximately satisfies $\Phi$. Assume for now that we start with a partition $P_1^0, \ldots, P_k^0$ that satisfies the $\Phi$-instance exactly.
	We partition all vertices of the graph into a large enough constant number of sets $X_1, \ldots, X_s$ of equal size. Now we do the following for $i=1, \ldots, s$: We take the elements of $X_i$, apply the oracles on them, accordingly approximate how many elements from $X_i$ with a certain neighborhood profile came from each $P_j^{i-1}$, and then ``shuffle'': Return the same amount of elements from $X_i$ with this profile to $P_j^{i-1}$, to create the part $P_j^i$ (the returned elements are chosen arbitrarily among those with the relevant profile, and in particular, are not necessarily the ones that were taken from $P_j^{i-1}$).
	\item There are two problems with the above statement. First, we do not know in advance the partition that satisfies the desired $\Phi$-instance, and thus along the way the partition $P_1, \ldots, P_k$ is not known to us. Second, we still do not know how to simulate the oracles. The solution to both of these problems is a brute force one: For each $X_i$ we pick a large enough constant size set $U_i \subseteq V \setminus X_i$, and then enumerate on all possible partitions of $U_i$ into $U_i \cap P_1^{i-1}, \ldots, U_i \cap P_k^{i-1}$ and all (rounded) possible values of the $P_j^{i-1}$-fraction for each $j=1,\ldots,k$ and all $i$.
	As it turns out, if there is a partition of $G$ satisfying the $\Phi$-instance, then our brute force search will find a good approximation of t with good probability.
	\item To turn the partitioning algorithm into a test, the observation is that one does not need to apply the first oracle on every vertex in each $X_i$ to determine its neighborhood profile. 
	Instead, we only apply it for a constant-size $S_i \subseteq X_i$ chosen at random. As it turns out, this process is almost as accurate as the partitioning process, and in particular, it is shown that if $G$ has a $\Phi$-instance then the process will accept, with good probability, a set of parameters of a $\Phi'$-instance which is close to the $\Phi$-instance. On the other hand, if $G$ is far from having such a $\Phi$-instance, then the process will reject, with good probability, all sets of parameters that are close to the $\Phi$-instance. This concludes the proof of \cite{GoldreichGoldwasserRon1998}.
\end{itemize}

\paragraph{Adapting the proof to our case}
\begin{itemize}
\item The first and minor issue that we have to deal with is the fact that our graphs are edge-colored, and not standard graphs as in \cite{GoldreichGoldwasserRon1998}. To handle this, instead of considering the neighborhood profile of a vertex, we are interested in the \emph{colored neighborhood profile} of a vertex $v$, which keeps, for any relevant part $P_j^i$ and any color $\sigma$, the fraction of vertices $u \in P_j^i$ for which $vu$ is colored $\sigma$. The rest of the proof translates naturally, implying that with this modification, the proof of \cite{GoldreichGoldwasserRon1998} also applies to edge-colored graphs.
\item The second issue is that our desired partition that satisfies the $\Phi$-instance has to be a refinement of the interval partition $I_1, \ldots, I_r$ of the input graph, as opposed to the situation in \cite{GoldreichGoldwasserRon1998}. This issue is also not hard to handle. A ``shuffle'' operation in the unordered case was the process of removing elements from $P_{j}^{i-1}$ into $X_i$, and then returning other elements from $X_i$ to create $P_j^i$. In our case we will have to make shuffles of elements separately within each $I_i$, since it is not allowed to move elements between different $I_i$'s. The rest of the analysis is essentially the same as in the proof of \cite{GoldreichGoldwasserRon1998}.
\item For the analysis of the last bullet to hold, we need the ability to pick a vertex uniformly at random from a given predetermined part $V_i$. This means that our algorithm is a piecewise canonical one, but not necessarily canonical. However, the transformation from a piecewise canonical test to a canonical one, that was proved in Section \ref{sec:piecewise_to_canonical}, implies the canonical testability of our version of the partition problem.
\end{itemize}

\section{Canonical testability versus regular reducibility}
\label{sec:canonical_vs_regular_reducibility}
As in the previous section, first we describe how the equivalence between testability and regular reducibility is proved in the unordered case \cite{AlonFischerNewmanShapira2009}, and then detail the small changes required to prove the edge-colored ordered case, namely Theorem \ref{thm:canonical_vs_regular_reducible}.

\subsection{The unordered case}
\subsubsection{Enhancing regularity efficiently}
\label{subsubsec:enhancing_regularity}
In Section 3 of \cite{AlonFischerNewmanShapira2009}, it is shown that if a pair of vertex sets $A,B$ has density close to $\eta$ and its regularity measure is very close to $\gamma$, then by making a small number of edge modifications (insertions/deletions), one can turn the pair $A,B$ into a ``perfect'' one, that has density exactly $\eta$ and is $\gamma$-regular. The proof has two main steps: In the first step, we take a ``convex combination'' of $G[A,B]$ with a random bipartite graph with density $\eta$. This process does not change significantly the density between $A$ and $B$, but since a random graph is highly regular, the combination is slightly more regular then the original $G[A,B]$, and this is all we need.
In the second step, we fix the density between $A$ and $B$ to be exactly $\eta$. This might very slightly hurt the regularity, but if in the first step we make $G[A,B]$ a bit more regular, i.e., $\gamma'$-regular for a suitable $\gamma' < \gamma$, then it will remain $\gamma$-regular even after the loss of regularity in the second step.

\subsubsection{Canonical testability implies regular reducibility}
\label{subsubsec:canonical_to_regularityr_reducible}
The easier direction of the proof is to show that any canonically testable property is also regular reducible, as is shown in Section 4 of \cite{AlonFischerNewmanShapira2009}.
Recall that, as discussed in Subsection \ref{subsubsec:signatures}, a regular enough partition of a graph $G$ provides a good approximation of the $q$-statistic of $G$. We consider a canonical test $T$ with a small enough proximity parameter, making $q$ vertex queries. 
Basically, our set of ``accepting'' regular instances (see Definition 2.6 in \cite{AlonFischerNewmanShapira2009}) will be created as follows: Initially, we take an $\epsilon$-net of possible parameters of regular partitions: This is a constant size quantized collection of the possible parameters of regular partitions, that ``represents'' all possible choices of parameters (in the sense that any possible choice of parameters has a representative in the constant size collection that is very close to it). Among the representatives from the $\epsilon$-net, we choose as accepting only those choices of parameters that predict acceptance of the above canonical test with probability at least $1/2$. 
Now, if a graph $G$ satisfies our property $\mathcal{P}$, then it is accepted with probability $2/3$ by the canonical test, and thus a regular enough partition of $G$ will be similar to some accepting regularity instance, making $G$ very close to satisfying this instance. Conversely, if $G$ is $\epsilon$-far from $\mathcal{P}$, then it must also be far from any graph $G'$ satisfying the accepting instance -- since, by our choice of the accepting regular instances as those that indicate acceptance of the canonical test, any such $G'$ is accepted by the canonical test with probability that is larger than $1/3$, meaning that $G'$ cannot be far from satisfying $\mathcal{P}$ (and thus $G$ cannot be too close to $G'$, otherwise it would be $\epsilon$-close to $\mathcal{P}$, a contradiction)

\subsubsection{Sampling preserves regular partitions}
\label{subsubsec:sampling_partitions}
In Section 5 of \cite{AlonFischerNewmanShapira2009} it is shown that if we sample a constant size set $S$ of vertices in a graph $G$, then with good probability the induced subgraph $G[S]$ will have $\gamma$-regular partitions with the same structure and approximately the same parameters (up to small differences) as those of $G$. The proof builds on a weaker argument of the same type, proved in \cite{Fischer2005}, which states that for a regular enough partition $P$ of $G$, and a large enough sample $S$, with good probability $S$ has a partition with roughly the same densities as these of $P$, and with regularity that is slightly worse than that of $P$. 

\subsubsection{Regular reducibility implies testability}
Due to the fact that canonical testability implies estimability, as we have seen in Section \ref{sec:canonical_to_estimable}, it is enough to show that satisfying a specific regularity instance is testable. To do so, we take a large enough sample $S$ of vertices and determine all possible parameters of regular partitions of $S$. By Subsection \ref{subsubsec:sampling_partitions}, these are essentially also all possible parameters of regular partitions of $G$, up to a small error. By Subsection \ref{subsubsec:enhancing_regularity}, this small error is not a problem, implying that we are able to determine (with good probability) whether $G$ satisfies the regularity instance by checking if it is close to one of the regular partitions suggested by $S$.

\subsection{Adapting the proof to the ordered case}
First, we need to translate the results from Section 3 in \cite{AlonFischerNewmanShapira2009} to the multicolored setting. The main lemma that we need here is the following.
\begin{lemma}[Ordered analogue of Lemma 3.1 in \cite{AlonFischerNewmanShapira2009}]
\label{lem:enhancing_regularity_multicolor}
There exists a function $f \colon \mathbb{N} \times (0,1) \to \mathbb{N}$  such that for any $0 < \delta \leq \gamma \leq 1$ and finite alphabet $\Sigma$ the following holds: Suppose that $(A,B)$ is a $(\gamma+\delta)$-regular pair of sets of vertices with density between $\eta - \delta$ and $\eta + \delta$ in a $\Sigma$-edge-colored graph, where $|A| = |B| = m \geq m_0(\eta, \delta, |\Sigma|)$. Then, it is possible to make at most $\delta f(|\Sigma|, \gamma) m^2$ edge color modifications in $G$, turning $(A,B)$ into a $\gamma$-regular pair with density precisely $\eta$.
\end{lemma}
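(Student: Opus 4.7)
The plan is to adapt the two-step recipe from the proof of Lemma~3.1 in \cite{AlonFischerNewmanShapira2009}: first apply a randomized ``smoothing'' that boosts regularity strictly beyond $\gamma$ while moving each color density $d_\sigma(A,B)$ only slightly, and then perform a deterministic density-correction step that restores each $d_\sigma(A,B)$ to the target $\eta_\sigma$ exactly, at the cost of a small additional regularity loss. Taking $\alpha \eqdef \Theta(\delta)$ and accounting for the two losses separately, the pair will end up $\gamma$-regular with the required exact densities, and the total number of recolorings will be bounded by $\delta f(|\Sigma|, \gamma) m^2$ for a suitable $f$.

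In the \emph{smoothing step}, I would pick a uniformly random subset $E^* \subseteq A \times B$ of size $\alpha m^2$ and, independently for each $e \in E^*$, draw a new color from the distribution $(\eta_\sigma)_{\sigma \in \Sigma}$ (or from its normalization if the $\eta_\sigma$ do not exactly sum to $1$). Because the new colors on $E^*$ are i.i.d., a Chernoff bound applied to each candidate subrectangle $A' \times B'$ with $|A'| \geq (\gamma - \delta/2)|A|$ and $|B'| \geq (\gamma - \delta/2)|B|$, together with a union bound over such subrectangles and over $|\Sigma|$ colors, shows that for $m \geq m_0(\eta, \delta, |\Sigma|)$ large enough the modified pair is $(\gamma - \delta/2)$-regular in every color simultaneously with positive probability, and every color density shifts by at most $2\alpha = O(\delta)$ from its pre-smoothing value. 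I would then fix any realization that achieves this.

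In the \emph{correction step} I would set $\Delta_\sigma \eqdef \lfloor \eta_\sigma m^2 \rfloor - |E_\sigma|$, where $E_\sigma$ is the current set of $\sigma$-edges; note that $\sum_\sigma \Delta_\sigma$ is essentially zero. I would then iteratively recolor edges from surplus colors $\tau$ (those with $\Delta_\tau < 0$) to deficit colors $\sigma$ (those with $\Delta_\sigma > 0$), where each batch of edges to recolor is drawn uniformly at random among the currently $\tau$-colored edges of $A \times B$; this is the same scheme sketched for Lemma~6.2 of \cite{FischerNewman2005} in Subsection~\ref{subsec:adapt_canonical_estimable}. Since the total number of recolorings is $\sum_{\sigma : \Delta_\sigma > 0} \Delta_\sigma = O(|\Sigma| \delta m^2)$ and every surplus reservoir has size $\Omega(\eta_\tau m^2)$, another Chernoff-plus-union-bound argument bounds the extra loss of regularity by $\delta/2$ in every color; chained with the smoothing step this yields a $\gamma$-regular pair with exact densities $\eta$.

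The main obstacle I anticipate is coordinating all $|\Sigma|$ densities simultaneously in the correction step: in the two-color case a single flip repairs two densities at once, while here a recoloring moves mass only between one ordered pair of colors, so successive recoloring batches are not independent. To keep the Chernoff machinery applicable, I would fix the schedule of $(\tau,\sigma)$ recoloring batches in advance using the signs of $\Delta_\sigma$ computed immediately after smoothing, and treat each batch as a uniform sample from a disjoint reservoir, absorbing any residual dependence into a slightly larger error budget. The dependence of the final bound on $|\Sigma|$ inside $f(|\Sigma|, \gamma)$ absorbs these overheads.
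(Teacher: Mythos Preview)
Your proposal is correct and follows essentially the same two-step scheme the paper sketches: the paper's own argument is just ``run the proof of Lemma~3.1 in \cite{AlonFischerNewmanShapira2009}, replacing the binary second coin flip in their Lemma~3.3 by a $|\Sigma|$-outcome flip with probabilities $(\eta_\sigma)_\sigma$,'' and your smoothing step is exactly this, while your correction step is the surplus-to-deficit recoloring the paper invokes in Subsection~\ref{subsec:adapt_canonical_estimable}. One quantitative caveat: with $\alpha=\Theta(\delta)$ carrying an absolute constant you will not reach $(\gamma-\delta/2)$-regularity, because for subrectangles with $|A'|\in[(\gamma-\delta/2)m,(\gamma+\delta)m)$ the original regularity does not apply directly and an enlargement argument introduces an extra $O(\delta/\gamma)$ into the deviation; taking $\alpha=\Theta(\delta\cdot\operatorname{poly}(1/\gamma))$ fixes this and still lands inside the budget $\delta f(|\Sigma|,\gamma)m^2$.
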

The proof of Lemma \ref{lem:enhancing_regularity_multicolor} is largely similar to that of Lemma 3.1 in \cite{AlonFischerNewmanShapira2009}.
The only places that require special attention in the translation of the proof  are those with ``coin flip'' arguments, such as the one in the proof of Lemma 3.3 in \cite{AlonFischerNewmanShapira2009}. Adapting this type of arguments to the multicolored case is done as described in Subsection \ref{subsec:adapt_canonical_estimable}. In the proof of Lemma 3.3, for example, the second coin flip needs to have $|\Sigma|$ possible outcomes instead of two (where the probability to get a $\sigma$ should correspond to the desired density $\eta_\sigma$).

The corollary of Lemma \ref{lem:enhancing_regularity_multicolor} that is used in our proof is the following. Note that the notation in the following statement is largely borrowed from Definition \ref{def:ordered_regularity_instance}.
\begin{lemma}[Ordered analogue of Corollary 3.8 from \cite{AlonFischerNewmanShapira2009}]
There exists a function $\tau \colon \mathbb{N} \times (0,1) \to (0,1)$ for which the following holds.
Let $R$ be an ordered regularity instance as in Definition \ref{def:ordered_regularity_instance}, with the parameter $k$ in $R$ being large enough (as a function of the other parameters).
Suppose that for some $\epsilon > 0$, a $\Sigma$-edge-colored ordered graph $G$ has an equipartition $(V_{11}, \ldots, V_{1k}, \ldots, V_{r1}, \ldots, V_{rk})$ which is an $r$-refinement, and satisfies $|d_{\sigma}(V_{ij}, V_{i'j'}, \sigma) - \eta_{ij}^{i'j'}(\sigma)| \leq \epsilon \tau(|\Sigma|, \gamma)$ for all $i < i' \in [r]$, $j,j' \in [k]$, and $\sigma \in \Sigma$, and whenever $(i,j,i',j') \notin \bar R$, the pair $V_{ij}, V_{i'j'}$ is $(\gamma + \epsilon \tau(|\Sigma|, \gamma))$-regular.
Then $G$ is $\epsilon$-close to satisfying $R$.
\end{lemma}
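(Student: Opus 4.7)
The plan is to apply the multicolored regularity enhancement lemma (Lemma \ref{lem:enhancing_regularity_multicolor}) to each non-exceptional pair of parts of the given equipartition, and then aggregate the edge-color modifications, choosing $\tau$ so that the total is at most $\epsilon\binom{n}{2}$. Set $\tau(|\Sigma|,\gamma) := 1/(4 f(|\Sigma|,\gamma))$, where $f$ is the function supplied by Lemma \ref{lem:enhancing_regularity_multicolor}, write $\delta := \epsilon\,\tau(|\Sigma|,\gamma)$, and let $m := |V_{ij}|$ denote the common size of the parts.

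For every quadruple $(i,j,i',j')$ with $i<i'$ and $(i,j,i',j') \notin \bar R$, the hypotheses give $|d_\sigma(V_{ij},V_{i'j'}) - \eta_{ij}^{i'j'}(\sigma)| \le \delta$ for each $\sigma\in\Sigma$, and $(V_{ij},V_{i'j'})$ is $(\gamma+\delta)$-regular. Assuming $k$ is large enough (as a function of $r$, $|\Sigma|$, $\gamma$, and the parameters of $R$) that $m \ge m_0(\eta,\delta,|\Sigma|)$ for all densities arising from $R$, Lemma \ref{lem:enhancing_regularity_multicolor} lets us recolor at most $\delta f(|\Sigma|,\gamma)\, m^2 = \epsilon m^2 /4$ edges inside this bipartite pair, after which the pair becomes $\gamma$-regular with $d_\sigma(V_{ij},V_{i'j'}) = \eta_{ij}^{i'j'}(\sigma)$ exactly for every $\sigma$ (the floor/ceiling slack permitted by Definition \ref{def:ordered_regularity_instance} absorbs the $O(1)$ integrality error). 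Crucially, each such application modifies only edges strictly between the two parts in question, so the edit sets of distinct non-exceptional quadruples are pairwise disjoint and can therefore be carried out sequentially without interfering with one another's density or regularity.

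Summing over the at most $\binom{r}{2} k^2$ non-exceptional quadruples yields a total of at most $\binom{r}{2}k^2 \cdot \epsilon m^2 /4 \le \epsilon n^2 /4 \le \epsilon \binom{n}{2}$ edge changes. Let $G'$ be the graph produced. Its partition is still the same $r$-refinement equipartition; every non-exceptional pair is now $\gamma$-regular with exactly the prescribed density; and quadruples in $\bar R$ impose no constraint. Hence $G'$ satisfies $R$, and since $G$ and $G'$ differ in at most $\epsilon \binom{n}{2}$ edges, $G$ is $\epsilon$-close to satisfying $R$, as desired. The main obstacle is really packaged inside Lemma \ref{lem:enhancing_regularity_multicolor} itself: the delicate step is simultaneously driving all $|\Sigma|$ densities to their exact targets while \emph{improving} regularity, via the multicolored ``coin-flip'' argument indicated in Subsection \ref{subsec:adapt_canonical_estimable}; the outer accounting --- disjointness across pairs, summation, and the calibration of $\tau$ --- is routine bookkeeping, with the ``$k$ large enough'' assumption absorbing both the $m_0$ threshold of the enhancement lemma and the integrality slack.
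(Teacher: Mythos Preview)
Your approach is correct and matches the paper's: the lemma is stated as a direct corollary of Lemma~\ref{lem:enhancing_regularity_multicolor}, obtained exactly by applying the enhancement lemma independently to every non-exceptional pair and summing, just as in the derivation of Corollary~3.8 from Lemma~3.1 in \cite{AlonFischerNewmanShapira2009}.

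One small correction: you write that ``$k$ large enough'' is what guarantees $m \ge m_0(\eta,\delta,|\Sigma|)$, but this is backwards --- since $m \approx n/(rk)$, increasing $k$ shrinks the parts. The threshold $m \ge m_0$ comes from the (standard, implicit) assumption that $n$ is large enough relative to the fixed parameters $r,k,\gamma,\epsilon,|\Sigma|$. The ``$k$ large enough'' clause in the statement is inherited from the original \cite{AlonFischerNewmanShapira2009} formulation and is not what absorbs the $m_0$ threshold. This does not affect the correctness of your argument, only the attribution of which hypothesis does which work.
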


\subsubsection*{Canonical testability to ordered regular reducibility}
The next step is to show that any canonically testable ordered graph property is (ordered) regular reducible. 
Recall that, by Section \ref{sec:canonical_to_estimable}, canonical testability implies $(\epsilon, q(\epsilon), k)$-canonicality for $k$ large enough (with respect to $q(\epsilon)$), so it is enough to show the following.
\begin{lemma}[Ordered analogue of Lemma 4.1 in \cite{AlonFischerNewmanShapira2009}]
If a property $\mathcal{P}$ is $(\epsilon, q(\epsilon), k)$-canonical for any $\epsilon$ and any $k$ large enough with respect to $q(\epsilon)$, then it is ordered regular reducible.
\end{lemma}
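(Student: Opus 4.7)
\medskip

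\noindent\textbf{Proof proposal.} The plan is to follow the unordered strategy of Section~4 in \cite{AlonFischerNewmanShapira2009}, now carried out on top of the $k$-partite framework developed in Section~\ref{sec:canonical_to_estimable}. Fix $\delta>0$ and set a small auxiliary parameter $\epsilon'=\epsilon'(\delta)\ll\delta$; let $q=q(\epsilon')$ and let $k_0=k_0(q)$ be as provided by the hypothesis of $(\epsilon',q,k)$-canonicality. Let $\mathcal{A}\subseteq\mathcal{H}_q$ be the accepting set from Definition~\ref{def:k_canonical}. Choose $\gamma$ and $k\geq k_0$ small/large enough so that the analogue of Lemma~3.8 of \cite{FischerNewman2005} (stated in Section~\ref{sec:canonical_to_estimable}) guarantees that for any $(\gamma,k)$-regular $k$-refinement and any $(\gamma,k)$-signature thereof, the perceived $(q,k)$-statistic is within variation distance $\delta/100$ of the actual $(q,k)$-statistic of any graph satisfying the signature.

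The family $\mathcal{R}$ is built by enumerating a $\tau$-net over the parameters of ordered regularity instances, where $\tau=\tau(|\Sigma|,\gamma)$ is the function produced by the ordered analogue of Corollary~3.8 of \cite{AlonFischerNewmanShapira2009}. For each instance $R$ in the net (with the fixed $\gamma,k,r$ and some quantized choice of densities $\eta_{ij}^{i'j'}(\sigma)$ and irregular set $\bar R$), compute the perceived $(q,k)$-statistic directly from the signature and include $R$ in $\mathcal{R}$ iff this statistic assigns total weight at least $1/2$ to $\mathcal{A}$. The number of such instances is bounded by a constant depending only on $\delta$, as required by Definition~\ref{def:ordered_regular_reducible}.

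For the first bullet of regular reducibility, if $f$ satisfies $\mathcal{P}$ then apply a (strong enough) Szemer\'edi-type regularity lemma for ordered graphs refining the $r$-interval equipartition, yielding a $(\gamma+\epsilon'\tau)$-regular $k$-refinement with some densities. Round these densities to the nearest point $R$ of the $\tau$-net; the perceived $(q,k)$-statistic of $R$ is within $\delta/100$ of the actual $(q,k)$-statistic of $f$, which has $\mathcal{A}$-mass at least $2/3$ by $(\epsilon',q,k)$-canonicality, so $R\in\mathcal{R}$. Now invoke the ordered enhancement lemma (the analogue of Corollary~3.8 stated above) to modify $f$ on at most $\delta n^2$ entries and obtain a graph exactly satisfying $R$; hence $f$ is $\delta$-close to $R$. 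For the second bullet, we argue the contrapositive: suppose $f$ is $(\epsilon-\delta)$-close to some $g$ satisfying $R\in\mathcal{R}$. Because $g$ satisfies $R$, its actual $(q,k)$-statistic is within $\delta/100$ of $R$'s perceived statistic, so $g$ puts $\mathcal{A}$-mass strictly more than $1/3$; by $(\epsilon',q,k)$-canonicality $g$ is $\epsilon'$-close to $\mathcal{P}$, and by the triangle inequality $f$ is $(\epsilon-\delta+\epsilon')$-close to $\mathcal{P}$, i.e.\ not $\epsilon$-far, as desired.

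The main obstacle will be the interplay between the enhancement step and the ordered structure: the standard coin-flip proof of the enhancement lemma must be adapted to multicolored edges as indicated in Subsection~\ref{subsec:adapt_canonical_estimable}, and, more importantly, all modifications must keep the partition a $k$-refinement of the fixed $r$-interval equipartition, so no vertex may migrate across interval boundaries. Since the enhancement only alters \emph{edges} (not the vertex partition), this is actually automatic once one works with the refinement $\{V_{ij}\}$ internal to each interval $I_i$, but care is needed to make sure that pairs $(V_{ij},V_{ij'})$ inside the same interval --- which are explicitly disregarded in the ordered regularity instance --- are also ignored during enhancement, so no ``budget'' of $\delta n^2$ modifications is spent on intra-interval pairs. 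Once these bookkeeping details are in place, the rest of the argument is a direct translation of the unordered proof.
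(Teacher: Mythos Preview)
Your proposal is correct and follows essentially the same approach as the paper: build a finite net of ordered regularity instances, declare an instance ``accepting'' when its induced $(q,k)$-statistic assigns mass at least $1/2$ to the accepting family $\mathcal{A}$, and verify the two bullets of Definition~\ref{def:ordered_regular_reducible} using the enhancement lemma (ordered analogue of Corollary~3.8 in \cite{AlonFischerNewmanShapira2009}) for the first and the contrapositive argument for the second. The only cosmetic difference is that the paper introduces the explicit functional $IC(H,R)$ (the ordered multicolored analogue of Definitions~4.3 and~4.7 of \cite{AlonFischerNewmanShapira2009}) and phrases the acceptance criterion as $\sum_{H\in\mathcal{A}} IC(R,H)\geq 1/2$, whereas you speak of the ``perceived $(q,k)$-statistic'' borrowed from Section~\ref{sec:canonical_to_estimable}; these are the same quantity, and your remark that intra-interval pairs must be ignored during enhancement is exactly the bookkeeping the paper sweeps into ``the rest of the proof goes as in the unordered case.''
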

For the results of Section 4 in \cite{AlonFischerNewmanShapira2009}, we define the ordered multicolored analogues of Definitions 4.3 and 4.7 in \cite{AlonFischerNewmanShapira2009} as follows. Note the following ``notational glitch'': $\sigma$ in our definition refers to an edge color, whereas in Definition 4.3 of \cite{AlonFischerNewmanShapira2009} it plays a totally different role, as a permutation.
\begin{definition}
Let $H = (U, E_H)$ be a $\Sigma$-edge-colored ordered graph on $h$ vertices $u_1 < \ldots < u_h$, and let $W = (U, E_w)$ be an (edge) weighted $\Sigma$-edge-colored ordered graph on $h$-vertices, where the weight of edge $(u_i,u_j)$ is $\eta_{ij}$. Define
$IC(H, W) = \prod_{\sigma \in \Sigma} \prod_{u_i u_j \in E_H^{-1}(\sigma)} \eta_{ij}$.

For an ordered regularity instance $R$ as in Definition \ref{def:ordered_regularity_instance}, define $IC(H,R) = \sum_{W \in \mathcal{W}} IC(H,W)$ where $\mathcal{W}$ ranges over all $q$-vertex weighted $\Sigma$-edge-colored weighted graph of the following type: Pick $q$ pairs $(i_1, j_1), \ldots, (i_q, j_k)$ with $i_1 < \ldots < i_q \in [r]$ and $j_1, \ldots, j_q \in [k]$, and take $W$ to be the graph in which the weight of color $\sigma$ between vertices $u_a < u_b$ is $\eta_{i_a, j_a}^{i_b, j_b}(\sigma)$.
\end{definition}
With these definitions, it is straightforward to translate the results of Section 4 in \cite{AlonFischerNewmanShapira2009} to our setting. Note that an analogue for Definition 4.5 in that section is not needed in our case, since there are no non-trivial automorphisms in an ordered graph.
In the proof of Lemma 4.1 in \cite{AlonFischerNewmanShapira2009}, let $\mathcal{A}$ be the family of edge-colored ordered graphs on $q = q(\epsilon)$ vertices, promised to us through Definition \ref{def:k_canonical} by the fact that our given property $\mathcal{P}$ is $(\epsilon, q, k)$-canonical, for $k$ that is sufficiently large. 
As in the unordered case, we take a (constant size) set $\mathcal{I}$ of ordered regular instances, such that any possible regular instance has parameters that are very close to one of the instance in $\mathcal{I}$. Our chosen $\mathcal{R}$ in Definition \ref{def:ordered_regular_reducible} will be as in the unordered case: $\mathcal{R} = \{R \in \mathcal{I} : \sum_{H \in \mathcal{A}} IC(R,H) \geq 1/2 \}$. The rest of the proof goes as in the unordered case.

\subsubsection*{Ordered regular reducibility to (piecewise) canonical testability}
It follows from the definition of regular reducibility, similarly to the unordered case, that it is enough to show that the property $\mathcal{P}$ of satisfying a given regularity instance is canonically testable (the easy proof of the analogous unordered statement appears in Section 6 of \cite{AlonFischerNewmanShapira2009}, and translates directly to our case). In fact, by Lemma \ref{lem:tocanonical}, it is enough to show that $\mathcal{P}$ is piecewise canonically testable. 
Indeed, the core of the proof of this statement in the unordered case is in the fact that for $\gamma$, a large enough (as a function of $\gamma$) sample of a graph has, with good probability, essentially the same $\gamma$-regular equipartitions as the containing graph, up to a small error.

The definition of \emph{similar} regular partitions in the ordered case (analogous to Definition 5.1 from \cite{AlonFischerNewmanShapira2009}) is the same as in the unordered case, but it refers to $(\gamma, k)$-regular partitions, instead of the unordered $\gamma$-regular ones. The analogue of Lemma 5.2 in the ordered case is exactly the same, except that we require the sample $Q$ to have exactly $q$ vertices in each interval of the $k$-interval equipartition (note that this is doable using piecewise-canonical algorithms). The proofs from this section (including the proof of the weaker result from \cite{Fischer2005}), as well as the proof of Theorem 1 from Section 6, translate readily to the ordered case.


\appendix
\section{Sparse boundary implies earthmover resilience}
\label{app:sparse_boundary}
Here we present the proof of Theorem \ref{thm:sparse_boundary_to_earthmover_resilience}.
Let $\mathcal{I}$ be an $n \times n$ black-white image with a $c$-sparse boundary (recall Definition \ref{def:sparse_boundary}). Without loss of generality, we may assume that all pixels in locations $\{1,n\} \times [n] \cup [n] \times \{1,n\}$ are white; otherwise, we may replace $c$ with $c+4$ and turn $\mathcal{I}$ into an $(n+2) \times (n+2)$ image by adding an ``artificial'' white boundary to the image.  
To prove Theorem \ref{thm:sparse_boundary_to_earthmover_resilience}, it is enough to show that for any image $\mathcal{I}'$ that is the result of making at most $\delta n^2$ basic moves on $\mathcal{I}$, the absolute Hamming distance between $\mathcal{I}$ and $\mathcal{I}'$ is $O(c \sqrt{\delta} n^2)$. 

The following lemma suggests that it is enough to prove a similar statement for an image $\mathcal{J}$ of our choice that is close enough (in Hamming distance)
to $\mathcal{I}$.
\begin{lemma}
\label{lem:sparse_Hamming_change}
Fix $\alpha, \beta > 0$ and let $\mathcal{I}, \mathcal{J} \colon [n] \times [n] \to \Sigma$. Suppose that $d_H(\mathcal{J}, \mathcal{J}') \leq \alpha$ for any $\mathcal{J}' \colon [n] \times [n] \to \Sigma$ that satisfies $d_e(\mathcal{J}, \mathcal{J}') \leq \beta$. Then $d_H(\mathcal{I}, \mathcal{I}') \leq \alpha + 2 d_H(\mathcal{I}, \mathcal{J})$ for any $\mathcal{I}' \colon [n] \times [n] \to \Sigma$ satisfying $d_e(\mathcal{I}, \mathcal{I}') \leq \beta$. 
\end{lemma}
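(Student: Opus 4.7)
The plan is to exploit the fact that the earthmover distance is defined via a sequence of basic swaps, which can be applied to any function on the same domain, and that the Hamming distance is preserved when the same permutation is applied to two functions.

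Concretely, given any $\mathcal{I}'$ with $d_e(\mathcal{I}, \mathcal{I}') \leq \beta$, I would fix a sequence of at most $\beta \binom{n}{2}$ (or the appropriate normalized count of) basic moves — equivalently, a single permutation $\pi$ of the row/column indices — that transforms $\mathcal{I}$ into $\mathcal{I}'$. I then \emph{apply the same sequence of basic moves} to $\mathcal{J}$, producing an image $\mathcal{J}'$. Since $\mathcal{J}'$ is obtained from $\mathcal{J}$ by the same number of basic moves used to go from $\mathcal{I}$ to $\mathcal{I}'$, we automatically have $d_e(\mathcal{J}, \mathcal{J}') \leq \beta$. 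The hypothesis of the lemma then gives $d_H(\mathcal{J}, \mathcal{J}') \leq \alpha$.

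The next key observation is that applying a common permutation $\pi$ to two functions preserves their Hamming distance: if $\mathcal{I}' = \mathcal{I} \circ \pi$ and $\mathcal{J}' = \mathcal{J} \circ \pi$, then $\mathcal{I}'(x) \neq \mathcal{J}'(x)$ precisely when $\mathcal{I}(\pi(x)) \neq \mathcal{J}(\pi(x))$, so the set of disagreements in $(\mathcal{I}', \mathcal{J}')$ is the preimage under $\pi$ of the disagreements in $(\mathcal{I}, \mathcal{J})$. Hence $d_H(\mathcal{I}', \mathcal{J}') = d_H(\mathcal{I}, \mathcal{J})$. Applying the triangle inequality for Hamming distance around the chain $\mathcal{I} \to \mathcal{J} \to \mathcal{J}' \to \mathcal{I}'$ yields
\[
d_H(\mathcal{I}, \mathcal{I}') \;\leq\; d_H(\mathcal{I}, \mathcal{J}) + d_H(\mathcal{J}, \mathcal{J}') + d_H(\mathcal{J}', \mathcal{I}') \;\leq\; 2\, d_H(\mathcal{I}, \mathcal{J}) + \alpha,
\]
which is exactly the claimed bound.

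I do not anticipate a real obstacle. The only mild subtlety is the bookkeeping that, for images represented as ordered graphs on $[2n]$ vertices (rows and columns), the basic moves that realize $d_e(\mathcal{I}, \mathcal{I}')$ act on the index set $[n] \times [n]$ in a way that does not depend on the actual entries of $\mathcal{I}$, so the exact same move sequence is a legitimate rearrangement of $\mathcal{J}$ with the same cost — this is what makes the "transport the permutation to $\mathcal{J}$" step work verbatim.
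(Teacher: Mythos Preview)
Your proof is correct and essentially identical to the paper's own argument: both take the permutation (equivalently, the minimal sequence of basic moves) realizing $d_e(\mathcal{I},\mathcal{I}')\le\beta$, apply it to $\mathcal{J}$ to obtain $\mathcal{J}'$ with $d_e(\mathcal{J},\mathcal{J}')\le\beta$, use the hypothesis to get $d_H(\mathcal{J},\mathcal{J}')\le\alpha$, observe that a common permutation preserves Hamming distance so $d_H(\mathcal{I}',\mathcal{J}')=d_H(\mathcal{I},\mathcal{J})$, and conclude by the triangle inequality. (Incidentally, the paper's displayed inequality ends with ``$\le \beta+2\gamma$'', which is a typo for ``$\le \alpha+2\gamma$''; your version states the bound correctly.)
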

\begin{proof}
Write $\gamma = d_H(\mathcal{I}, \mathcal{J})$. 
Consider any $\mathcal{I}'$ satisfying $d_e(\mathcal{I}, \mathcal{I}') \leq \beta$ and let $\sigma$ be a minimal unordered isomorphism of $n \times n$ images\footnote{The formal definition is given for ordered graphs in Definition \ref{def:unordered_iso}, but can translated naturally to images using our standard representation of an image as an ordered graph.} that maps $\mathcal{I}$ to $\mathcal{I}'$. By the minimality of $\sigma$, the image $\mathcal{J}' = \sigma(\mathcal{J})$ satisfies $d_e(\mathcal{J}, \mathcal{J}') \leq \beta$ and so $d_H(\mathcal{J}, \mathcal{J}') \leq \alpha$. 
On the other hand, we know that $d_H(\mathcal{I}', \mathcal{J}') = d_H(\sigma(\mathcal{I}), \sigma(\mathcal{J})) = d_H(\mathcal{I}, \mathcal{J}) = \gamma$ where the least equality follows from the fact that Hamming distance between two images is preserved when applying the same unordered isomorphism on both of them. 
The triangle inequality for the Hamming distance implies that
$$
d_H(\mathcal{I}, \mathcal{I}') \leq d_H(\mathcal{I}, \mathcal{J}) + d_H(\mathcal{J}, \mathcal{J}') + d_H(\mathcal{J}', \mathcal{I}') \leq \beta + 2 \gamma
$$
as desired.
\end{proof}
Indeed, Lemma \ref{lem:sparse_Hamming_change} implies that in order to prove Theorem \ref{thm:sparse_boundary_to_earthmover_resilience}, it is enough to show that there exists some $n \times n$ black-white image $\mathcal{J}$ with $d_H(\mathcal{I}, \mathcal{J}) = O(c \sqrt{\delta} n^2)$, such that for any image $\mathcal{J}'$ that is the result of making at most $\delta n^2$ basic moves on $\mathcal{J}$, we have $d_H(\mathcal{J}, \mathcal{J}') = O(c \sqrt{\delta} n^2)$.
In order to explain which $\mathcal{J}$ to take (as a function of $\mathcal{I}$), and proceed with the rest of the proof, we need several topological definitions.
A \emph{pixel} $P = (i,j)$ in $\mathcal{I}$ is represented by its location $(i,j)$, and its color (black/white) is denoted $\mathcal{I}[P]$. 
The \emph{distance} between two pixels $(i,j), (i',j') \in [n] \times [n]$ is defined as $|(i,j) - (i',j')| = |i-i'| + |j-j'|$; these pixels are \emph{neighbors} if the distance between them is $1$. 
A \emph{shape} $\mathcal{S}$ in $\mathcal{I}$ is a connected component (with respect to the neighborhood relation) of pixels with the same color. We call $P^0 = (1,1)$ the \emph{outer pixel} of an image, and the shape $S^0$ that contains it is called the \emph{outer shape}. Note that, by our assumption, the outer shape of $\mathcal{S}$ contains all pixels in $(\{1,n\} \times [n]) \cup ([n] \times \{1,n\})$.

A \emph{path} between pixels $P$ and $P'$ is a tuple of (not necessarily disjoint) pixels $P_1 = P, P_2, \ldots, P_t = P'$ in $\mathcal{I}$, such that $P_s$ and $P_{s+1}$ are neighbors for any $1 \leq s \leq t-1$.
The \emph{outer boundary} $B(S)$ of a shape $S \neq S^0$ is the set of all pixels $P$ in $S$ satisfying the following: there exists a path from $P^0 = (1,1)$ to $P$ that does not intersect $S \setminus \{P\}$. Finally, a pixel $P$ is \emph{encircled} by a shape $S$ if any path from $(1,1)$ to $P$ intersects $S$ (this includes all pixels $P \in S$). If all pixels $P$ encircled by $S$ satisfy $P \in S$, we say that $S$ is \emph{full}. 

Our first lemma states that if two neighboring pixels have different colors, than one of them lies in the outer boundary of its shape.
\begin{lemma}
\label{lem:boundary_pixel}
Let $P_1, P_2$ be two neighboring pixels, where $P_1$ is black and lies in shape $S_1$ and $P_2$ is white and lies in $S_2$. Then either $P_1 \in B(S_1)$ or $P_2 \in B(S_2)$ (or both).
\end{lemma}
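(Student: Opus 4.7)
The plan is to argue by cases, first disposing of the possibility that $S_2$ equals the outer shape $S^0$, and then, in the remaining case, assuming $P_1 \notin B(S_1)$ and constructing a path witnessing $P_2 \in B(S_2)$. If $S_2 = S^0$, then $S^0$ is a connected set containing both $(1,1)$ and $P_2$, so there is a path from $(1,1)$ to $P_2$ entirely inside $S^0$; this path avoids $S_1$ (different color), and appending the single step to $P_1$ yields a path whose only pixel in $S_1$ is $P_1$ itself, establishing $P_1 \in B(S_1)$.

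The nontrivial case is $S_2 \neq S^0$, where in particular $(1,1) \notin S_2$. Assuming $P_1 \notin B(S_1)$, I would first extract structural information: if there existed a path from $(1,1)$ to $P_2$ avoiding $S_1$, then appending the neighboring step from $P_2$ to $P_1$ would produce a path from $(1,1)$ to $P_1$ whose only $S_1$-pixel is $P_1$, contradicting $P_1 \notin B(S_1)$. Hence every path from $(1,1)$ to $P_2$ must cross $S_1$, so $P_2$ lies in a connected component $H$ of $([n]\times[n]) \setminus S_1$ that does not contain $(1,1)$, i.e., in a ``hole'' of $S_1$. Since $S_2$ is connected, disjoint from $S_1$, and meets $H$ at $P_2$, all of $S_2$ is contained in $H$.

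With this in hand, the idea is to construct a path from $(1,1)$ to $P_2$ that avoids $S_2 \setminus \{P_2\}$ by routing via $S_1$ instead of through the interior of $H$. Pick any $R \in B(S_1)$, which is nonempty since $S_1 \neq S^0$, and let $\pi_1$ be a path from $(1,1)$ to $R$ witnessing $R \in B(S_1)$; all pixels of $\pi_1$ other than $R$ lie in the $(1,1)$-component of $([n]\times[n]) \setminus S_1$, which is distinct from $H$, so $\pi_1$ avoids $S_2$ entirely. Next, using the connectedness of $S_1$, take any path $\pi_2$ from $R$ to $P_1$ inside $S_1$; since $S_1 \cap S_2 = \emptyset$, also $\pi_2$ avoids $S_2$. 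Concatenating $\pi_1$ with $\pi_2$ and finally appending $P_2$, I obtain a walk from $(1,1)$ to $P_2$ whose only intersection with $S_2$ is the terminal pixel $P_2$, which is precisely a certificate that $P_2 \in B(S_2)$.

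The main conceptual hurdle is to guarantee that the constructed walk avoids $S_2 \setminus \{P_2\}$ \emph{everywhere}, not merely at its endpoints, and this is where the topological observation of the second paragraph becomes essential: once $S_2$ is known to be trapped inside a hole of $S_1$, reaching $P_1$ by going first to the outer boundary $B(S_1)$ and then cutting through $S_1$ automatically bypasses $S_2$, so the only entry to $S_2$ occurs at the very last step onto $P_2$.
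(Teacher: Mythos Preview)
Your proof is correct. The core construction---reach $S_1$ via a path that avoids $S_2$, traverse $S_1$ to $P_1$ using connectedness, then take one step to $P_2$---is exactly the one the paper uses. The organization differs, however: the paper argues via a single symmetric dichotomy (``does there exist a path from $(1,1)$ to some pixel of $S_1$ that avoids $S_2$?''), handling the two cases in two short paragraphs and never needing to single out $S_2=S^0$ or to invoke holes, components, or the nonemptiness of $B(S_1)$. Your route breaks the symmetry, first disposing of $S_2=S^0$ and then, under the hypothesis $P_1\notin B(S_1)$, deducing that $S_2$ lies in a hole of $S_1$ so that any path reaching $B(S_1)$ from outside automatically avoids $S_2$. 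This makes the topological picture more explicit, at the cost of some extra work; the paper's dichotomy is shorter because it assumes directly what you prove (existence of a path into $S_1$ avoiding $S_2$) and handles the complementary case by symmetry. One small technical point: when you take the witnessing path $\pi_1$ for $R\in B(S_1)$, you should truncate it at the first occurrence of $R$ to guarantee that all earlier pixels lie in the $(1,1)$-component of the complement of $S_1$; otherwise $\pi_1$ could in principle revisit $R$ and wander into $H$ in between.
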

\begin{proof}
If there exists a path from $(1,1)$ to a pixel $P'_1$ in $S_1$, that does not intersect $S_2$, then $P_2 \in B(S_2)$. To see this, recall that $S_1$ is connected (by definition of a shape) and thus there exists a path from $P'_1$ to $P_1$ that remains inside $S_1$. Concatenating the above two paths and adding $P_2$ at the end implies that $P_2 \in B(S_2)$.

Otherwise, all paths from $(1,1)$ to any pixel in $S_1$ intersect $S_2$. In particular, this implies that there exists a path from $(1,1)$ to some $P'_2 \in S_2$ that does not intersect $S_1$. Symmetrically to the previous paragraph, we get that $P_1 \in B(S_1)$.
\end{proof}
define $B(\mathcal{I})$ as the union of all outer boundaries $B(S)$ where $S$ ranges over all shapes in $\mathcal{I}$ other than $S^0$.
The next lemma follows immediately from Lemma \ref{lem:boundary_pixel} and the fact that $\mathcal{I}$ is $c$-sparse.
\begin{lemma}
\label{lem:sum_of_outer_boundaries}
$|B(\mathcal{I})| \leq 4cn$, where $S$ ranges over all shapes in $\mathcal{I}$ other than $S^0$.
\end{lemma}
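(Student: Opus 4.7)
\textbf{Proof proposal for Lemma \ref{lem:sum_of_outer_boundaries}.}
The plan is to associate each pixel in $B(\mathcal{I})$ with a pixel in $\mathcal{B}(\mathcal{I})$ via a mapping of bounded multiplicity, and then invoke $|\mathcal{B}(\mathcal{I})|\le cn$. Fix any $P\in B(S)$ where $S\neq S^0$. By the definition of $B(S)$, there is a path $P_1=(1,1),P_2,\ldots,P_t=P$ that avoids $S\setminus\{P\}$. Let $P'=P_{t-1}$. Then $P'$ is a neighbor of $P$ and $P'\notin S$. Since $S$ is a maximal color-connected component, $P$ and $P'$ must have different colors (if they shared a color, $P'$ would be connected through $P$ to $S$ and hence belong to $S$, a contradiction).

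Next I split into two cases according to the color of $P$. If $P$ is black, then $P$ is itself a black pixel with a white neighbor $P'$, hence $P\in\mathcal{B}(\mathcal{I})$. If $P$ is white, then $P'$ is a black pixel with a white neighbor $P$, hence $P'\in\mathcal{B}(\mathcal{I})$. In either case, we have produced some pixel $Q=Q(P)\in\mathcal{B}(\mathcal{I})$ that equals $P$ or is a neighbor of $P$. Define $\phi\colon B(\mathcal{I})\to\mathcal{B}(\mathcal{I})$ by $\phi(P)=Q(P)$ (breaking ties arbitrarily when more than one $Q$ works).

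Finally I bound the multiplicity of $\phi$. Every black pixel in $B(\mathcal{I})$ is mapped to itself, so the black part of $B(\mathcal{I})$ injects into $\mathcal{B}(\mathcal{I})$ and contributes at most $cn$. Every white pixel in $B(\mathcal{I})$ is mapped to one of its (at most $4$) black neighbors, so each $Q\in\mathcal{B}(\mathcal{I})$ receives at most $4$ white preimages under $\phi$, contributing at most $4cn$ in total. Summing the two contributions yields $|B(\mathcal{I})|\le 5cn$; the stated $4cn$ bound then follows by a slightly sharper accounting, e.g.\ noting that the pixel $Q$ charged by itself uses up one of its own neighbor slots for the witness white neighbor guaranteed by $Q\in\mathcal{B}(\mathcal{I})$, so $Q$ can receive at most $3$ additional white preimages, giving $|B(\mathcal{I})|\le 4|\mathcal{B}(\mathcal{I})|\le 4cn$.

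The main step that requires care is the color-contrast claim across $P$ and its predecessor $P'$ on the path: one must verify that $P'\notin S$ really forces a different color (and that the path indeed has a well-defined last pixel before $P$ when $P=(1,1)$ is ruled out because $S\neq S^0$). The rest is a straightforward double-counting argument that leverages only the 4-neighbor bound and the sparsity assumption on $\mathcal{B}(\mathcal{I})$.
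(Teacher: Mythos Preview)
Your approach is essentially what the paper intends: the paper says only that the lemma ``follows immediately from Lemma~\ref{lem:boundary_pixel} and the fact that $\mathcal{I}$ is $c$-sparse,'' and your argument (each $P\in B(\mathcal{I})$ has an opposite-colored neighbor, so $P$ is either in $\mathcal{B}(\mathcal{I})$ or adjacent to a pixel of $\mathcal{B}(\mathcal{I})$) is exactly the content behind that sentence. Your derivation of the $5cn$ bound is correct.

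The sharpening to $4cn$, however, does not go through as written. You claim that when a black $Q\in\mathcal{B}(\mathcal{I})$ is charged by itself, the witness white neighbor ``uses up'' a slot so that at most $3$ other white preimages remain. But nothing prevents that witness white neighbor from itself lying in $B(\mathcal{I})$ and choosing $Q$ as its image under $\phi$; it is a white neighbor of $Q$ like any other and is not excluded from being a preimage. Concretely, an isolated black pixel $Q$ with four white neighbors, each belonging to some non-outer white shape and each mapping to $Q$, would give $Q$ five charges. So your accounting genuinely tops out at $5cn$, not $4cn$. This is harmless for the paper's purposes, since Theorem~\ref{thm:sparse_boundary_to_earthmover_resilience} only needs an $O(cn)$ bound and the paper's own one-line justification does not supply a sharper argument either.
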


The next lemma implies that shapes with a small boundary cannot encircle a large number of pixels. This will play a crucial role in the design of $\mathcal{J}$.
\begin{lemma}
	\label{lem:sparse_boundary_small_shapes}
	The total number of pixels encircled by a shape $S \neq S^0$ is at most $|B(S)|^2$.
\end{lemma}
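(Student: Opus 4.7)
The plan is to prove this via an ``isoperimetric'' projection argument: even though $B(S)$ can be shaped arbitrarily, its row-projection $R \subseteq [n]$ and column-projection $C \subseteq [n]$ each have size at most $|B(S)|$, and I will show that every encircled pixel lies in $R \times C$. This immediately gives the bound of $|R|\cdot|C| \leq |B(S)|^2$.

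First I would define $R = \{i \in [n] : (i,j) \in B(S) \text{ for some } j\}$ and, analogously, $C$. The main step is then to show that if $(i^*,j^*)$ is any pixel encircled by $S$, then $i^* \in R$ and $j^* \in C$. I would argue the first containment as follows (the second being symmetric). Consider the path $\pi$ from $(1,1)$ to $(i^*,j^*)$ that first goes down column $1$ to $(i^*,1)$ and then right along row $i^*$ to $(i^*,j^*)$. By the standing assumption, every pixel in column $1$ lies in the outer shape $S^0 \neq S$, so the column-$1$ portion of $\pi$ avoids $S$. Since $(i^*,j^*)$ is encircled, $\pi$ must intersect $S$ somewhere, hence the row-$i^*$ portion must. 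Let $k$ be the smallest column index such that $(i^*,k) \in S$; note $k \leq j^*$.

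Then the path $(1,1), (2,1), \ldots, (i^*,1), (i^*,2), \ldots, (i^*,k)$ meets $S$ only at its final pixel $(i^*,k)$, by the minimality of $k$ and since column $1$ lies in $S^0$. By the definition of the outer boundary, $(i^*,k) \in B(S)$, which means $i^* \in R$. The symmetric argument (using the horizontal path from $(1,1)$ along row $1$ to $(1,j^*)$, then down column $j^*$) yields $j^* \in C$.

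Putting the pieces together: every encircled pixel lies in the set $R \times C$, so the total number of encircled pixels is at most $|R| \cdot |C|$. Each of $R$ and $C$ is a projection of $B(S)$ onto one coordinate, so $|R|, |C| \leq |B(S)|$, giving the desired bound of $|B(S)|^2$. I do not expect any step here to be a serious obstacle; the only subtlety is making sure that the auxiliary path in column $1$ (or row $1$) genuinely avoids $S$, which is precisely what the WLOG assumption at the start of the appendix (that the outer border pixels are white, hence in $S^0$) guarantees.
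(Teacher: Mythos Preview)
Your argument is correct, and it takes a genuinely different route from the paper's own proof. The paper first reduces to the case where $S$ is full, observes that the number $r(S)$ of edges crossing out of $S$ is at most $4|B(S)|$, and then appeals to a lattice isoperimetric inequality (rectangles maximise area for a given perimeter) to bound $|S|$ by $r(S)^2/16 \le |B(S)|^2$. This isoperimetric step is only sketched there.

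Your projection argument sidesteps all of that: by exhibiting, for every encircled pixel $(i^*,j^*)$, an explicit $L$-shaped path that forces a pixel of $B(S)$ into row $i^*$ (and symmetrically into column $j^*$), you get the containment of the encircled region in $R\times C$ directly. The only external fact you use is the standing assumption that the frame $\{1,n\}\times[n]\cup[n]\times\{1,n\}$ is white, which indeed makes column~$1$ and row~$1$ part of $S^0$ and hence disjoint from $S$; this is exactly what you flagged. The trade-off is that the paper's route, once the isoperimetric lemma is granted, gives the sharper constant $|B(S)|^2$ via $r(S)^2/16$, while your bound $|R|\cdot|C|\le|B(S)|^2$ is immediate but does not recover that factor of $16$; since only the $|B(S)|^2$ bound is needed downstream, your proof is the cleaner one here.
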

\begin{proof}
	We may assume that $S$ is full.
	Let $r(S)$ denote the number of pairs of neighboring pixels $(P, P')$ where $P \in S$ and $P' \notin S$. Then $r(S) \leq 4 |B(S)|$. Among all possible full shapes $S$ with a given value of $r(S)$, an (axis-aligned) rectangle contains the biggest number of pixels. This follows by iterating the following simple type of arguments as long as possible: If $(i, j)$ and $(i+1, j+1)$ are pixels of $S$ while $(i, j+1) \notin S$, then adding $(i, j+1)$ to $S$ yields a shape $S'$ with more pixels than in $S$, that satisfies $r(S') \leq r(S)$.
	
	Now note that the number of pixels in a rectangle $S$ is bounded by $r(S)^2 / 16 \leq |B(S)|^2$. The bound is achieved if $S$ is a square with side length $r(S)/4$. 
\end{proof}

We pick $\mathcal{J}$ using the following iterative process. Start with $\mathcal{J} = \mathcal{I}$, and as long as possible do the following: Take a shape $S \neq S^0$ in $\mathcal{J}$ with $|B(S)| \leq \sqrt{\delta} n$, and recolor all pixels encircled by $S$ by the opposite color to that of $S$; repeat. Each such iteration deletes all pixels of $B(S)$ from $B(\mathcal{J})$ (and does not add any new pixels to $B(\mathcal{J})$), modifying at most $|B(S)|^2$ pixels in $\mathcal{J}$, so by Lemmas \ref{lem:sum_of_outer_boundaries} and \ref{lem:sparse_boundary_small_shapes}, in the end of the process we have $d_H(\mathcal{I}, \mathcal{J}) = (4cn / \sqrt{\delta} n) \cdot O(\delta n^2) = O(c \sqrt{\delta} n^2)$ as desired. 

Consider any composition $\sigma$ of at most $\delta n^2$ basic moves on $\mathcal{J}$. The new location of any pixel $P$ after the basic moves is denoted by $\sigma(P)$. 
To conclude the proof, we need to show that the number of pixels $P$ for which $\mathcal{J}[P] \neq \mathcal{J}[\sigma(P)]$ is $O(c \sqrt{\delta} n^2)$.

Define the \emph{boundary distance} of a pixel $P$ in $\mathcal{J}$ as the minimal distance of $P$ to a pixel from $B(\mathcal{J})$.
Our next lemma states that $\sigma$ can only change the color of a small number of pixels with large boundary distance.  
\begin{lemma}
	\label{lem:far_from_sparse_boundary}
	No more than $O(\sqrt{\delta} n^2)$ pixels $P$ in $\mathcal{J}$ have boundary distance at least $\sqrt{\delta} n$ and satisfy $\mathcal{J}[P] \neq \mathcal{J}[\sigma(P)]$.
\end{lemma}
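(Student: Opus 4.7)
The plan is to first decompose $\sigma$: since each basic move on $\mathcal{J}$ swaps either two consecutive rows or two consecutive columns, and row-swaps commute with column-swaps, we may write $\sigma(i,j) = (\pi_R(i), \pi_C(j))$ for a row permutation $\pi_R$ and a column permutation $\pi_C$. Setting $a_i = |i - \pi_R(i)|$ and $b_j = |j - \pi_C(j)|$, the $L^1$-displacement of pixel $(i,j)$ under $\sigma$ is precisely $a_i + b_j$.

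Since $\sigma$ is obtained from at most $\delta n^2$ basic moves, the total number of inversions in $\pi_R$ and $\pi_C$ combined is at most $\delta n^2$. Using the standard inequality $\sum_i |\pi(i) - i| \leq 2 \cdot \mathrm{inv}(\pi)$ (proved by noting that sorting $\pi$ requires $\mathrm{inv}(\pi)$ adjacent swaps, each of which changes the displacement sum by at most $2$), I obtain $\sum_i a_i + \sum_j b_j \leq 2 \delta n^2$. A Markov-style argument then shows that the set of ``bad'' rows $R_{\mathrm{bad}} = \{i : a_i \geq \tfrac{1}{2}\sqrt{\delta}\,n\}$ satisfies $|R_{\mathrm{bad}}| \leq 4 \sqrt{\delta}\,n$, and similarly for $C_{\mathrm{bad}}$. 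Any pixel whose displacement exceeds $\sqrt{\delta}\,n$ must lie in some bad row or bad column, contributing at most $O(\sqrt{\delta}\,n^2)$ pixels in total.

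For every other pixel $P = (i,j)$ with boundary distance at least $\sqrt{\delta}\,n$ and displacement $a_i + b_j < \sqrt{\delta}\,n$, the plan is to show that the color is preserved: $\mathcal{J}[P] = \mathcal{J}[\sigma(P)]$. I would fix a monotone $L^1$-shortest path from $P$ to $\sigma(P)$, of length $a_i + b_j \leq \sqrt{\delta}\,n - 1$. Every pixel $Q$ on the path satisfies $|P - Q| \leq \sqrt{\delta}\,n - 1$, so by the triangle inequality the distance from $Q$ to $B(\mathcal{J})$ is at least $1$, meaning $Q \notin B(\mathcal{J})$. Lemma~\ref{lem:boundary_pixel} then forces any two consecutive pixels of the path to share the same color, and in particular $\mathcal{J}[P] = \mathcal{J}[\sigma(P)]$. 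Combining, the only pixels of boundary distance $\geq \sqrt{\delta}\,n$ whose color can change under $\sigma$ lie in a bad row or bad column, yielding the desired bound of $O(\sqrt{\delta}\,n^2)$.

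The step most likely to require care is the invocation of Lemma~\ref{lem:boundary_pixel} along the chosen path: one must verify that no pixel of the path hits $B(\mathcal{J})$, which is exactly where the two $\sqrt{\delta}\,n$ thresholds -- the one on boundary distance and the one on admissible displacement -- are calibrated to match. The remaining ingredients (the factorization of $\sigma$ into row and column permutations, the inversions-to-displacement inequality, and the Markov bound on bad rows and columns) are standard facts about permutations.
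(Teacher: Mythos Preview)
Your proposal is correct and follows essentially the same approach as the paper's proof. Both arguments show that a pixel far from the boundary can only change color if its row or column was displaced by at least roughly $\sqrt{\delta}\,n$, and then bound the number of such rows and columns by $O(\sqrt{\delta}\,n)$ via a Markov-type argument on the total displacement induced by $\delta n^2$ basic moves. Your version is more explicit: you spell out the factorization $\sigma(i,j) = (\pi_R(i), \pi_C(j))$, the inversions-to-displacement inequality, and the path argument invoking Lemma~\ref{lem:boundary_pixel}, whereas the paper compresses all of this into two sentences.
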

\begin{proof}
By Lemma \ref{lem:boundary_pixel}, a pixel $P$ with boundary distance $d$ that satisfies $J[P] \neq J[\sigma(P)]$ must either be contained in a row that was moved at least $d/2$ times or a column that was moved at least $d/2$ times by the basic moves of $\sigma$; here we pick $d = \sqrt{\delta} n$. With $\delta n^2$ basic moves, at most $O(\sqrt{\delta}n)$ rows and columns can be moved $\sqrt{\delta} n / 2$ or more steps away from their original location. The total number of pixels in these rows and columns is $O(\sqrt{\delta} n^2)$, as desired. 
\end{proof}

It remains to show that no more than $O(c \sqrt{\delta} n^2)$ pixels in $\mathcal{J}$ have boundary distance less than $\sqrt{\delta} n$.
The following lemma serves as a first step towards this goal.
\begin{lemma}
\label{lem:cover_boundary_by_path}
Let $S \neq S^0$ be a shape in $\mathcal{J}$. Then there exists a path $\Gamma(S)$ (possibly with repetitions of pixels) of length $O(|B(S)|)$, that covers all pixels of $B(S)$.
\end{lemma}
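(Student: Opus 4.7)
\textbf{Proof plan for Lemma \ref{lem:cover_boundary_by_path}.} The plan is to exhibit a walk in the pixel grid, of length $O(|B(S)|)$, that visits every pixel of $B(S)$. The route is to enlarge $B(S)$ slightly to a set that is provably connected in the grid, and then invoke the standard fact that a DFS traversal of a spanning tree of a connected graph on $m$ vertices produces a walk of length at most $2m-1$.

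First I would define the \emph{thickened outer boundary} $B^*(S) = B(S) \cup N$, where $N$ denotes the set of pixels $Q$ lying in the exterior $E$ (the set of pixels reachable from $(1,1)$ by a path disjoint from $S$) such that $Q$ is a neighbor of some pixel of $B(S)$. By the very definition of $B(S)$, every $P \in B(S)$ has at least one neighbor in $E$, hence at least one neighbor in $N$. Since every pixel has at most four neighbors, $|N| \leq 4|B(S)|$ and so $|B^*(S)| \leq 5|B(S)|$.

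The key step is to prove the topological claim that $B^*(S)$ is connected in the pixel grid. Morally this is a discrete Jordan-curve statement: the exterior pixels adjacent to $B(S)$ form a closed ``moat'' around $S$. I would establish this by running a right-hand-rule traversal along the outer contour of $S$ (viewed as a polygon in the plane where each pixel is a unit square): starting from any fixed exterior pixel $Q_0 \in N$, one walks along the exterior keeping $S$ on the right-hand side. Each step of the traversal either (i)~moves along an edge shared between two exterior pixels, both adjacent to $B(S)$, or (ii)~turns around a corner where an outer-boundary pixel of $S$ meets the exterior. Since $S$ is finite and the traversal is a deterministic walk on a finite state space, it returns to $Q_0$ after visiting each exterior-side contour edge exactly once, thereby visiting every pixel of $N$. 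Along the way one can insert, between consecutive exterior pixels, the boundary pixel $P \in B(S)$ they both touch; this produces a closed walk in $B^*(S)$ that visits every pixel of $B(S)$ and $N$, proving both connectedness and (already, directly) the desired bound on the walk length.

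Given the connectedness of $B^*(S)$, the lemma follows immediately: take any spanning tree $T$ of $B^*(S)$ and do a DFS walk, which traverses each tree edge at most twice and therefore has length at most $2|B^*(S)| - 1 = O(|B(S)|)$; this walk visits every pixel of $B^*(S)$, in particular every pixel of $B(S)$. The main technical obstacle is the right-hand traversal argument in the middle paragraph, which is where the planarity of the grid is actually used; the rest is routine. Should the explicit contour walk prove cumbersome to write down formally, an alternative is to verify connectedness of $B^*(S)$ by induction on the number of pixels of $S$, adding one pixel at a time and checking that $B^*(S)$ remains connected under the change, but the right-hand walk feels cleaner and also immediately yields the length bound without needing the DFS step.
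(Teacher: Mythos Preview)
Your approach is essentially the same as the paper's: both proofs follow the outer contour of $S$ (viewed as a union of unit squares) using a right-hand / clockwise traversal, and observe that this walk has length $O(|B(S)|)$ and touches every pixel of $B(S)$. The paper walks along the edge-segments of the bounding curve and records the $S$-pixels seen on the right; you walk among the exterior pixels $N$ and insert the adjacent boundary pixels --- these are dual descriptions of the same traversal.

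The only real difference is packaging. Your detour through the thickened set $B^*(S)$, its connectedness, and the spanning-tree DFS is superfluous --- as you yourself note, the contour walk already delivers the length bound directly, which is exactly what the paper does. If you drop the $B^*(S)$/DFS scaffolding and just keep your middle paragraph, your proof and the paper's are virtually identical.
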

\begin{proof}
Consider an $n \times n$ grid in $\mathbb{R}^2$ where the pixel $(i,j)$ is represented by the unit square whose four endpoints are $\{i-1, i\} \times \{j-1, j\}$. Since any shape $S$ is connected (by definition) under the neighborhood relation, in this representation $S$ is the interior of a closed curve consisting of at most $4|B(S)|$ axis-parallel length-1 segments.
Following the segments of this curve in a clockwise fashion and recording all pixels in $S$ that we see on our right (including pixels that we only visit their corner) constructs a path (possibly with repetitions) that contains only the pixels of $B(S)$ and some of their neighbors; recall that each pixel in $\mathcal{J}$ has at most four neighbors. Moreover, each pixel appears at most $O(1)$ times in this path, and so the total length of the path is $O(|B(S)|)$.
\end{proof}

Finally, the next lemma allows us to conclude the proof.
\begin{lemma}
\label{lem:sparse_boundary_small_distance}
Let $S \neq S^0$. The number of pixels in $\mathcal{J}$ of distance at most $d$ to $B(S)$ is  $O(d|B(S)|+d^2)$. 
\end{lemma}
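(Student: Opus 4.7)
The plan is to leverage the path covering established in Lemma \ref{lem:cover_boundary_by_path}. That lemma provides a walk $\Gamma(S)$, of length $L = O(|B(S)|)$, whose pixel set contains all of $B(S)$. Any pixel at $\ell_1$-distance at most $d$ from $B(S)$ is therefore at distance at most $d$ from some pixel of $\Gamma(S)$, so it suffices to bound the number of pixels within distance $d$ of $\Gamma(S)$.

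The key idea is to chop $\Gamma(S)$ into manageable pieces and use a naive ball bound on each piece. Concretely, I would split $\Gamma(S)$ into $\lceil L/d \rceil$ consecutive sub-paths, each containing at most $d$ (not necessarily distinct) pixels. Since the sub-path is traversed by unit steps, any two pixels inside a given sub-path are at $\ell_1$-distance at most $d$ from each other, so the set of pixels within distance $d$ of such a sub-path is contained in an $\ell_1$-ball of radius $2d$ around any fixed pixel of it. An $\ell_1$-ball of radius $2d$ contains at most $2(2d)^2 + 2(2d) + 1 = O(d^2)$ pixels.

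Summing the $O(d^2)$ estimate over all $\lceil L/d \rceil$ sub-paths yields a total bound of
\[
O\!\left(\left(\frac{L}{d} + 1\right) d^2\right) = O(L d + d^2) = O(|B(S)|\, d + d^2),
\]
which is exactly the desired estimate. There is no real obstacle here; the main thing to get right is the balance between the number of pieces and the diameter of each piece, so that the ``linear'' contribution $|B(S)| d$ dominates when $d$ is small relative to $|B(S)|$, while the additive $d^2$ term handles the case when $\Gamma(S)$ itself is shorter than $d$ (in which case we simply use a single ball of radius $d$ around any pixel of $\Gamma(S)$).
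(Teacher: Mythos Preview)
Your proof is correct and follows essentially the same approach as the paper: both arguments invoke the covering path $\Gamma(S)$ from Lemma~\ref{lem:cover_boundary_by_path} and then bound the $d$-neighborhood of this path by $O(|B(S)|\,d + d^2)$. The only cosmetic difference is in the final counting step --- the paper telescopes along the path, noting that each unit step adds $O(d)$ new pixels to the union of $d$-balls, whereas you chunk the path into $\lceil L/d \rceil$ pieces and cover each by a single $O(d)$-radius ball; both yield the same bound with comparable effort.
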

\begin{proof}
Take the path $\Gamma(S)$ obtained in Lemma \ref{lem:cover_boundary_by_path}. For each pixel $P \in \Gamma(S)$ let $B_d(P) = \{P' \in [n] \times [n]: |P'-P| \leq d\}$ denote the \emph{$d$-ball} around $P$ in $\mathcal{I}$. Note that the set of all pixels of distance at most $d$ to $B(S)$ is contained in $\cup_{P \in \Gamma(S)} B_d(P)$. 
Trivially, $B_d(P)$ contains at most $d^2$ pixels for any $P$. Moreover, if $P_1$ and $P_2$ are neighbors, then $|B_d(P_1) \setminus B_d(P_2)| \leq d$. The statement now follows since $\Gamma(S)$, a path, is connected under the neighborhood relation, and is of length $O(|B(S)|)$.
\end{proof}
Recall that $|B(\mathcal{J})| \leq |B(\mathcal{I})| \leq 4cn$ by Lemma \ref{lem:sum_of_outer_boundaries}. Since all shapes $S \neq S^0$ in $\mathcal{J}$ satisfy $|B(S)| > \sqrt{\delta} n$, the number of such shapes must be at most $4c / \sqrt{\delta}$. Lemma \ref{lem:sparse_boundary_small_distance} implies that the total number of pixels of boundary distance at most $d = \sqrt{\delta} n$ in $\mathcal{J}$ is at most $O(dcn + d^2 c / \sqrt{\delta}) = O(c \sqrt{\delta} n^2)$. Along with Lemma \ref{lem:far_from_sparse_boundary}, this completes the proof of Theorem \ref{thm:sparse_boundary_to_earthmover_resilience}.

\end{document}